\chardef\bslash=`\\
\newtheorem[{}\it]{thm}{Theorem}[section]
\newtheorem{cor}[thm]{Corollary}
\newtheorem{lem}[thm]{Lemma}
\newtheorem{op}[thm]{Open Problem}
\newtheorem{prop}[thm]{Proposition}
\theoremstyle{remark} 
\newtheorem*{dang}{\textsc{Theorem}}
\theoremstyle{definition}
\newtheorem{defn}{\textsc{Definition}}[section]
\newtheorem{rem}{Remark}[section]
\newtheorem*[{}\it]{notation}{Notation}
\newtheorem*[{}\it]{rest}{\textsc{Theorem}}
\newtheorem*[{}\it]{quest}{\textsc{Question}}
\newtheorem*[{}\it]{problemo}{\textsc{Problem}}
\newtheorem*[{}\it]{projone}{\texttt{Project 1 (Future Work)}}
\newtheorem*[{}\it]{projtwo}{\texttt{Project 2 (Current Work)}}
\newtheorem*[{}\it]{projthree}{\texttt{Project 3 (Current Work)}}
\newtheorem*[{}\it]{projfour}{\texttt{Project 4 (Future Work)}}
\newtheorem*[{}\it]{projfive}{\texttt{Project 5 (Future Work)}}
\newtheorem*[{}\it]{proofoflemma}{Proof of Lemma}
\title[]{On the Convergence of Soft Potential Dynamics to Hard Sphere Dynamics}
\author[]{Mark Wilkinson}\thanks{ Courant Institute of Mathematical Sciences, New York University, New York City. (\Letter) \href{mailto:mwilkins@cims.nyu.edu}{mwilkins@cims.nyu.edu}}
\definecolor{ashgrey}{rgb}{0.7, 0.75, 0.71}
\tikzstyle{startstop} = [rectangle, rounded corners, minimum width=3cm, minimum height=0.7cm,text centered, draw=black, fill=w]
\tikzstyle{blank} = [rectangle, rounded corners, minimum width=3cm, minimum height=1cm,text centered, draw=white, fill=white]
\tikzstyle{io} = [trapezium, trapezium left angle=70, trapezium right angle=110, minimum width=3cm, minimum height=0.7cm, text centered, draw=black, fill=white]
\tikzstyle{iog} = [trapezium, trapezium left angle=70, trapezium right angle=110, minimum width=3cm, minimum height=0.7cm, text centered, draw=black, fill=ashgrey]
\tikzstyle{process} = [rectangle, minimum width=3cm, minimum height=0.7cm, text centered, draw=black, fill=white]
\tikzstyle{processmall} = [rectangle, minimum width=1.2cm, minimum height=0.8cm, text centered, draw=black, fill=white]
\tikzstyle{decision} = [diamond, minimum width=3cm, minimum height=1cm, text centered, draw=black, fill=white]
\tikzstyle{arrow} = [thick,->,>=stealth]
\newcommand{\bv}{\mathrm{BV}_{\mathrm{loc}}(\mathbb{R})}
\newcommand{\Ds}{\mathcal{D}_{2}^{\mathrm{S}}}
\newcommand{\ov}{\overline}
\newcommand{\boundellipse}[3]
{(#1) ellipse (#2 and #3)
}
\begin{document}

\maketitle

\begin{abstract}
\noindent We address a question raised in the work of \textsc{Gallagher, Saint-Raymond and Texier} \cite{MR3157048} that concerns the convergence of soft-potential dynamics to hard sphere dynamics. In the case of two particles, we establish that hard sphere dynamics is the limit of soft sphere dynamics in the weak-star topology of  $\mathrm{BV}$. We view our result as establishing a {\em topological} method by which to construct weak solutions to the ODE of hard sphere motion. 
\end{abstract}
%

\vspace{3mm}

\section{Introduction}
In this article, we consider topological methods by which one can establish the existence of weak solutions to the equations of `physical' hard sphere motion. As such, our starting point shall be a system of $N$ identical spherical particles in $\mathbb{R}^{3}$ (without loss of generality, each of unit diameter and of mass 1) whose motion is governed by the Hamiltonian $H^{\varepsilon}_{N}$ given by
\begin{equation}\label{hammy}
H^{\varepsilon}_{N}(X, V):=\frac{1}{2}\sum_{k=1}^{N}|v_{k}|^{2}+\sum_{i\neq j}\Phi^{\varepsilon}(x_{i}-x_{j}),
\end{equation}
where $X=[x_{1}, ..., x_{N}]$, $V=[v_{1}, ..., v_{N}]$, $x_{i}, v_{i}\in\mathbb{R}^{3}$, $0<\varepsilon<1$ and the potentials $\Phi^{\varepsilon}:\mathbb{R}^{3}\rightarrow [0, \infty]$ have the property that (i) they are compactly supported on $\mathbb{R}$, (ii) are spherically symmetric and smooth on $\mathbb{R}^{3}\setminus\{0\}$, (iii) are radially decreasing on $\mathbb{R}^{3}\setminus\{0\}$ and (iv) blow up at the origin. This Hamiltonian is consistent with Newton's Laws of Motion, in that it is has both translation and rotation symmetry in phase space, and is also time-independent, which formally imply the conservation of linear momentum, angular momentum and kinetic energy for its associated dynamics, respectively (see \textsc{Arnol'd} \cite{MR997295}). We consider the asymptotic behaviour of the system of Hamiltonian ODEs associated to \eqref{hammy} as the potential $\Phi^{\varepsilon}$ is made to {\em harden}, namely $\Phi^{\varepsilon}\rightarrow \Phi$ in an appropriate topology as $\varepsilon\rightarrow 0$, where
\begin{equation}\label{hardpotential}
\Phi(y):=\left\{
\begin{array}{ll}
0 & \quad \text{if} \hspace{2mm} |y|\leq 1, \vspace{2mm}\\
\infty & \quad \text{otherwise}.
\end{array}
\right.
\end{equation}
In the monograph of \textsc{Gallagher, Saint-Raymond and Texier} (\cite{MR3157048}, p.2) on the validity of the Boltzmann-Grad limit for systems of soft or hard spheres, the authors remark that ``the dynamics of hard spheres is in some sense the limit of the smooth-forces case''. Indeed, in this article we prove that for $N=2$, hard sphere dynamics is the limit of soft-potential dynamics as $\varepsilon\rightarrow 0$ in the weak-$\ast$ topology on $\mathrm{BV}(I, \mathbb{R}^{6})$ for any open interval $I\subset\mathbb{R}$. An informal statement of our main result is as follows:
\begin{dang}\label{mainthminf} {\em
Let $\{\Phi^{\varepsilon}\}_{0<\varepsilon<1}$ be a suitable family of soft potentials that converges to $\Phi$ as $\varepsilon\rightarrow 0$. Suppose initial conditions $Z_{0}:=[x_{0}, \ov{x}_{0}, v_{0}, \ov{v}_{0}]\in\mathbb{R}^{12}$ for two spheres each of unit diameter are taken such that $|x-\ov{x}|\geq 1$. If $v^{\varepsilon}=v^{\varepsilon}(\cdot; Z_{0})$ and $\ov{v}^{\varepsilon}(\cdot; Z_{0})$ denote solutions to the equations of motion associated with $H^{\varepsilon}_{2}$, then one has
\begin{equation*}
\|v^{\varepsilon}\|_{\mathrm{BV}(I, \mathbb{R}^{3})}+\|\ov{v}^{\varepsilon}\|_{\mathrm{BV}(I, \mathbb{R}^{3})}\leq C(Z_{0}, I)
\end{equation*}  
for any open interval $I=(a, b)\subset \mathbb{R}$ and some constant $C=C(Z_{0}, I)>0$ independent of the hardening parameter $\varepsilon$. Moreover, $[v^{\varepsilon}, \ov{v}^{\varepsilon}]$ converges in $L^{1}_{\mathrm{loc}}(\mathbb{R}, \mathbb{R}^{6})$ to the unique classical solution $[v, \ov{v}]$ of the equations of hard sphere motion associated to the singular Hamiltonian $H_{2}:\mathbb{R}^{12}\rightarrow [0, \infty]$, where 
\begin{equation*}
H_{2}(x, \ov{x}, v, \ov{v}):=\frac{1}{2}(|v|^{2}+|\ov{v}|^{2})+\Phi(x-\ov{x}).
\end{equation*}
}
\end{dang}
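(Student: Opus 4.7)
The plan is to exploit the two-body symmetry, reduce to a central-force problem on $\mathbb{R}^3$, derive uniform BV bounds from the classical conservation laws, and identify the limit via uniqueness of the hard-sphere flow. I begin by passing to the center-of-mass $C^\varepsilon := \frac{1}{2}(x^\varepsilon+\ov x^\varepsilon)$ and relative coordinate $y^\varepsilon := x^\varepsilon-\ov x^\varepsilon$. With equal unit masses, $C^\varepsilon$ travels uniformly at $\frac{1}{2}(v_0+\ov v_0)$ independently of $\varepsilon$, while $y^\varepsilon$ obeys $\ddot y^\varepsilon = -c\,\nabla\Phi^\varepsilon(y^\varepsilon)$ for a fixed $c>0$ coming from the reduced mass. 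Spherical symmetry of $\Phi^\varepsilon$ gives conservation of the angular momentum $L^\varepsilon := y^\varepsilon\times\dot y^\varepsilon$, so the orbit of $y^\varepsilon$ is planar; energy conservation gives $|\dot y^\varepsilon(t)|^2 \leq 2E_0^\varepsilon$, with $E_0^\varepsilon\to E_0 := \frac{1}{2}(|v_0|^2+|\ov v_0|^2)$ as $\varepsilon\to 0$ because $\Phi^\varepsilon(y_0)\to 0$ under the assumption $|y_0|\geq 1$.

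The main analytic step is the uniform BV bound on $\dot y^\varepsilon$, from which the bound on $[v^\varepsilon,\ov v^\varepsilon]$ follows at once since each velocity is an affine combination of the constant $\dot C^\varepsilon$ and $\dot y^\varepsilon$. A key simplification comes from the observation that in polar coordinates $(r,\theta)$ on the orbital plane the radial equation reads $\ddot r^\varepsilon = |L^\varepsilon|^2/(r^\varepsilon)^3 - c(\Phi^\varepsilon)'(r^\varepsilon)$, and both terms on the right are non-negative because $\Phi^\varepsilon$ is radially decreasing. Hence $\dot r^\varepsilon$ is monotone non-decreasing in $t$, and consequently the relative trajectory admits \emph{at most one} close encounter with the origin: once $y^\varepsilon$ exits the support of $\Phi^\varepsilon$ with $\dot r^\varepsilon>0$, it recedes to infinity and never returns. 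Integrating the pointwise identity $-c(\Phi^\varepsilon)'(r) = \ddot r - |L^\varepsilon|^2/r^3$ in time across this single scattering event yields
\[
\mathrm{TV}_I(\dot y^\varepsilon)\;=\;c\int_I |(\Phi^\varepsilon)'(r^\varepsilon(t))|\,dt\;\leq\;[\dot r^\varepsilon]_{\mathrm{entry}}^{\mathrm{exit}}\;\leq\;2\sqrt{2E_0^\varepsilon},
\]
which is uniformly bounded in $\varepsilon$ and gives the claimed estimate.

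With the uniform BV bound in hand, Helly's selection theorem produces a subsequence $[v^{\varepsilon_k},\ov v^{\varepsilon_k}]$ converging in $L^1_{\mathrm{loc}}(\mathbb{R},\mathbb{R}^6)$ and pointwise almost everywhere to some $[v,\ov v]\in\mathrm{BV}_{\mathrm{loc}}$, with locally uniform convergence of the positions. To identify the limit as the classical hard-sphere solution I would verify: (i) $|x(t)-\ov x(t)|\geq 1$ for all $t$, using energy conservation and the blow-up of $\Phi^\varepsilon$ at the origin to exclude penetration in the limit; (ii) off the contact set $\{|x-\ov x|=1\}$ the limit velocities are locally constant, since $\Phi^\varepsilon$ concentrates near $\{|y|\leq 1\}$ as $\varepsilon\to 0$; and (iii) across each contact time the jump of $[v,\ov v]$ satisfies conservation of total linear momentum and of total kinetic energy (both inherited from the $\varepsilon$-level conservation laws by passage to the limit) and has its relative-velocity component parallel to the contact normal $y/|y|$, inherited from the radial nature of each soft impulse. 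These three conditions characterize specular reflection, so the limit coincides with the unique classical hard-sphere solution for the prescribed initial datum, and uniqueness upgrades subsequential convergence to convergence of the whole family.

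I expect the main obstacle to lie in (iii): transferring the instantaneous radial direction of the soft impulse to the limiting collision normal. Because the scattering window shrinks to a single instant as $\varepsilon\to 0$, one must show that the radial direction $y^\varepsilon(t)/|y^\varepsilon(t)|$ during the encounter converges to the contact normal at the collision time. Controlling the angular drift $\theta^\varepsilon(t_{\mathrm{exit}})-\theta^\varepsilon(t_{\mathrm{entry}})$ across the scattering window, using the uniform bound on the angular momentum together with the energy identity, should yield the required pointwise convergence of the impulse direction and close the argument.
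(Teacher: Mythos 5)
Your reduction and your BV bound are correct, and the bound is obtained by a genuinely different (and cleaner) mechanism than the paper's. The paper proves the uniform variation bound by combining the quantitative estimate on the time of closest approach in Proposition \ref{estimation} ($\tau_{\ast}^{\varepsilon}\leq C\varepsilon^{1/\beta}$, which uses hypotheses (P2)--(P3) through the integral formula \eqref{timetau}) with the pointwise bound $|\nabla\Phi^{\varepsilon}|\lesssim \varepsilon^{-1/\beta}$ during the encounter, so that the product of duration and force strength is $O(1)$. You instead observe that in the reduced problem $\ddot{r}^{\varepsilon}=|L^{\varepsilon}|^{2}/(r^{\varepsilon})^{3}-c(\Phi^{\varepsilon})'(r^{\varepsilon})\geq 0$, so $\dot{r}^{\varepsilon}$ is monotone, there is at most one encounter, and integrating the radial equation gives $\int_{I}|(\Phi^{\varepsilon})'(r^{\varepsilon})|\,dt\leq c^{-1}\bigl(\dot{r}^{\varepsilon}(\sup I)-\dot{r}^{\varepsilon}(\inf I)\bigr)\leq 2c^{-1}\sqrt{2E_{0}}$, hence $\mathrm{TV}_{I}(\dot{y}^{\varepsilon})$ is bounded with no rate and with no use of (P2)--(P3) at all (radial monotonicity and energy conservation suffice). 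This is a real simplification of the first half of the theorem, and it also yields the single-encounter structure that the paper gets from its symmetry lemma.

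The gap is in the identification step, and it is exactly the point you flag yourself: you assert that ``the scattering window shrinks to a single instant as $\varepsilon\to 0$,'' but nothing in your argument proves it, and it does not follow from energy conservation plus the angular momentum bound alone. Your monotonicity trick controls the \emph{total} impulse, not the \emph{time} over which it is delivered: a priori the soft spheres could linger near the turning point for a time that does not vanish with $\varepsilon$, in which case the limit velocity would not be piecewise constant with a single jump, your a.e.\ kinetic-energy identity would fail on a set of positive measure, and the angular-drift bound $|\Delta\theta^{\varepsilon}|\leq |L^{\varepsilon}|\,\Delta\tau^{\varepsilon}/(\rho_{\ast}^{\varepsilon})^{2}$ that you need for the impulse direction to converge to the contact normal would have no content. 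Establishing $\Delta\tau^{\varepsilon}\to 0$ (and the vanishing of the deflection angle) is precisely where the paper invests its structural hypotheses: Proposition \ref{estimation} bounds \eqref{timetau} using the algebraic decay (P2)--(P3) of $\Phi_{0}$ near $r=1$, and the same estimates drive the convergence of the apse line and hence of the scattering operators. A qualitative substitute is available under (P1) alone — near the turning point one has $\ddot{r}^{\varepsilon}\geq A_{0}$ when $A_{0}\neq 0$, and in the head-on case convexity of $\Phi_{0}$ gives a lower bound on the restoring acceleration on the inner part of the interaction region, either of which forces the dwell time to vanish — but some such argument must be supplied; ``the uniform bound on the angular momentum together with the energy identity'' is not enough, since near the turning point the energy identity only tells you the radial speed vanishes. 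Once $\Delta\tau^{\varepsilon}\to 0$ and $\Delta\theta^{\varepsilon}\to 0$ are in hand, your items (i)--(iii), the specular-reflection characterisation (discarding the trivial root via non-penetration), and the uniqueness-based upgrade to convergence of the whole family are all sound and parallel the paper's use of the limiting Boltzmann scattering matrix in Corollary \ref{compactcon}.
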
 
The refined statement of this result (with precise hypotheses on the potentials $\{\Phi^{\varepsilon}\}_{0<\varepsilon<1}$) appears in \ref{mainres} below. This theorem establishes, in a precise sense, that the qualitative properties of soft sphere systems are close to those of hard sphere systems when $0<\varepsilon\ll 1$, i.e. when $\Phi^{\varepsilon}$ is `sufficiently close' to $\Phi$. However, one can view the softening of the potential $\Phi$ via $\Phi^{\varepsilon}$ as a topological method by which to construct {\em weak solutions} to the ODEs of physical hard sphere dynamics associated with $\Phi$. 

Due to the more complicated estimates arising from simultaneous $M$-particle collisions ($2<M\leq N$), we do not consider the case of systems of $N\geq 3$ spheres in this article. In the final section of the paper, we discuss the problem of construction of physical dynamics for two hard {\em non-spherical} particles.
\subsection{Some Results in the Literature}
At the heart of this paper, we are interested in the existence and regularity of solutions to the equations of motion for $N$ hard spheres in $\mathbb{R}^{3}$. Mathematically, this amounts to the construction of a dynamics on (a suitable subset of) the high-dimensional phase space 
\begin{equation*}
\mathcal{D}_{N}:=\left\{Z_{N}=[(x_{1}, v_{1}), ..., (x_{N}, v_{N})]\in\mathbb{R}^{6N}\,:\, |x_{i}-x_{j}|\geq 1\hspace{2mm}\text{for}\hspace{2mm}i\neq j\right\}
\end{equation*}
which is also subject to constraints on velocity (namely the linear momentum, angular momentum and kinetic energy of the system must be constant in time). It is well known that one can define a global-in-time $N$-particle trajectory on $\mathcal{D}_{N}$ for `most' initial data $Z_{0}\in\mathcal{D}_{N}$. More precisely, one has the following statement:
\begin{prop}[\cite{MR3157048}, proposition 4.1.1]
Let $\mu_{N}$ denote the restriction of the $6N$-dimensional Lebesgue measure to the phase space $\mathcal{D}_{N}$. The set of `bad' initial data $\mathcal{B}_{N}\subset\mathcal{D}_{N}$ which give rise to either (i) grazing collisions, (ii) simultaneous collisions involving $M\geq 3$ spheres, or (iii) infinitely-many collisions in a finite time interval is of $\mu_{N}$-measure zero.
\end{prop}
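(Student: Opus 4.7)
My plan is to decompose the bad set as $\mathcal{B}_{N} = \mathcal{B}_{N}^{\mathrm{graze}} \cup \mathcal{B}_{N}^{\mathrm{mult}} \cup \mathcal{B}_{N}^{\infty}$, corresponding to the three pathologies (i), (ii), (iii) in the statement, and to show each piece is $\mu_{N}$-null separately. Throughout I would exploit two structural facts: the free transport flow $\phi_{t}:Z \mapsto (x_{1}+tv_{1},\dots,x_{N}+tv_{N},v_{1},\dots,v_{N})$ preserves $6N$-dimensional Lebesgue measure on the interior $\mathcal{D}_{N}^{\circ}$, and the elastic reflection at any non-grazing binary contact on $\partial\mathcal{D}_{N}$ is a volume-preserving involution. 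Both facts allow me to push null sets forward or pull them back along the dynamics.

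For the grazing piece I would fix indices $i\neq j$ and a compact interval $[a,b]$ with rational endpoints, then introduce the grazing stratum
\begin{equation*}
G_{ij} := \bigl\{ Z \in \mathbb{R}^{6N}\ :\ |x_{i}-x_{j}|=1 \ \text{and}\ (v_{i}-v_{j})\cdot(x_{i}-x_{j})=0 \bigr\}.
\end{equation*}
This stratum is cut out by two independent smooth equations (with linearly independent gradients away from the non-generic locus $v_{i}=v_{j}$), hence it is a codimension-two submanifold of $\mathbb{R}^{6N}$ and in particular $\mu_{N}$-null. The initial data producing a grazing $(i,j)$-collision at some time in $[a,b]$ lie in $\bigcup_{t\in[a,b]}\phi_{-t}(G_{ij})$, which is a Lipschitz image of a set of Hausdorff dimension at most $6N-1$ and therefore still $\mu_{N}$-null. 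Taking a countable union over unordered pairs and over rational-endpoint intervals then gives $\mu_{N}(\mathcal{B}_{N}^{\mathrm{graze}})=0$.

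The multiple-collision piece is handled in the same way: for two distinct contact pairs $\{i,j\}$ and $\{k,\ell\}$ the simultaneous-contact locus $\{|x_{i}-x_{j}|=|x_{k}-x_{\ell}|=1\}$ is again of codimension two in $\mathbb{R}^{6N}$, so its backward-flow image over any bounded interval is null; summing over the finitely many pair-pair choices and countably many time windows closes this case.

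The main obstacle lies in $\mathcal{B}_{N}^{\infty}$. My preferred route is to invoke the uniform combinatorial bound of Burago, Ferleger and Kononenko, which states that the total number of binary collisions among $N$ equal-mass hard balls in $\mathbb{R}^{3}$ is bounded by a constant $C(N)$ depending only on $N$; any initial datum producing infinitely many collisions in finite time would then have to generate either a grazing event or a simultaneous multi-contact along its trajectory, giving the inclusion $\mathcal{B}_{N}^{\infty} \subset \mathcal{B}_{N}^{\mathrm{graze}} \cup \mathcal{B}_{N}^{\mathrm{mult}}$ up to a null set. If one wishes to avoid this deep bound, one can instead argue directly: an accumulation of collision times $t_{n}\uparrow t^{\ast}<\infty$ produces, via energy conservation and Bolzano--Weierstrass, a limit configuration $Z^{\ast}\in\partial\mathcal{D}_{N}$ at which either a pair is in tangential contact with vanishing normal relative velocity or at least three spheres meet simultaneously, and measure-preservation backward-in-time reduces the problem to the two cases already settled.
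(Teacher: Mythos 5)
Note first that the paper does not prove this proposition: it is quoted from Gallagher--Saint-Raymond--Texier (their Proposition 4.1.1, going back to Alexander), so your attempt can only be measured against that standard argument. Your decomposition of $\mathcal{B}_{N}$ and the codimension-two computations for the grazing stratum $G_{ij}$ and for the double-contact locus are fine, but the pull-back step as written does not prove what you claim. You enclose the data leading to a grazing $(i,j)$-collision in $\bigcup_{t\in[a,b]}\phi_{-t}(G_{ij})$ with $\phi$ the \emph{free} flow; this inclusion is false in general, because the trajectory may undergo an arbitrary finite number of legitimate non-grazing binary collisions before its first bad event, and after such a collision the datum is no longer carried to $G_{ij}$ by free transport alone. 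You state the two measure-preservation facts (free transport, binary reflection) but never actually use the second one. The repair is the standard induction over the number and combinatorial sequence of prior good collisions: each piece of $\mathcal{B}_{N}^{\mathrm{graze}}$ (and of $\mathcal{B}_{N}^{\mathrm{mult}}$) is the preimage of the null stratum under a finite composition of measure-preserving, locally Lipschitz maps, hence null, and the union over the countably many combinatorial types and rational time windows is null.

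The genuine gap is the accumulation case when you try to avoid Burago--Ferleger--Kononenko. Velocities need not converge as $t_{n}\uparrow t^{\ast}$, so ``a pair in tangential contact with vanishing normal relative velocity'' at the limit is not established; the correct structural statement (obtained by noting that a pair which recollides must in the meantime interact with a third sphere) is that at least two \emph{distinct} contact pairs are simultaneously realised at the limit configuration, possibly two disjoint pairs, a case your dichotomy omits. More fundamentally, ``measure-preservation backward-in-time'' cannot reduce this case to the two already settled: the null stratum is reached only as a limit after infinitely many collisions, so the initial datum is not the image of a point of that stratum under any finite composition of measure-preserving maps, and the naive approximation (the trajectory enters every $\eta$-neighbourhood of the stratum) proves nothing because such neighbourhoods have positive measure. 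This is exactly the difficulty that the Alexander/GST proof is built to overcome, via a quantitative covering argument: a velocity cutoff, a subdivision of $[0,T]$ into intervals of length $\delta$, an explicit bound on the measure of data entering a small neighbourhood of the bad boundary sets during each interval, and a summation followed by letting the parameters degenerate. Your alternative of invoking the uniform collision bound of Burago--Ferleger--Kononenko is logically sound --- it yields the inclusion $\mathcal{B}_{N}^{\infty}\subset\mathcal{B}_{N}^{\mathrm{graze}}\cup\mathcal{B}_{N}^{\mathrm{mult}}$ outright and makes the induction above finite and uniform --- but it rests on a theorem far deeper than the proposition being proved; if you take that route, the proof is complete only once the prior-collision repair in the first paragraph is carried out.
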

With this observation one can construct, by means of the method of trajectory surgery, global-in-time classical solutions to the equations of motion for a set of full $\mu_{N}$-measure in $\mathcal{D}_{N}$; see section \ref{mts} for details on this method of construction and section \ref{notion} below for the definition of classical solution in the case $N=2$. As we have an existence theory for a `large' subset of initial data in $\mathcal{D}_{N}$, one can in turn ask about qualitative properties of $N$-particle trajectories starting from data therein. In particular, one might wish to know the maximum number of collisions associated to an initial datum $Z_{0}\in\mathcal{D}_{N}\setminus\mathcal{B}_{N}$. Indeed, this is a difficult problem: see, for instance, the review article of \textsc{Murphy and Cohen} (\cite{MR1805337}, chapter 1).

The fact that one only has an existence theory for the equations of motion on a full-measure set is fine, of course, if one is only concerned with the study of statistical dynamics on $\mathcal{D}_{N}$ (for instance, the Boltzmann-Grad limit for $N$ hard spheres on $\mathbb{R}^{3}$). However, the lack of an existence theory for all initial data $Z_{0}\in\mathcal{B}_{N}$ may be unsatisfying to the analyst. To the knowledge of the author, there is no existence and regularity theory for either `classical' or `weak' solutions to the equations of motion for initial data $Z_{0}\in\mathcal{B}_{N}$. In particular, it seems no analogue of {\em scattering map} for $M$-particle collisions (for $M\geq 3$) has been constructed and studied, i.e. a map that resolves the collision between 3 or more hard spheres by mapping `pre-collisional' velocities to `post-collisional' velocities in such a way that total linear momentum, angular momentum and kinetic energy is conserved. Mathematically, for a given collision configuration graph $G$ of $M$ hard spheres in $\mathbb{R}^{3}$ and `pre-collisional' initial velocities $v_{1}, ..., v_{M}\in\mathbb{R}^{3}$, one must find `post-collisional' velocities $v_{1}'(G), ..., v_{M}'(G)$ which satisfy the conservation of total linear momentum 
\begin{equation*}
\sum_{i=1}^{M}v_{i}'(G)=\sum_{i=1}^{M}v_{i},
\end{equation*}
the conservation of angular momentum (with respect to any point of measurement $a\in\mathbb{R}^{3}$)
\begin{equation*}
\sum_{i=1}^{M}(x_{i}-a)\wedge v_{i}'(G)=\sum_{i=1}^{M}(x_{i}-a)\wedge v_{i},
\end{equation*}
and the conservation of kinetic energy
\begin{equation*}
\sum_{i=1}^{M}|v_{i}'(G)|^{2}=\sum_{i=1}^{M}|v_{i}|^{2}.
\end{equation*}
We illustrate this problem schematically in figure 1 above.
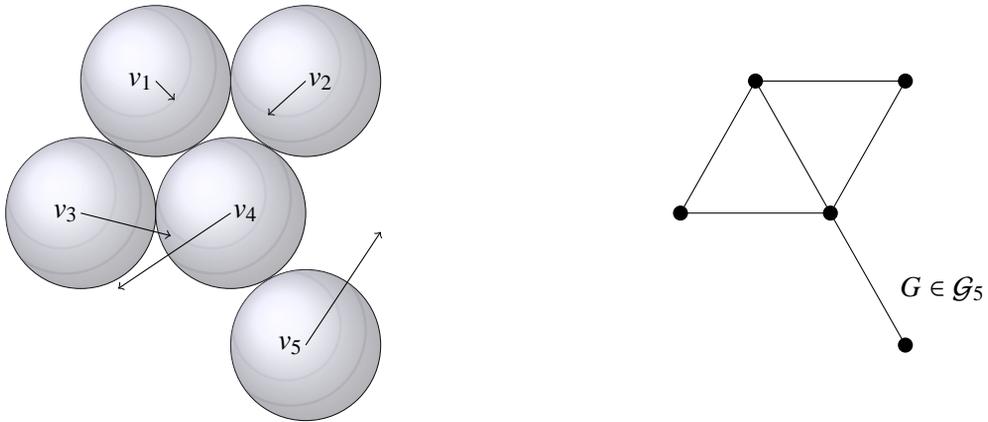
\begin{figure}\label{5ball}
\begin{tikzpicture}
  
    \draw (0,0) circle (1cm);
    \shade[ball color=blue!10!white,opacity=0.20] (0,0) circle (1cm);
    
     \draw (2,0) circle (1cm);
    \shade[ball color=blue!10!white,opacity=0.20] (2,0) circle (1cm);
    
     \draw (1,1.75) circle (1cm);
    \shade[ball color=blue!10!white,opacity=0.20] (1,1.75) circle (1cm);
    
 \draw (3,1.75) circle (1cm);
    \shade[ball color=blue!10!white,opacity=0.20] (3,1.75) circle (1cm);
     \draw (3,-1.75) circle (1cm);
    \shade[ball color=blue!10!white,opacity=0.20] (3,-1.75) circle (1cm);
    
   \fill[black] (8,0) circle (0.1cm);
     \fill[black] (10,0) circle (0.1cm);
       \fill[black] (11,1.75) circle (0.1cm);
         \fill[black] (9,1.75) circle (0.1cm);
           \fill[black] (11,-1.75) circle (0.1cm);
           
           \draw (8,0) -- (10,0);
            \draw (8,0) -- (9,1.75);
             \draw (9,1.75) -- (11,1.75);
              \draw (9,1.75) -- (10,0);
              \draw (11, 1.75) -- (10,0);
               \draw (10,0) -- (11,-1.75);
               
        \node(draw) at (11.5,-1) {$G\in\mathcal{G}_{5}$};
        
        \draw [->] (0, 0) -- (1.2, -0.3);
        \draw [->] (2, 0) -- (0.5, -1);
           \draw [->] (3, -1.75) -- (4, -0.25);
              \draw [->] (1, 1.75) -- (1.25, 1.5);
                 \draw [->] (3, 1.75) -- (2.5, 1.3);
                 
                  \node(draw) at (0.8,1.75) {$v_{1}$};
                  \node(draw) at (3.2,1.75) {$v_{2}$};
                       \node(draw) at (-0.2,0) {$v_{3}$};
                       \node(draw) at (2.2,0) {$v_{4}$};
                       \node(draw) at (2.8,-1.75) {$v_{5}$};
\end{tikzpicture}
\caption{A planar configuration of 5 hard spheres in $\mathbb{R}^{3}$ in simultaneous collision. The configuration is characterised by a graph $G$ with five nodes and 6 edges, each of length 1. For given `pre-collisional' velocities $[v_{1}, ..., v_{5}]\in\mathbb{R}^{15}$, one would like to construct `post-collisional' velocities $[v_{1}'(G), ..., v_{5}'(G)]\in\mathbb{R}^{15}$ which conserve total linear momentum, angular momentum and kinetic energy of the initial datum.}
\end{figure}
The corresponding scattering map $\sigma_{G}:\mathbb{R}^{3M}\rightarrow\mathbb{R}^{3M}$ (whose domain is not, in general, all of $\mathbb{R}^{3M}$) is given simply by $\sigma_{G}[v_{1}, ..., v_{M}]:=[v_{1}'(G), ..., v_{M}'(G)]$. It is natural to stipulate also, for instance, that $\sigma_{G}$ is an involution on $\mathbb{R}^{3M}$ and that $\mathrm{det}(D\sigma_{G}[V])=-1$ for all $V$ in the domain of $\sigma_{G}$. In any case, if one could construct families of scattering maps $\{\sigma_{G}\}_{G\in\mathcal{G}_{M}}$ corresponding to $M$-particle collisions (with $3\leq M\leq N$ and $\mathcal{G}_{M}$ being the class of all graphs parametrising $M$-particle collisions), then the general existence theory of \textsc{Ballard} \cite{MR1785473} allows one to establish the global-in-time existence of weak solutions to the equations of $N$-particle motion for arbitrary initial data in (a suitable subset of) $\mathcal{D}_{N}$. This theory can also be applied to the problem of non-spherical particle motion, but is only immediately applicable to the case when the boundary manifolds of the particles are real analytic. We shall say more about this in the final section of the article.

\subsection{`Algebraic' and `Topological' Constructions of Weak Solutions}
Let us denote the hard sphere of unit diameter whose centre of mass lies at $y\in\mathbb{R}^{3}$ by $\mathsf{S}(y)$. The equations of motion for two hard spheres are given {\em formally} by
\begin{equation}\label{formal}
\frac{d}{dt}\left[
\begin{array}{c}
x \\
\ov{x}
\end{array}
\right]=\left[
\begin{array}{c}
v \\
\ov{v}
\end{array}
\right]\quad \text{and} \quad \frac{d}{dt}\left[
\begin{array}{c}
v \\
\ov{v}
\end{array}
\right]=\left[
\begin{array}{c}
0 \\
0
\end{array}
\right],
\end{equation}
where the centres of mass $x$ and $\ov{x}$ are constrained to satisfy the condition $|x(t)-\ov{x}(t)|\geq 1$ for all $t\in\mathbb{R}$. Suppose that two hard spheres $\mathsf{S}(x(t))$ and $\mathsf{S}(\ov{x}(t))$ are in collision with one another at a {\em collision time} $t=\tau$, namely
\begin{equation}\label{collconf}
\mathrm{card}\,\mathsf{S}(x(\tau))\cap\mathsf{S}(\ov{x}(\tau))=1\quad \text{with}\quad \ov{x}(\tau)=x(\tau)+n,
\end{equation}
for some unit vector $n\in\mathbb{S}^{2}$. The problem of understanding how to resolve a collision between $\mathsf{S}(x(\tau))$ and $\mathsf{S}(\ov{x}(\tau))$ in such a way that (i) there is conservation of total linear momentum, angular momentum\footnote{We draw attention to the fact that angular momentum is rarely considered for the problem of two colliding spheres. However, it is shown in section \ref{scatty} that conservation of angular momentum allows us to solve for the `post-collisional' velocities in a systematic manner.} and kinetic energy of the two spheres, and (ii) they do not overlap following collision, has been well understood since the work of \textsc{Boltzmann} \cite{boltzmann2012wissenschaftliche}. Indeed, following the construction of a velocity scattering matrix $\sigma_{n}$ for two hard spheres (which is essentially an algebraic problem), one performs what we term in this article `trajectory surgery' to join pre-collisional 2-particle trajectories to post-collisional ones that yield classical solutions of \eqref{formal}. As perhaps indicated by the statement of the above theorem \ref{mainthminf}, we focus our attention in this article on the topological method of construction of weak solutions of \eqref{formal} in $\mathrm{BV}_{\mathrm{loc}}(\mathbb{R})$, a natural functional space in which to obtain compactness of families of smooth approximate trajectories $Z^{\varepsilon}$. Let us now briefly review the well-known construction of classical solutions to system \eqref{formal} by the method of trajectory surgery, before discussing our new contribution to this problem.
\subsection{`Algebraic' Construction of Classical and Weak Solutions: The Method of Trajectory Surgery} \label{algebraic}
We begin by noting that the set of all admissible phase points for the evolution of two hard spheres is the set of positions and velocities
\begin{equation*}
\mathcal{D}_{2}(\mathsf{S}_{\ast}):=\left\{Z=[z, \ov{z}]\in\mathbb{R}^{12}\,:\,\mathrm{card}(\mathsf{S}_{\ast}+x)\cap(\mathsf{S}_{\ast}+\ov{x})\leq 1\right\},
\end{equation*}
where $\mathsf{S}_{\ast}\subset\mathbb{R}^{3}$ is the sphere of unit diameter and centre at the origin, and $z=[x, v]$, $z=[\ov{x}, \ov{v}]$ denote the phase points of each individual hard sphere. Of particular interest is the boundary of this set, 
\begin{equation*}
\partial\mathcal{D}_{2}(\mathsf{S}_{\ast})=\left\{Z\in\mathbb{R}^{12}\,:\,|x-\ov{x}|=1\right\},
\end{equation*}
which constitutes the set of all {\em collision configurations} of two hard spheres in $\mathbb{R}^{3}$. 

The form of the ODE system \eqref{formal} clearly suggests that particle trajectories are rectilinear in the interior of the phase space $\mathcal{D}_{2}(\mathsf{S}_{\ast})$, i.e. when initial conditions $Z_{0}=[x_{0}, \ov{x}_{0}, v_{0}, \ov{v}_{0}]\in\mathcal{D}_{2}(\mathsf{S}_{\ast})$ are taken such that $|x_{0}-\ov{x}_{0}|>1$, then $x(t):=x_{0}+tv_{0}$ and $\ov{x}(t):=\ov{x}_{0}+t\ov{v}_{0}$ solve the system \eqref{formal} pointwise in the classical sense on some (possibly short) time interval. However, when the two hard spheres come into collision with one another (otherwise said, when $Z(\tau)\in\mathcal{D}_{2}(\mathsf{S}_{\ast})$ for some $\tau\in\mathbb{R}$), we must find a way of updating the particle velocities so that $Z(t)\in\mathcal{D}_{2}(\mathsf{S}_{\ast})$ for $t>\tau$. Aside from this spatial constraint, one also stipulates the velocity constraint that the collision conserves total linear momentum, angular momentum and kinetic energy of the particle system. To do this, one must construct a family of scattering matrices $\{\sigma_{n}\}_{n\in\mathbb{S}^{2}}$ which map `pre-collisional' velocities to `post-collisional' ones.

\subsubsection{Construction of Physical Scattering}\label{scatty}
The collision resolution is typically formulated as a family of algebraic problems (parametrised by the vector $n$ in \eqref{collconf}) for the unknown post-collisional linear velocities $v_{n}', \ov{v}_{n}'$ of $\mathsf{S}(x(\tau)), \mathsf{S}(\ov{x}(\tau))$, respectively. Let us consider this in detail. Suppose the spheres in collision possess `pre-collisional'\footnote{The reason we encase the word {\em pre-collisional} in inverted commas is that we have not yet specified in precise terms which $v, \ov{v}\in\mathbb{R}^{3}$ constitute pre-collisional velocity vectors with respect to the spatial configuration $n\in\mathbb{S}^{2}$. This is an issue related to regularity of the dynamics $t\mapsto [x(t), \ov{x}(t)]$, and is of greater significance when we consider systems of $M>2$ hard spheres, or systems of non-spherical particles.} linear velocities $v, \ov{v}\in\mathbb{R}^{3}$. One looks to find `post-collisional' linear velocities $v_{n}', \ov{v}_{n}'$ such that the conservation of total linear momentum
\begin{equation}\label{colm}
v_{n}'+\ov{v}_{n}'=v+\ov{v},\tag{COLM}
\end{equation}
the conservation of angular momentum (with respect to an arbitrary point of measurement $a\in\mathbb{R}^{3}$)
\begin{equation}\label{coam}
-(a-x(\tau))\wedge v_{n}'-(a-x(\tau)-n)\wedge \ov{v}_{n}'=-(a-x(\tau))\wedge v-(a-x(\tau)-n)\wedge \ov{v},\tag{COAM}
\end{equation}
and the conservation of kinetic energy
\begin{equation}\label{coke}
|v_{n}'|^{2}+|\ov{v}_{n}'|^{2}=|v|^{2}+|\ov{v}|^{2}.\tag{COKE}
\end{equation}
hold true. Although \eqref{coam} ought to hold for arbitrary points of measurement $a\in\mathbb{R}^{3}$, to simplify the problem we choose it to be the centre of mass of the system, namely $a=\frac{1}{2}n$. We may also suppose, by using \eqref{colm} directly, that $x(\tau)=0$. These choices generate 6 linear equations and one quadratic equation in the 6 unknowns $v_{n}', \ov{v}_{n}'$. Recasting \eqref{colm} and \eqref{coam} as the linear system $E_{n}V_{n}'=E_{n}V$, where
\begin{equation*}
E_{n}:=\left(
\begin{array}{cccccc}
1 & 0 & 0 & 1 & 0 & 0 \\
0 & 1 & 0 & 0 & 1 & 0 \\
0 & 0 & 1 & 0 & 0 & 1 \\
0 & -n_{3} & n_{2} & 0 & n_{3} & -n_{2}\\
n_{3} & 0 & -n_{1} & -n_{3} & 0 & n_{1} \\
-n_{2} & n_{1} & 0 & n_{2} & -n_{1} & 0
\end{array}
\right)
\end{equation*}
and $V_{n}':=[v_{n}', \ov{v}_{n}']$, $V:=[v, \ov{v}]$, it may be quickly checked that $E_{n}$ is singular for every choice of $n\in\mathbb{S}^{2}$, i.e. \eqref{colm} and \eqref{coam} give rise to at most 5 independent linear equations. Now, setting the first component of $v_{n}'$ to be $v_{n, 1}'=\alpha$ for some parameter $\alpha\in\mathbb{R}$, using the two linear conservation laws one may express all other unknown components of $v_{n}'$ and $\ov{v}_{n}'$ in terms of $\alpha$ alone. In turn, substitution of $v_{n}'$ and $\ov{v}_{n}'$ into \eqref{coke} yields a quadratic equation in the single unknown $\alpha$. One solution of this quadratic yields the trivial solution $v_{n}'=v$ and $\ov{v}_{n}'=\ov{v}$ for every $n\in\mathbb{S}^{2}$. The only other solution is the so-called classical {\em Boltzmann scattering} given by
\begin{equation}\label{boltzmannscattering}
\left[
\begin{array}{c}
v_{n}' \\
\ov{v}_{n}'
\end{array}
\right]=(\underbrace{I-2\widehat{\nu}_{n}\otimes \widehat{\nu}_{n}}_{\sigma_{n}:=})\left[
\begin{array}{c}
v \\
\ov{v}
\end{array}
\right],
\end{equation}
where $\widehat{\nu}_{n}\in\mathbb{S}^{5}$ is the unit vector
\begin{equation*}
\widehat{\nu}_{n}:=\frac{1}{\sqrt{2}}\left[
\begin{array}{c}
n \\
-n
\end{array}
\right].
\end{equation*}
Note that $\sigma_{n}=I-2\widehat{\nu}_{n}\otimes\widehat{\nu}_{n}\in\mathrm{O}(6)$ is a reflection matrix which maps the `lower' half-space $\Sigma_{n}^{-}:=\{V\in\mathbb{R}^{6}\,:\,V\cdot\widehat{\nu}_{n}\geq 0\}$ to the `upper' half-space $\Sigma_{n}^{+}:=\{V\in\mathbb{R}^{6}\,:\,V\cdot\widehat{\nu}_{n}\leq 0\}$ for each $n\in\mathbb{S}^{2}$. Moreover, we note that once the trivial solution has been discarded, Newton's law of restitution for perfectly-elastic impacts, namely
\begin{equation}\label{spherenon}
(v_{n}'-\ov{v}_{n}')\cdot n=-(v-\ov{v})\cdot n,
\end{equation}
is a simple consequence of the posited conservation laws \eqref{colm}, \eqref{coam} and \eqref{coke}. With the family of matrices $\{\sigma_{n}\}_{n\in\mathbb{S}^{2}}$ in hand, one can now construct global-in-time trajectories by the method of trajectory surgery, which we now present.

\subsubsection{The Method of Trajectory Surgery}\label{mts}
The following algorithm allows one to construct a map $Z:\mathbb{R}\rightarrow\mathcal{D}_{2}(\mathsf{S}_{\ast})$ which ensures non-penetration of two hard spheres, and which also respects the fundamental conservation laws of classical mechanics.
\begin{enumerate}[(I)]
\item START: Select an initial datum $Z_{0}\in\mathcal{D}_{2}(\mathsf{S}_{\ast})$. Consider the associated globally-defined linear trajectory $t\mapsto Z_{1}(t)=[z_{1}(t), \ov{z}_{1}(t)]$ in $\mathbb{R}^{12}$, where
\begin{equation*}
\left[
\begin{array}{c}
x_{1}(t) \\
\ov{x}_{1}(t)
\end{array}
\right]:=\left[
\begin{array}{c}
x_{0}+tv_{0}\\
\ov{x}_{0}+t\ov{v}_{0}
\end{array}
\right]\quad \text{and} \quad \left[
\begin{array}{c}
v_{1}(t) \\
v_{2}(t) \\
\end{array}
\right]:=\left[
\begin{array}{c}
v_{0}\\
\ov{v}_{0}
\end{array}
\right].
\end{equation*}
\item Define the set of collision times $\mathcal{T}(Z_{0}):=\left\{
t\in\mathbb{R}\,:\,Z_{1}(t)\in\partial\mathcal{D}_{2}(\mathsf{S}_{\ast})
\right\}.$
\begin{enumerate}[i.]
\item If $\mathcal{T}(Z_{0})=\varnothing$, then set the solution $Z(t):=Z_{1}(t)$ for all $t\in\mathbb{R}$, and STOP; otherwise GO TO (II--ii.).
\item If $\mathcal{T}(Z_{0})=\mathbb{R}$, then set the solution $Z(t):=Z_{1}(t)$ for all $t\in\mathbb{R}$, and STOP; otherwise GO TO (III).
\end{enumerate}
\item Define $\tau:=\min\mathcal{T}(Z_{0})$. 
\begin{enumerate}[i.]
\item If there exists $\delta>0$ such that $|x_{1}(t)-\ov{x}_{1}(t)|>1$ for $\tau-\delta<t<\tau$, `perform surgery' on $Z_{1}$ using the scattering matrix $\sigma_{n}$ to define $Z_{2}:=[x_{2}, \ov{x}_{2}, v_{2}, \ov{x}_{2}]$ with $X_{2}=[x_{2}, \ov{x}_{2}]$ and $V=[v_{2}, \ov{v}_{2}]$ given by
\begin{equation*}
X_{2}(t)=\left\{
\begin{array}{ll}
X_{0}+tV_{0} & \quad \text{for} \hspace{2mm}t\leq \tau, \vspace{2mm}\\
X_{0}+\tau V_{0}+(t-\tau)\sigma_{n}V_{0} & \quad \text{for}\hspace{2mm}t>\tau,
\end{array}
\right.
\end{equation*}
and
\begin{equation}\label{velmap}
V_{2}(t)=\left\{
\begin{array}{ll}
V_{0} & \quad \text{for}\hspace{2mm} t\leq \tau, \vspace{2mm} \\
\sigma_{n}V_{0} & \quad \text{for}\hspace{2mm}t>\tau,
\end{array}
\right.
\end{equation}
where $n:=\ov{x}_{0}-x_{0}$, and STOP; otherwise GO TO (III--ii.).
\item If there exists $\delta>0$ such that $|x_{1}(t)-\ov{x}_{1}(t)|<1$ for $\tau-\delta<t<\tau$, `perform surgery' on $Z_{1}$ using the scattering matrix $\sigma_{n}^{-1}=\sigma_{n}$ to define $Z_{2}:=[x_{2}, \ov{x}_{2}, v_{2}, \ov{x}_{2}]$ with $X_{2}=[x_{2}, \ov{x}_{2}]$ and $V=[v_{2}, \ov{v}_{2}]$ given by
\begin{equation*}
X_{2}(t)=\left\{
\begin{array}{ll}
X_{0}+\tau V_{0}+(t-\tau)\sigma_{n}V_{0} & \quad \text{for}\hspace{2mm}t\leq \tau, \vspace{2mm}\\
X_{0}+tV_{0} & \quad \text{for} \hspace{2mm}t>\tau,
\end{array}
\right.
\end{equation*}
and
\begin{equation}\label{velmap}
V_{2}(t)=\left\{
\begin{array}{ll}
\sigma_{n}V_{0} & \quad \text{for}\hspace{2mm}t>\tau, \vspace{2mm}\\
V_{0} & \quad \text{for}\hspace{2mm} t> \tau,

\end{array}
\right.
\end{equation}
where $n:=\ov{x}_{0}-x_{0}$, and STOP; otherwise GO TO (III--iii.).
\item If $|x_{1}(t)-\ov{x}_{1}(t)|>0$ for {\em both} $t<\tau$ and $t>\tau$, set $Z_{2}(t):=Z_{1}(t)$ and STOP.
\end{enumerate}
\end{enumerate}
By employing the above algorithm, one constructs $Z:\mathbb{R}\rightarrow\mathcal{D}_{2}(\mathsf{S}_{\ast})$ with the property that $x, \ov{x}$ are continuous and both left- and right-differentiable everywhere on $\mathbb{R}$, while $v, \ov{v}$ are lower semi-continuous and left-differentiable everywhere on $\mathbb{R}$. Moreover, $Z=Z(t)$ satisfies the system of one-sided ODEs
\begin{equation*}
\frac{d}{dt_{-}}\left[
\begin{array}{c}
x \\
\ov{x} \\
v \\
\ov{v}
\end{array}
\right]=\left[
\begin{array}{c}
v_{-} \\
\ov{v}_{-}\\
0\\
0
\end{array}
\right]\quad \text{pointwise everywhere on}\hspace{2mm}\mathbb{R},
\end{equation*}
and
\begin{equation*}
\frac{d}{dt_{+}}\left[
\begin{array}{c}
x \\
\ov{x} \\
v \\
\ov{v}
\end{array}
\right]=\left[
\begin{array}{c}
v_{+} \\
\ov{v}_{+} \\
0\\
0
\end{array}
\right]\quad \text{pointwise everywhere on}\hspace{2mm}\mathbb{R}\setminus\mathcal{T}(Z_{0}).
\end{equation*}
Moreover, using the identity \eqref{spherenon}, it follows that 
\begin{equation*}
|x(t)-\ov{x}(t)|\geq 1\quad\text{for all}\hspace{2mm}t\in\mathbb{R},
\end{equation*}
while $Z=Z(t)$ conserves the total linear momentum, angular momentum and kinetic energy of its initial datum $Z_{0}$. In the language of section \ref{notion} below, we have constructed a global-in-time classical solution to system \eqref{formal}. In particular, since the Boltzmann scattering matrix \eqref{boltzmannscattering} is the unique matrix respecting the conservation of total linear momentum, angular momentum and kinetic energy, it follows that this classical solution is unique.

The method of trajectory surgery is particularly straightforward in the case of only two spherical particles. It becomes more complicated in the case of $M\geq 2$ spherical particles, and still more complicated when spheres are replaced by compact, strictly-convex sets whose boundary surfaces are of class $C^{1}$. Let us now set up our topological method for construction of solutions to system \eqref{formal} above.
\section{Preliminaries and Notation}
In all that follows, we consider behaviour of the 2-body Hamiltonians
\begin{equation*}
H_{2}^{\varepsilon}(x, \ov{x}, v, \ov{v})=\frac{1}{2}\left(|v|^{2}+|\ov{v}|^{2}\right)+\Phi^{\varepsilon}(x-\ov{x}).
\end{equation*}
in the limit $\varepsilon\rightarrow 0$ such that $\Phi^{\varepsilon}\rightarrow\Phi$ in a suitable topology. We now set out the properties we require of the soft potentials $\Phi^{\varepsilon}$ in this article.
\subsection{Hypotheses on the Potentials $\Phi^{\varepsilon}$}
The family of spherically-symmetric 2-body potentials $\{\Phi^{\varepsilon}\}_{0<\varepsilon<1}$ is defined in terms of a {\em reference potential} $\Phi_{0}:(0, \infty)\rightarrow (0, \infty)$ by
\begin{equation*}
\Phi^{\varepsilon}(x):=\frac{\Phi_{0}(|x|)}{\varepsilon}.
\end{equation*}
We suppose that $\Phi_{0}$ belongs to the class of all maps on $(0, \infty)$ satisfying the following properties:
\begin{enumerate}[(P1)]
\item\label{p1} $\Phi_{0}$ is of class $C^{2}((0, \infty))$, $\mathrm{supp}(\Phi_{0})=\{r\,:\,r\leq 1\}$, $\Phi_{0}'(r)<0$ for all $r\in(0, 1)$ and $\Phi_{0}''(r)>0$ for all $r\in (0,1)$. Moreover, 
\begin{equation*}
\lim_{r\rightarrow 0+}\Phi_{0}(r)=\infty;
\end{equation*}
\item\label{p2} There exist constants $0<c_{1}<c_{2}$, $\beta>2$ and $0<r_{0}<1$ such that
\begin{equation*}
c_{1}(1-r)^{\beta}\leq\Phi_{0}(r)\leq c_{2}(1-r)^{\beta}
\end{equation*}
for all $r_{0}\leq r\leq 1$;
\item\label{p3} There exist constants $0<\kappa_{1}<\kappa_{2}$ such that
\begin{equation*}
\kappa_{1}(1-r)^{\beta-1}\leq\left|\frac{d\Phi_{0}}{dr}(r)\right|\leq \kappa_{2}(1-r)^{\beta-1}
\end{equation*}
for all $r_{0}\leq r\leq 1$, where $r_{0}$ is as in (P2).
\end{enumerate}
One such family of potentials $\{\Phi^{\varepsilon}\}_{0<\varepsilon<1}$ is generated by the radial functions $\Phi_{0}$ defined by
\begin{equation*}
\Phi_{0}(r):=\left\{
\begin{array}{ll}
r^{-s}(1-r)^{\beta} & \quad \text{for}\hspace{2mm}0<r\leq 1, \vspace{2mm}\\
0 & \quad \text{otherwise},
\end{array}
\right.
\end{equation*}
where $s<0$ and $\beta>2$.
\begin{rem}
For the purposes of approximating hard sphere dynamics, one need not work only with reference potentials $\Phi_{0}$ which have (and whose first derivatives have) algebraic decay at the boundary of their support. As such, properties (P2) and (P3) could be made more general. However, in this work, such a family of potentials is sufficient to establish a compactness result in $\bv$.
\end{rem}
\subsection{Notation}
For brevity, we shall often use the shorthand $Z\in\mathbb{R}^{12}$ to denote the phase vector $[x, \ov{x}, v, \ov{v}]$ which characterises the state of a system of two hard or soft spheres. Accordingly, the soft sphere phase space $\mathcal{D}_{2}^{\mathrm{S}}$ for two bodies is given by
\begin{equation*}
\mathcal{D}_{2}^{\mathrm{S}}:=\left\{Z=[x, \ov{x}, v, \ov{v}]\in\mathbb{R}^{12}\,:\,x\neq\ov{x}\right\}.
\end{equation*}
while the hard sphere phase space $\mathcal{D}_{2}(\mathsf{S}_{\ast})$ is given by
\begin{equation*}
\mathcal{D}_{2}(\mathsf{S}_{\ast}):=\left\{Z=[x, \ov{x}, v, \ov{v}]\in\mathbb{R}^{12}\,:\,|x-\ov{x}|\geq 1\right\},
\end{equation*}
If $\mathcal{D}$ denotes either  $\mathcal{D}_{2}^{\mathrm{S}}$ or $\mathcal{D}_{2}(\mathsf{S}_{\ast})$, we denote by $\Pi_{1}:\mathcal{D}\rightarrow\mathbb{R}^{6}$ the spatial projection operator $\Pi_{1}Z:=[x, \ov{x}]$ and by $\Pi_{2}:\mathcal{D}\rightarrow\mathbb{R}^{6}$ the velocity projection operator $\Pi_{2}Z:=[v, \ov{v}]$. We shall often denote $\Pi_{1}Z$ and $\Pi_{2}Z$ simply by $X$ and $V$, respectively; furthermore, $(\Pi_{1}Z)_{1}:=x$, $(\Pi_{1}Z)_{2}:=\ov{x}$, $(\Pi_{2}Z)_{1}:=v$ and $(\Pi_{2}Z)_{2}:=\ov{v}$. We define the linear momentum functional $\mathrm{LM}:\mathcal{D}\rightarrow\mathbb{R}^{3}$ of a phase point $Z\in\mathcal{D}$ by
\begin{equation*}
\mathrm{LM}(Z):=mv+m\ov{v},
\end{equation*}
the angular momentum functional $\mathrm{AM}:\mathcal{D}\times\mathbb{R}^{3}\rightarrow\mathbb{R}^{3}$ (with respect to a point of measurement $a\in\mathbb{R}^{3}$) by
\begin{equation*}
\mathrm{AM}(Z; a):=-m(a-x)\wedge v-m(a-\ov{x})\wedge\ov{v},
\end{equation*}
and the kinetic energy functional $\mathrm{KE}:\mathcal{D}\rightarrow [0, \infty)$ by
\begin{equation*}
\mathrm{KE}(Z):=m|v|^{2}+m|\ov{v}|^{2}.
\end{equation*}
We write $C_{0}^{k}(\mathbb{R}, \mathbb{R}^{M})$ (often simply denoted by $C^{k}_{0}(\mathbb{R})$) to denote the space of $k$-times differentiable maps with compact support in $\mathbb{R}$ equipped with the norm
\begin{equation*}
\|\phi\|_{C^{k}_{0}(\mathbb{R})}:=\sum_{i=0}^{k} \max_{t\in\mathbb{R}}|\phi^{(k)}(t)|.
\end{equation*}
Finally, if $a=(a_{1}, a_{2})\in\mathbb{R}^{2}$, we denote by $a^{\perp}$ the orthogonal vector $(-a_{2}, a_{1})$.
\subsection{Notions of Solution to the Equations of Motion}\label{notion}
While the dynamics associated with $H^{\varepsilon}_{2}$ is smooth, hard sphere trajectories $t\mapsto Z(t)$ are inherently non-smooth due to the non-penetration constraint, i.e. that dynamics $t\mapsto Z(t)$ must have range in $\mathcal{D}_{2}(\mathsf{S}_{\ast})$. As such, we must specify the precise senses in which the equations of motion associated with both $H_{2}^{\varepsilon}$ and $H_{2}$ can be satisfied. Firstly, for each $0<\varepsilon<1$, the equations of motion for soft spheres read as
\begin{equation}\label{softodes}
\frac{d}{dt}\left[
\begin{array}{c}
x^{\varepsilon} \\
v^{\varepsilon} 
\end{array}
\right]=\left[
\begin{array}{c}
v^{\varepsilon} \\
-\nabla\Phi^{\varepsilon}(x^{\varepsilon}-\ov{x}^{\varepsilon})
\end{array}
\right] \quad \text{and} \quad \frac{d}{dt}\left[
\begin{array}{c}
\ov{x}^{\varepsilon} \\
\ov{v}^{\varepsilon} 
\end{array}
\right]=\left[
\begin{array}{c}
\ov{v}^{\varepsilon} \\
\nabla\Phi^{\varepsilon}(x^{\varepsilon}-\ov{x}^{\varepsilon})
\end{array}
\right]. \tag{S$^{\varepsilon}$}
\end{equation}
We subsequently work with only one notion of solution to system \eqref{softodes}.
\begin{defn}[Classical Solutions of \eqref{softodes}]
For a given initial datum $Z_{0}\in\mathcal{D}_{2}^{\mathrm{S}}$, a {\em classical solution} of system (S$^{\varepsilon}$) is a map $Z^{\varepsilon}=[x^{\varepsilon}, \ov{x}^{\varepsilon}, v^{\varepsilon}, \ov{v}^{\varepsilon}]\in C^{1}(\mathbb{R}, \mathbb{R}^{12})$ whose components satisfy the equations \eqref{softodes} pointwise on $\mathbb{R}$ for all time and $Z^{\varepsilon}(0)=Z_{0}$. Moreover, $Z^{\varepsilon}$ satisfies the conservation of linear momentum
\begin{equation}\label{thmcolm}
\mathrm{LM}(Z^{\varepsilon}(t))=\mathrm{LM}(Z_{0}),
\end{equation}
the conservation of angular momentum (with respect to any point of measurement $a\in\mathbb{R}^{3}$)
\begin{equation}\label{thmcoam}
\mathrm{AM}(Z^{\varepsilon}(t); a)=\mathrm{AM}(Z_{0}; a),
\end{equation}
and the conservation of kinetic energy
\begin{equation}\label{thmcoke}
\mathrm{KE}(Z^{\varepsilon}(t))=\mathrm{KE}(Z_{0}),
\end{equation}
for all time $t\in\mathbb{R}$.
\end{defn}
In contrast, the equations of motion for hard sphere dynamics are
\begin{equation}\label{sminus}
\frac{d}{dt_{-}}\left[
\begin{array}{c}
x \\
\ov{x}
\end{array}
\right]=\left[
\begin{array}{c}
v_{-} \\
\ov{v}_{-}
\end{array}
\right] \quad \text{and} \quad \frac{d}{dt_{-}}\left[
\begin{array}{c}
v \\
\ov{v}
\end{array}
\right]=\left[
\begin{array}{c}
0 \\
0
\end{array}
\right],
\tag{S\textsuperscript{--}}
\end{equation}
and also
\begin{equation}\label{splus}
\frac{d}{dt_{+}}\left[
\begin{array}{c}
x \\
\ov{x}
\end{array}
\right]=\left[
\begin{array}{c}
v_{+} \\
\ov{v}_{+}
\end{array}
\right] \quad \text{and} \quad \frac{d}{dt_{+}}\left[
\begin{array}{c}
v \\
\ov{v}
\end{array}
\right]=\left[
\begin{array}{c}
0 \\
0
\end{array}
\right].
\tag{S\textsuperscript{+}}
\end{equation}
As we have observed above, the ODEs have been separated into their left- and right-limits due to the general non-differentiable corners in the loci $t\mapsto x(t)$ and $t\mapsto \ov{x}(t)$ at collision. We shall deal with two notions of solution to the equations of hard sphere motion in this article.
\begin{defn}[Weak Solutions of \eqref{sminus} and \eqref{splus}]
For a given initial datum $Z_{0}\in\mathcal{D}_{2}(\mathsf{S}_{\ast})$, we say that $Z$ is a {\bf weak solution} of \eqref{sminus} and \eqref{splus} if and only if $\Pi_{1}Z\in C(\mathbb{R}, \mathbb{R}^{6})$ and $\Pi_{2}Z\in\mathrm{BV}_{\mathrm{loc}}(\mathbb{R}, \mathbb{R}^{6})$ satisfy the equations
\begin{equation*}
\int_{-\infty}^{\infty}\left[
\begin{array}{c}
x(t) \\
\ov{x}(t)
\end{array}
\right]\cdot\phi'(t)\,dt=-\int_{\infty}^{\infty}\left[
\begin{array}{c}
v(t) \\
\ov{v}(t)
\end{array}
\right]\cdot\phi(t)\,dt
\end{equation*}
for all $\phi\in C^{1}_{0}(\mathbb{R}, \mathbb{R}^{6})$, and
\begin{equation*}
\int_{-\infty}^{\infty}\left[
\begin{array}{c}
v(t) \\
\ov{v}(t)
\end{array}
\right]\cdot\psi'(t)\,dt=-\int_{-\infty}^{\infty}\psi \,dDV
\end{equation*}
for all $\psi\in C^{1}_{0}(\mathbb{R}, \mathbb{R}^{6})$, where $DV$ denotes a finite vector-valued Radon measure on $\mathbb{R}$. Furthermore, $Z$ respects the conservation of linear momentum \eqref{thmcolm},
the conservation of angular momentum (with respect to any point of measurement $a\in\mathbb{R}^{3}$) \eqref{thmcoam},
and the conservation of kinetic energy \eqref{thmcoke} for any representative of the equivalence class $Z$ and almost every time $t\in\mathbb{R}$.
\end{defn}
For the purposes of defining classical solutions, we make the following definition.
\begin{defn}
For any $Z_{0}\in\mathcal{D}_{2}(\mathsf{S}_{\ast})$, we define the set of all {\bf collision times} $\mathcal{T}(Z_{0})$ for a trajectory $Z:\mathbb{R}\rightarrow\mathcal{D}_{2}(\mathsf{S}_{\ast})$ (satisfying $Z(0)=Z_{0}$) by
\begin{equation*}
\mathcal{T}(Z_{0}):=\left\{t\in\mathbb{R}\,:\, |x(t)-\ov{x}(t)|=1\right\}.
\end{equation*}
\end{defn}
We contrast the notion of weak solution with the following notion of classical solution.
\begin{defn}[Classical Solutions of \eqref{sminus} and \eqref{splus}]
We say that $Z:\mathbb{R}\rightarrow\mathcal{D}_{2}(\mathsf{S}_{\ast})$ is a {\bf classical solution} of \eqref{sminus} and \eqref{splus} if and only if $t\mapsto \Pi_{1}Z$ is continuous piecewise linear and left- and right-differentiable on $\mathbb{R}$, and $t\mapsto\Pi_{2}Z$ is lower semi-continuous piecewise constant\footnote{We adopt the convention that a vector-valued map is lower semi-continuous if and only if its component maps are themselves lower semi-continuous.}, with $\Pi_{1}Z$ and $\Pi_{2}Z$ satisfying (S\textsuperscript{--}) and (S\textsuperscript{+}) on $\mathbb{R}$ and $\mathbb{R}\setminus\mathcal{T}(Z_{0})$, respectively. Moreover, $t\mapsto Z(t)$ must satisfy \eqref{thmcolm}, \eqref{thmcoam} and \eqref{thmcoke} for every $t\in\mathbb{R}$ and all points of measurement $a\in\mathbb{R}^{3}$. 
\end{defn}
We note that every classical solution of system \eqref{sminus} and \eqref{splus} generates a weak solution thereof.
\subsection{Main Result}\label{mainres}
A precise statement of the main result in this article is the following:
\begin{thm}\label{mainresy}
Suppose the reference potential $\Phi_{0}$ satsifies {\em (P1)}, {\em (P2)} and {\em (P3)}. For any $Z_{0}\in\mathcal{D}_{2}(\mathsf{S}_{\ast})$, let $\{Z^{\varepsilon}\}_{0<\varepsilon<1}$ denote the associated unique classical solution of \eqref{softodes}. There exist $v, \ov{v}\in \mathrm{BV}_{\mathrm{loc}}(\mathbb{R}, \mathbb{R}^{3})$ such that $v^{\varepsilon}\overset{\ast}{\rightharpoonup}v$ and $\ov{v}^{\varepsilon}\overset{\ast}{\rightharpoonup}\ov{v}$ in $\mathrm{BV}(I, \mathbb{R}^{3})$ for any open interval $I\subset\mathbb{R}$ as $\varepsilon\rightarrow 0$, where $[x, \ov{x}, v, \ov{v}]$ is a weak solution of \eqref{sminus} and \eqref{splus}. Moreover, the equivalence class $[x, \ov{x}, v, \ov{v}]$ is represented by the unique classical solution of \eqref{sminus} and \eqref{splus} corresponding to the initial datum $Z_{0}$.
\end{thm}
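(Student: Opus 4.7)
The plan is to prove the result in three stages: establish $\varepsilon$-uniform BV bounds on the soft velocities; extract a limit by BV compactness; identify the limit as the unique classical hard-sphere trajectory of Section \ref{mts}. First I would reduce the two-body soft dynamics to a planar central-force problem via $s^\varepsilon := \tfrac{1}{2}(x^\varepsilon + \bar x^\varepsilon)$ and $r^\varepsilon := x^\varepsilon - \bar x^\varepsilon$: conservation of linear momentum gives $\dot s^\varepsilon \equiv$ const, while $\ddot r^\varepsilon = -2\nabla\Phi^\varepsilon(r^\varepsilon)$. Since $Z_0 \in \mathcal{D}_2(\mathsf{S}_{\ast})$ satisfies $|r_0| \geq 1$, the initial potential energy vanishes and conservation of total energy yields the $\varepsilon$-uniform bound $|v^\varepsilon(t)|^2 + |\bar v^\varepsilon(t)|^2 \leq |v_0|^2 + |\bar v_0|^2$. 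Conservation of angular momentum of the relative motion further confines $r^\varepsilon$ to a fixed plane in which I parametrize it by polar coordinates $(R^\varepsilon, \theta^\varepsilon)$ satisfying the radial identity $\dot R^{\varepsilon\,2} = \mathcal{E} - 4\Phi^\varepsilon(R^\varepsilon) - \ell^2/R^{\varepsilon\,2}$, where $\mathcal{E} := |v_0 - \bar v_0|^2$ and $\ell$ depend only on $Z_0$.

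The principal obstacle is the uniform BV bound on $v^\varepsilon$ (and hence on $\bar v^\varepsilon$, since $v^\varepsilon + \bar v^\varepsilon$ is constant). Because $\Phi^\varepsilon$ is strictly repulsive on $\{R < 1\}$, the trajectory $R^\varepsilon(\cdot)$ either stays in $\{R \geq 1\}$ for all times or dips into $\{R < 1\}$ on a single interval $[\tau_-^\varepsilon, \tau_+^\varepsilon]$ and attains a minimum radius $R_{\min}^\varepsilon \in (0, 1)$. In the latter case, all of $\mathrm{TV}(v^\varepsilon; I)$ is concentrated on that interval; splitting into the in- and outgoing radial branches and using the change of variable $u = \Phi^\varepsilon(R)$ (legitimate by (P1)) gives
\begin{equation*}
\mathrm{TV}(v^\varepsilon; I) \;=\; \int_{\tau_-^\varepsilon}^{\tau_+^\varepsilon} |\nabla \Phi^\varepsilon(r^\varepsilon(t))|\,dt \;=\; 2\int_0^{u_{\max}^\varepsilon} \frac{du}{\sqrt{\mathcal{E} - 4u - \ell^2/R^\varepsilon(u)^2}},
\end{equation*}
with $R^\varepsilon(u) := \Phi_0^{-1}(\varepsilon u)$ and $u_{\max}^\varepsilon := \tfrac{1}{4}(\mathcal{E} - \ell^2/R_{\min}^{\varepsilon\,2})$. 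The elementary observation $R^\varepsilon(u) \geq R_{\min}^\varepsilon$ yields $\mathcal{E} - 4u - \ell^2/R^\varepsilon(u)^2 \geq 4(u_{\max}^\varepsilon - u)$, whence $\mathrm{TV}(v^\varepsilon; I) \leq 2\sqrt{u_{\max}^\varepsilon} \leq \sqrt{\mathcal{E}}$, uniformly in $\varepsilon$. The quantitative hypotheses (P2)--(P3) are then used to control the approach $R_{\min}^\varepsilon \to 1$ and to bound the interaction duration $\tau_+^\varepsilon - \tau_-^\varepsilon \to 0$ as $\varepsilon \to 0$, which is essential for the identification step below.

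With these uniform BV bounds in place, Helly's selection theorem extracts a subsequence $v^{\varepsilon_k} \to v$, $\bar v^{\varepsilon_k} \to \bar v$ weak-$\ast$ in BV on each bounded interval, while the Arzel\`a--Ascoli theorem gives local uniform convergence of $x^{\varepsilon_k}, \bar x^{\varepsilon_k}$ to Lipschitz limits $x, \bar x$. On intervals where $|r(t)| > 1$ the limit velocities are locally constant (since $\Phi^\varepsilon(r^\varepsilon) \equiv 0$ there for small $\varepsilon$), so the only possible jump of $(v, \bar v)$ occurs at a single time $\tau = \lim \tau_\pm^\varepsilon$ with $|r(\tau)| = 1$. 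Passing \eqref{thmcolm}, \eqref{thmcoam} and \eqref{thmcoke} across $\tau$ gives conservation of linear momentum, angular momentum and kinetic energy between the pre- and post-jump states; by the uniqueness analysis of Section \ref{scatty}, the non-trivial branch (selected by the approaching-particle condition $(v_- - \bar v_-) \cdot \hat r(\tau) < 0$, inherited from the incoming radial motion) must coincide with Boltzmann scattering $\sigma_{n(\tau)}$, where $n(\tau) := \bar x(\tau) - x(\tau)$. This identifies $[x, \bar x, v, \bar v]$ with the unique classical hard-sphere trajectory of Section \ref{mts}; uniqueness of that trajectory then upgrades subsequential to full weak-$\ast$ convergence.
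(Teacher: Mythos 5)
Your proposal is correct in outline and reaches the theorem by a genuinely different route at both technical pivots. For the uniform BV bound, the paper (Lemma \ref{lemm}) multiplies collision duration by maximal force: it first proves $\tau_{\ast}^{\varepsilon}\leq C\varepsilon^{1/\beta}$ (Proposition \ref{estimation}, which is where (P2)--(P3) are used) and then estimates $\mathrm{Var}(V^{\varepsilon}, I;\mathcal{P})\leq 4\varepsilon^{-1}\tau_{\ast}^{\varepsilon}\max|\Phi_{0}'(\rho^{\varepsilon})|\leq C\kappa_{2}q_{1}^{\beta-1}$; your substitution $u=\Phi^{\varepsilon}(R)$ along the two monotone radial branches instead yields the clean, $\varepsilon$-uniform bound $\mathrm{TV}(v^{\varepsilon})\leq\sqrt{\mathcal{E}}=|v_{0}-\ov{v}_{0}|$ using only (P1) plus the single-turning-point structure of the repulsive effective potential $4\Phi^{\varepsilon}(R)+\ell^{2}/R^{2}$ --- a sharper and more hypothesis-frugal estimate than the paper's. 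For the identification of the limit, the paper computes the limit of the scattering operator explicitly (convergence of the apse line $\omega_{\ast}^{\varepsilon}\rightarrow y_{0}$, Corollary \ref{compactcon}) and then runs a direct $L^{1}$ estimate splitting $\|v^{\varepsilon}-v\|_{L^{1}}$ into pre-collision, collision and post-collision pieces; you argue more softly that the limit velocity is piecewise constant with a single possible jump at $\tau=\lim\tau_{\pm}^{\varepsilon}$, pass \eqref{thmcolm}, \eqref{thmcoam}, \eqref{thmcoke} across the jump, and invoke the algebraic dichotomy of Section \ref{scatty}. Two points in your plan should be spelled out. First, the dichotomy admits both the trivial branch and Boltzmann scattering, and the approaching condition $(v_{-}-\ov{v}_{-})\cdot\hat{r}(\tau)<0$ alone does not discard the trivial branch: you must combine it with non-penetration of the limit trajectory, $|x(t)-\ov{x}(t)|\geq 1$, which follows from $\rho_{\ast}^{\varepsilon}\rightarrow 1$ (the grazing case, where the two branches coincide, is handled separately). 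Second, your identification step hinges on $\Delta\tau^{\varepsilon}\rightarrow 0$ and $R_{\min}^{\varepsilon}\rightarrow 1$, which you assert rather than prove; this is precisely the quantitative content of Proposition \ref{estimation} and the one place where (P2)--(P3) genuinely enter, so your argument still rests on (an analogue of) that estimate even though your BV bound no longer does. With those two steps filled in, your route is sound and, in the compactness half, more economical than the paper's.
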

\subsection{Structure of the Article}
In section \ref{softy}, we study basic properties of solutions of system \eqref{softodes}, in particular obtaining explicit estimates on the total time of collision of soft spheres that depend on the hardening parameter $\varepsilon$. In section \ref{bvcomp}, we prove that families of solutions $\{\Pi_{2}Z^{\varepsilon}\}_{0<\varepsilon<1}$ of \eqref{softodes} are pre-compact in the weak-$\ast$ topology on $\mathrm{BV}(I, \mathbb{R}^{6})$ for any open interval $I\subset\mathbb{R}$. In section \ref{close}, we conclude the proof of the main theorem \ref{mainresy}. In section \ref{close}, we close by considering the challenges posed by the analogous problem for two {\em non-spherical} particles.
\section{Properties of Solutions of the Soft Sphere System (S$^{\varepsilon}$)}\label{softy}
By the Cauchy-Lipschitz theorem, for each $0<\varepsilon<1$ and each initial datum $Z_{0}\in\mathcal{D}_{2}^{\mathrm{S}}$, system \eqref{softodes} has a unique global-in-time $C^{1}$ solution $Z^{\varepsilon}:\mathbb{R}\rightarrow\mathcal{D}_{2}^{\mathrm{S}}$ such that $Z^{\varepsilon}(0)=Z_{0}$. As such, for each fixed $\varepsilon$ we have a well-defined family of solution operators $\{T^{\varepsilon}_{t}\}_{t\in\mathbb{R}}$ on $\mathcal{D}_{2}^{\mathrm{S}}$. We shall be interested in obtaining some precise information on the qualitative behaviour of solutions to \eqref{softodes}, which will be of use when establishing a compactness principle in $\bv$ for sequences of `approximate trajectories' $\{Z^{\varepsilon}\}_{\varepsilon>0}$ to hard sphere trajectories.

The motion of the centres of mass of soft spheres is rectilinear when their supports do not intersect, i.e. when $|x^{\varepsilon}(t)-\ov{x}^{\varepsilon}(t)|>1$. When the distance between their centres of mass is strictly less than 1, we can expect their motion to be curvilinear (and, in particular, symmetric with respect to the {\em apse line}: see \eqref{apseline} below). We shall obtain precise information on these curvilinear trajectories, notably the duration of time $\Delta\tau^{\varepsilon}(Z_{0})$ for which the supports of the soft spheres intersect. In all the sequel, we shall refer to the event when the centres of mass of the two soft spheres are such that $|x^{\varepsilon}(t)-\ov{x}^{\varepsilon}(t)|<1$ a {\em collision}.
\subsection{Pre- and Post-collisional Configurations}
Our main focus in what follows will be the study of {\em scattering operators} $\sigma^{\varepsilon}$ which map pre-collisional configurations of soft spheres to post-collisional ones. To do this, we require the following ancillary definition. 
\begin{defn}[Entrance and Exit Times]
For any initial datum $Z_{0}\in\mathcal{D}_{2}^{\mathrm{S}}$ and its associated solution $Z^{\varepsilon}$ of \eqref{softodes}, we write $\tau_{-}^{\varepsilon}=\tau_{-}^{\varepsilon}(Z_{0})$ and $\tau_{+}^{\varepsilon}=\tau_{+}^{\varepsilon}(Z_{0})$ to denote the entrance and exit times for the supports of the soft spheres, respectively, where
\begin{equation*}
\begin{array}{c}
\tau_{-}^{\varepsilon}(Z_{0}):=\inf\left\{t\in\mathbb{R}\,:\,\mathrm{card}\,\mathsf{S}(x^{\varepsilon}(t))\cap\mathsf{S}(\ov{x}^{\varepsilon}(t))=1\right\}, \vspace{2mm} \\
\tau_{+}^{\varepsilon}(Z_{0}):=\sup\left\{t\in\mathbb{R}\,:\,\mathrm{card}\,\mathsf{S}(x^{\varepsilon}(t))\cap\mathsf{S}(\ov{x}^{\varepsilon}(t))=1\right\}.
\end{array}
\end{equation*}
If $Z_{0}\in\mathcal{D}_{2}^{\mathrm{S}}$ is chosen such that no soft sphere collision takes place, namely $|x^{\varepsilon}(t)-\ov{x}^{\varepsilon}(t)|>1$ for all $t\in\mathbb{R}$, we write $\tau_{-}^{\varepsilon}(Z_{0})=-\infty$ and $\tau_{+}^{\varepsilon}(Z_{0})=\infty$. 
\end{defn}
In the case when $\tau_{-}^{\varepsilon}=-\infty$, we note that the unique solution $Z^{\varepsilon}$ of \eqref{softodes} exhibits rectilinear motion for all $t\in\mathbb{R}$; in the case when $\tau_{-}^{\varepsilon}>-\infty$, the associated unique soft sphere trajectories exhibit rectilinear motion for $t\leq \tau_{-}^{\varepsilon}$ and $t\geq \tau_{+}^{\varepsilon}$. Finally, whenever $\tau_{-}^{\varepsilon}>-\infty$, we denote the {\em duration of collision} $\tau_{+}^{\varepsilon}-\tau_{-}^{\varepsilon}$ by $\Delta\tau^{\varepsilon}(Z_{0})$.

We now wish to understand which initial data $Z_{0}$ are {\em pre}-collisional, and which are {\em post}-collisional. To do this, we now consider the auxiliary function $F:\mathbb{R}^{6}\rightarrow\mathbb{R}$ defined by
\begin{equation*}
F(x, \ov{x}):=|x-\ov{x}|^{2}-1.
\end{equation*} 
Evidently, the supports of the soft spheres intersect at a single point if and only if $F(x^{\varepsilon}(\tau), \ov{x}^{\varepsilon}(\tau))=0$ for some $\tau\in\mathbb{R}$. By a simple calculation, one has that at the entrance time $t=\tau_{-}^{\varepsilon}$
\begin{equation*}
\frac{d}{dt}F(x^{\varepsilon}(t), \ov{x}^{\varepsilon}(t))\bigg|_{t=\tau_{-}^{\varepsilon}}\leq 0,
\end{equation*}
whence 
\begin{equation*}
(x^{\varepsilon}(\tau_{-}^{\varepsilon})-\ov{x}^{\varepsilon}(\tau_{-}^{\varepsilon}))\cdot (v^{\varepsilon}(\tau_{-}^{\varepsilon})-\ov{v}^{\varepsilon}(\tau_{-}^{\varepsilon}))\leq 0.
\end{equation*}
In a similar way, one also has that at the exit time $t=\tau_{+}^{\varepsilon}$
\begin{equation*}
\frac{d}{dt}F(x^{\varepsilon}(t), \ov{x}^{\varepsilon}(t))\bigg|_{t=\tau_{+}^{\varepsilon}}\geq 0,
\end{equation*}
which yields
\begin{equation*}
(x^{\varepsilon}(\tau_{+}^{\varepsilon})-\ov{x}^{\varepsilon}(\tau_{+}^{\varepsilon}))\cdot (v^{\varepsilon}(\tau_{+}^{\varepsilon})-\ov{v}^{\varepsilon}(\tau_{+}^{\varepsilon}))\geq 0.
\end{equation*}
These simple observations motivate the following definitions.
\begin{defn}[Pre- and Post-collisional Configurations]
The set of all {\bf pre-collisional} configurations $\Sigma^{-}\subset\Ds$ is given by
\begin{equation*}
\Sigma^{-}:=\left\{Z\in\mathbb{R}^{12}\,:\,|x-\ov{x}|=1\hspace{2mm}\text{and}\hspace{2mm}(x-\ov{x})\cdot (v-\ov{v})\leq 0\right\}
\end{equation*}
and the set of all {\bf post-collisional} configurations $\Sigma^{+}\subset\Ds$ is given by
\begin{equation*}
\Sigma^{+}:=\left\{Z\in\mathbb{R}^{12}\,:\,|x-\ov{x}|= 1\hspace{2mm}\text{and}\hspace{2mm}(x-\ov{x})\cdot (v-\ov{v})\geq 0\right\}.
\end{equation*}
The set of all {\bf grazing collisions} $\Sigma^{0}\subset\Ds$ is given by
\begin{equation*}
\Sigma^{0}:=\left\{Z\in\mathbb{R}^{12}\,:\,|x-\ov{x}|=1 \hspace{2mm}\text{and}\hspace{2mm}(x-\ov{x})\cdot (v-\ov{v})=0\right\}.
\end{equation*}
In particular,  $\Sigma^{-}\cup\Sigma^{+}\cup\Sigma^{0}=\partial\mathcal{D}_{2}(\mathsf{S}_{\ast})$.
\end{defn}
As we are only interested in how the velocity maps $v^{\varepsilon}$ and $\ov{v}^{\varepsilon}$ of the soft spheres are modified following a collision, in the remainder of this article we shall always assume $Z_{0}\in\Sigma^{-}$, whence $\tau_{-}^{\varepsilon}=0$ for all $\varepsilon>0$.

We are now in a position to define the main object of study in this section.
\begin{defn}[Soft Sphere Scattering Maps]\label{defscat}
If $\{T^{\varepsilon}_{t}\}_{t\in\mathbb{R}}$ denotes the 1-parameter family of solution operators on $\mathcal{D}_{2}^{\mathrm{S}}$ associated with \eqref{softodes}, we define the {\em scattering map} $\sigma^{\varepsilon}:\Sigma^{-}\rightarrow\Sigma^{+}$ to be
\begin{equation}\label{scatteringmap}
\sigma^{\varepsilon}Z_{0}:=T^{\varepsilon}_{\tau_{+}^{\varepsilon}(Z_{0})}Z_{0}.
\end{equation}
\end{defn}
Our study of $\sigma^{\varepsilon}$ will involve two elements:
\begin{itemize}
\item By locating the so-called apse line, we find an explicit formula for the operator $\sigma^{\varepsilon}$; 
\item We study the behaviour of the scattering operator $\sigma^{\varepsilon}$ on $\partial\mathcal{D}_{2}(\mathsf{S}_{\ast})$ in the hardening limit as $\varepsilon\rightarrow 0$.
\end{itemize}
We now employ a convenient change of reference frame to study the dynamics of soft spheres, with a view to obtaining an explicit formula for $\sigma^{\varepsilon}$.
\subsection{Centre of Mass Reference Frame}
We follow the approach of  (\cite{MR3157048}, chapter 8, section 8.1) in reducing our study of the dynamics of two soft spheres to the centre-of-mass reference frame. As claimed above, this will make finding the explicit formulae for the scattering operators $\sigma^{\varepsilon}$ rather straightforward.

We now transform the system \eqref{softodes} above by defining new variables $y^{\varepsilon}:=x^{\varepsilon}-\ov{x}^{\varepsilon}$, $\ov{y}^{\varepsilon}:=\frac{1}{2}(x^{\varepsilon}+\ov{x}^{\varepsilon})$ and $w^{\varepsilon}:=v^{\varepsilon}-\ov{v}^{\varepsilon}$, $\ov{w}^{\varepsilon}:=\frac{1}{2}(v^{\varepsilon}+\ov{v}^{\varepsilon})$, which easily can be shown to satisfy the system of decoupled equations
\begin{equation}\label{com}
\frac{d}{dt}\left[
\begin{array}{c}
y^{\varepsilon} \\ w^{\varepsilon}
\end{array}
\right]=\left[
\begin{array}{c}
w^{\varepsilon} \\ -2\nabla\Phi^{\varepsilon}(y^{\varepsilon}) 
\end{array}
\right], \tag{S$^{\varepsilon}_{0}$} 
\end{equation}
and 
\begin{equation}\label{trivial}
\frac{d}{dt}\left[
\begin{array}{c}
\ov{y}^{\varepsilon} \\
\ov{w}^{\varepsilon}
\end{array}
\right]=\left[
\begin{array}{c}
\ov{w}^{\varepsilon} \\
0
\end{array}
\right].
\end{equation}
As solutions of system \eqref{com} and \eqref{trivial} are also unique for any given initial datum, they are in a bijective correspondence with solutions of \eqref{softodes}. We focus our attention on the unbarred system \eqref{com}. We notice that this system of equations also has a natural Hamiltonian structure given by the energy function
\begin{equation*}
H^{\varepsilon}_{0}(y, w):=\frac{1}{2}|w|^{2}+2\Phi^{\varepsilon}(y).
\end{equation*}
In particular, if $Z_{0}\in\Sigma^{-}$, one has that
\begin{equation*}
\frac{1}{2}|w^{\varepsilon}(t)|^{2}+2\Phi^{\varepsilon}(y^{\varepsilon}(t))=\frac{1}{2}|v_{0}-\ov{v}_{0}|^{2}
\end{equation*}
for all time $t\in\mathbb{R}$, which implies that
\begin{equation*}
y^{\varepsilon}(t)\in\left\{\eta\in\mathbb{R}^{3}\,:\,|\eta|\geq \Phi_{0}^{-1}\left(\frac{\varepsilon|v_{0}-\ov{v}_{0}|^{2}}{4}\right)\right\} \quad \text{for all}\hspace{2mm}t\in\mathbb{R}.
\end{equation*}
As such, the Hamiltonian structure of system \eqref{com} ensures there is a natural {\em distance of closest approach} for the centres of mass of the soft spheres, once an initial datum $Z_{0}$ (and therefore the total energy) for the dynamics has been fixed.
\subsection{Distance of Closest Approach of the Centres of Mass}
When $Z_{0}\in \Sigma^{-}$, it shall prove useful to obtain upper and lower bounds (in $\varepsilon$) on $\rho_{\ast}^{\varepsilon}=\rho_{\ast}^{\varepsilon}(Z_{0})>0$, the distance of closest approach of the centres of mass of the two soft spheres, defined by
\begin{equation*}
\rho_{\ast}^{\varepsilon}:=\min\left\{|x^{\varepsilon}(t)-\ov{x}^{\varepsilon}(t)|\,:\,t\in\mathbb{R}\right\}=\min\left\{|y^{\varepsilon}(t)|\,:\,t\in\mathbb{R}\right\}.
\end{equation*}
Together with the symmetry of solutions with respect to the apse line, the distance of closest approach $\rho_{\ast}^{\varepsilon}$ will allow us to estimate the difference $\Delta\tau^{\varepsilon}$ between the entrance and exit times in terms of the hardening parameter $\varepsilon$, which is crucial for obtaining our compactness result in $\bv$. In order to do this, we begin by making an observation on the time evolution of the angular momentum of solutions $y^{\varepsilon}(t)$ when measured with respect to the origin. Indeed, by spherical symmetry of the potential $\Phi^{\varepsilon}$, we find that
\begin{equation}\label{fakeam}
\frac{d}{dt}\left(y^{\varepsilon}(t)\wedge w^{\varepsilon}(t)\right)=0,
\end{equation}
in particular the value of $y^{\varepsilon}(t)\wedge w^{\varepsilon}(t)$ is fixed by the initial data $y_{0}$ and $w_{0}$. We use this observation to determine the {\em first} time $\tau_{\ast}^{\varepsilon}\geq 0$ for which $|y^{\varepsilon}(t)|$ is minimised, namely
\begin{equation}\label{tca}
\tau_{\ast}^{\varepsilon}:=\min\left\{t\geq 0\,:\,\dot{\rho}^{\varepsilon}(t)=0\right\}.
\end{equation} 
We separate our considerations into three cases.
\subsubsection{The Case $y_{0}\wedge w_{0}\neq 0$ and $y_{0}\cdot w_{0}\neq 0$} In this case, \eqref{fakeam} implies that $y^{\varepsilon}$ evolves for all time in the plane (that passes through the origin) which is orthogonal to the vector $y_{0}\wedge w_{0}$. We study its evolution with polar co-ordinates in this plane. Indeed, we may write
\begin{equation*}
y^{\varepsilon}(t)=R_{0}\left[
\begin{array}{c}
\rho^{\varepsilon}(t)e(\vartheta^{\varepsilon}(t)) \\
0
\end{array}
\right],
\end{equation*}
where $e(\theta):=(\sin\theta, \cos\theta)$, $\rho^{\varepsilon}(0)e(\vartheta^{\varepsilon}(0))=y_{0}$ and $R_{0}\in\mathrm{SO}(3)$ is the rotation matrix satisfying $R_{0}(0 ,0, 1)=y_{0}\wedge w_{0}$. From \eqref{fakeam}, we therefore find that
\begin{equation}\label{iden}
(\rho^{\varepsilon})^{4}(\dot{\vartheta}^{\varepsilon})^{2}=|y_{0}\wedge w_{0}|^{2}.
\end{equation}
Moreover, since the dynamics associated with system \eqref{com} conserves energy, we find using identity \eqref{iden} that
\begin{equation}\label{rode}
(\dot{\rho}^{\varepsilon})^{2}+\frac{A_{0}}{(\rho^{\varepsilon})^{2}}+\frac{4}{\varepsilon}\Phi_{0}(\rho^{\varepsilon})=2E_{0},
\end{equation}
with $A_{0}:=|y_{0}\wedge w_{0}|^{2}$ and $E_{0}:=\frac{1}{2}|w_{0}|^{2}$. As such, the radius $\rho^{\varepsilon}$ is at a minimum if and only if it satisfies the equation
\begin{equation}\label{critrad}
\frac{A_{0}}{(\rho^{\varepsilon})^{2}}+4\varepsilon^{-1}\Phi_{0}(\rho^{\varepsilon})=2E_{0}.
\end{equation}
We have the following simple lemma.
\begin{lem}
If $y_{0}=x_{0}-\ov{x}_{0}$ and $w_{0}=v_{0}-\ov{v}_{0}$ are such that $y_{0}\wedge w_{0}\neq 0$ and $y_{0}\cdot w_{0}\neq 0$, there exists a unique $0<\rho_{\ast}=\rho^{\varepsilon}_{\ast}(Z_{0})<1$ which satisfies the equation \eqref{critrad}.
\end{lem}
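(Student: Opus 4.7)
The plan is to recast \eqref{critrad} as a root-finding problem for an explicit auxiliary function, and then deduce existence and uniqueness from monotonicity plus the intermediate value theorem. Define $f:(0,1]\to\mathbb{R}$ by
\begin{equation*}
f(\rho):=\frac{A_{0}}{\rho^{2}}+\frac{4}{\varepsilon}\Phi_{0}(\rho)-2E_{0},
\end{equation*}
so that $\rho^{\varepsilon}_{\ast}\in(0,1)$ solves \eqref{critrad} if and only if $f(\rho^{\varepsilon}_{\ast})=0$. By hypothesis (P1), $\Phi_{0}\in C^{2}((0,\infty))$, so $f$ is continuous on $(0,1]$.

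For existence, I would verify that $f$ changes sign between $0^+$ and $1$. Since $y_{0}\wedge w_{0}\neq 0$, one has $A_{0}=|y_{0}\wedge w_{0}|^{2}>0$, hence $A_{0}/\rho^{2}\to\infty$ as $\rho\to 0^{+}$; combined with $\Phi_{0}(\rho)\to\infty$ from (P1), this gives $f(\rho)\to\infty$ as $\rho\to 0^{+}$. At the other endpoint, (P2) forces $\Phi_{0}(1)=0$, and since $Z_{0}\in\Sigma^{-}$ we have $|y_{0}|=1$, so the Lagrange identity yields
\begin{equation*}
A_{0}=|y_{0}|^{2}|w_{0}|^{2}-(y_{0}\cdot w_{0})^{2}=2E_{0}-(y_{0}\cdot w_{0})^{2}.
\end{equation*}
Consequently $f(1)=-(y_{0}\cdot w_{0})^{2}<0$ by the assumption $y_{0}\cdot w_{0}\neq 0$. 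The intermediate value theorem now produces a root $\rho^{\varepsilon}_{\ast}\in(0,1)$.

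For uniqueness, I would differentiate:
\begin{equation*}
f'(\rho)=-\frac{2A_{0}}{\rho^{3}}+\frac{4}{\varepsilon}\Phi_{0}'(\rho).
\end{equation*}
The first term is strictly negative on $(0,1)$ because $A_{0}>0$, and property (P1) gives $\Phi_{0}'(\rho)<0$ for every $\rho\in(0,1)$, so the second term is strictly negative as well. Thus $f'<0$ on $(0,1)$, $f$ is strictly decreasing there, and the root located above is unique.

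No step should present any real obstacle: the only place where hypotheses must be combined carefully is at $\rho=1$, where one needs both the boundary behaviour of $\Phi_{0}$ from (P2) and the constraint $|y_{0}|=1$ coming from $Z_{0}\in\Sigma^{-}$ in order to evaluate $f(1)$ in a form that makes the sign manifest. The nondegeneracy conditions $y_{0}\wedge w_{0}\neq 0$ and $y_{0}\cdot w_{0}\neq 0$ are used exactly once each, namely to guarantee the divergence of $A_{0}/\rho^{2}$ at the origin and the strict negativity of $f(1)$, respectively.
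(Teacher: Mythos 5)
Your proof is correct and follows essentially the same route as the paper: both arguments reduce \eqref{critrad} to an elementary monotonicity-plus-intermediate-value analysis of the radial energy relation, with $A_{0}>0$ controlling the behaviour as $\rho\rightarrow 0^{+}$ and the combination of $|y_{0}|=1$ (from $Z_{0}\in\Sigma^{-}$) with $y_{0}\cdot w_{0}\neq 0$ — via the Lagrange identity, equivalently the paper's $\arcsin(\sqrt{A_{0}/2E_{0}})$ computation — controlling the sign at $\rho=1$. If anything, your version spells out the uniqueness step (strict negativity of $f'$ on $(0,1)$ using (P1)) more explicitly than the paper's compressed sketch, so no changes are needed.
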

\begin{proof}
Since $\Phi^{\varepsilon}$ satisfies hypothesis (P1), namely that it is strictly decreasing on $(0, 1)$ and is of compact support on $\mathbb{R}$, we observe that the task of finding some $\rho_{\ast}>0$ that satisfies \eqref{critrad} is equivalent to proving that the map $r\mapsto 2E_{0}-A_{0}/r^{2}$ has a zero $r_{\ast}$ in the open interval $(0, 1)$. As every zero of this map is of the form $r_{\ast}=\sin\alpha$, where $\alpha=\alpha(y_{0}, w_{0})$ is given explicitly by $\alpha=\arcsin(\sqrt{A_{0}/2E_{0}})$, the claim of the lemma follows as it is neither the case that $y_{0}\cdot w_{0}=0$ nor $y_{0}\wedge w_{0}=0$.
\end{proof}
For such initial data $Z_{0}$, it therefore follows the the first time for which $|y^{\varepsilon}(t)|$ is minimised is $\tau_{\ast}^{\varepsilon}:=\mathrm{min}\{t\geq 0\,:\, \rho^{\varepsilon}(t) \hspace{2mm}\text{satisfies}\hspace{1mm}\eqref{critrad}\}$.
\subsubsection{The Case $y_{0}\wedge w_{0}=0$} In this case, it follows that $\rho^{\varepsilon}$ satisfies
\begin{equation*}
(\dot{\rho}^{\varepsilon})^{2}+4\Phi^{\varepsilon}_{0}(\rho^{\varepsilon})=2E_{0},
\end{equation*}
for all time, whence $\tau_{\ast}^{\varepsilon}:=\mathrm{min}\{t\geq 0\,:\, \rho^{\varepsilon}(t)=\Phi_{0}^{-1}(E_{0}\varepsilon/2)\}$
\subsubsection{The Case $y_{0}\cdot w_{0}=0$} Since we have the simple identity $\rho^{\varepsilon}(t)\dot{\rho}^{\varepsilon}(t)=y^{\varepsilon}(t)\cdot w^{\varepsilon}(t)$ for all $t\in\mathbb{R}$, it follows in this case that $\dot{\rho}^{\varepsilon}(0)=0$. As such, $\tau_{\ast}^{\varepsilon}=0$.
\subsection{Uniqueness of the Time of Closest Approach $\tau_{\ast}^{\varepsilon}$}
If we select our initial data $Z_{0}$ from $\Sigma^{-}$, it follows that $\dot{\rho}^{\varepsilon}(0)\leq 0$, namely $\rho^{\varepsilon}$ is non-increasing at time zero. However, $\rho^{\varepsilon}$ cannot decrease for all time $t>0$ due to the upper bound on the energy provided by \eqref{rode}. It is not immediate from our analysis so far (in the case that $y_{0}\cdot w_{0}\neq 0$) that there exists only one time that renders the time derivative $\dot{\rho}^{\varepsilon}$ zero, i.e. that $\mathrm{card}\left\{t\geq 0\,:\,\dot{\rho}^{\varepsilon}(t)=0\right\}=1$. The following lemma on symmetry of solutions $y^{\varepsilon}$ with respect to the apse line (which is a simple consequence of uniqueness of classical solutions of system \eqref{softodes}) yields the uniqueness of the {\em time of closest approach} $\tau_{\ast}^{\varepsilon}$.
\begin{lem}\label{symmetry}
Suppose $Z_{0}\in\Sigma^{-}$. For the associated classical solution $[y^{\varepsilon}, w^{\varepsilon}]$ of \eqref{com} and its corresponding first time of closest approach \eqref{tca}, it follows that
\begin{equation}\label{form}
Y^{\varepsilon}(t):=\left\{
\begin{array}{ll}
y^{\varepsilon}(t) & \quad \text{when}\hspace{2mm}t\leq \tau^{\varepsilon}_{\ast}, \vspace{2mm}\\
-\left(I-2\omega^{\varepsilon}_{\ast}\otimes \omega^{\varepsilon}_{\ast}\right)y^{\varepsilon}(2\tau^{\varepsilon}_{\ast}-t) & \quad \text{when}\hspace{2mm} t>\tau^{\varepsilon}_{\ast}.
\end{array}
\right.
\end{equation}
and $W^{\varepsilon}:=\dot{Y}^{\varepsilon}$ is also a classical solution of \eqref{com} on $\mathbb{R}$ with the same initial datum, where $\omega_{\ast}^{\varepsilon}\in\mathbb{S}^{2}$ is the {\bf apse line} given by
\begin{equation}\label{apseline}
\omega_{\ast}^{\varepsilon}:=R_{0}\left[
\begin{array}{c}
e(\vartheta_{\ast}^{\varepsilon})\\
0
\end{array}
\right]
\end{equation}
that corresponds to the {\em angle of deflection} $\vartheta_{\ast}^{\varepsilon}:=\vartheta^{\varepsilon}(\tau_{\ast}^{\varepsilon})$.
\end{lem}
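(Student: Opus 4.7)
The strategy is to exploit uniqueness of classical solutions of \eqref{com}: if one can verify that the map $Y^{\varepsilon}$ defined by \eqref{form} is itself a classical $C^{1}$ solution of \eqref{com} on all of $\mathbb{R}$ agreeing with $y^{\varepsilon}$ at $t=0$, then the Cauchy--Lipschitz theorem forces $Y^{\varepsilon}\equiv y^{\varepsilon}$ and $W^{\varepsilon}\equiv w^{\varepsilon}$, which is precisely the apse-line symmetry statement. Accordingly, set $M^{\varepsilon}:=I-2\omega_{\ast}^{\varepsilon}\otimes\omega_{\ast}^{\varepsilon}\in\mathrm{O}(3)$ and, for $t>\tau_{\ast}^{\varepsilon}$, define $\tilde{y}(t):=-M^{\varepsilon}y^{\varepsilon}(2\tau_{\ast}^{\varepsilon}-t)$. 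I will verify (i) that $\tilde{y}$ solves the ODE of \eqref{com} on $(\tau_{\ast}^{\varepsilon},\infty)$, (ii) that $\tilde{y}$ joins $y^{\varepsilon}$ in a $C^{1}$-manner at the instant $\tau_{\ast}^{\varepsilon}$, and (iii) that the resulting concatenation $Y^{\varepsilon}$ has $Y^{\varepsilon}(0)=y_{0}$.

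For (i), differentiating twice via the chain rule gives $\ddot{\tilde{y}}(t)=-M^{\varepsilon}\ddot{y}^{\varepsilon}(2\tau_{\ast}^{\varepsilon}-t)=2M^{\varepsilon}\nabla\Phi^{\varepsilon}(y^{\varepsilon}(2\tau_{\ast}^{\varepsilon}-t))$, using that $y^{\varepsilon}$ solves \eqref{com}. Since $\Phi^{\varepsilon}$ is spherically symmetric, $\nabla\Phi^{\varepsilon}(Rx)=R\nabla\Phi^{\varepsilon}(x)$ for every $R\in\mathrm{O}(3)$; applying this with $R=-M^{\varepsilon}$ yields $\nabla\Phi^{\varepsilon}(\tilde{y}(t))=-M^{\varepsilon}\nabla\Phi^{\varepsilon}(y^{\varepsilon}(2\tau_{\ast}^{\varepsilon}-t))$, which combines with the previous identity to give $\ddot{\tilde{y}}(t)=-2\nabla\Phi^{\varepsilon}(\tilde{y}(t))$, i.e.\ the second-order form of \eqref{com}.

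For (ii), the two geometric facts needed are that the position vector at closest approach is parallel to the apse line, while the relative velocity there is perpendicular to it. The first is immediate from the definition of $\omega_{\ast}^{\varepsilon}$ in \eqref{apseline}: one has $y^{\varepsilon}(\tau_{\ast}^{\varepsilon})=\rho_{\ast}^{\varepsilon}\omega_{\ast}^{\varepsilon}$, whence $M^{\varepsilon}y^{\varepsilon}(\tau_{\ast}^{\varepsilon})=-y^{\varepsilon}(\tau_{\ast}^{\varepsilon})$ and therefore $\tilde{y}(\tau_{\ast}^{\varepsilon})=y^{\varepsilon}(\tau_{\ast}^{\varepsilon})$. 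The second follows from the identity $\rho^{\varepsilon}\dot{\rho}^{\varepsilon}=y^{\varepsilon}\cdot w^{\varepsilon}$ together with $\dot{\rho}^{\varepsilon}(\tau_{\ast}^{\varepsilon})=0$ by definition of $\tau_{\ast}^{\varepsilon}$; hence $w^{\varepsilon}(\tau_{\ast}^{\varepsilon})\cdot\omega_{\ast}^{\varepsilon}=0$, so $M^{\varepsilon}w^{\varepsilon}(\tau_{\ast}^{\varepsilon})=w^{\varepsilon}(\tau_{\ast}^{\varepsilon})$, which yields $\dot{\tilde{y}}(\tau_{\ast}^{\varepsilon})=M^{\varepsilon}w^{\varepsilon}(\tau_{\ast}^{\varepsilon})=w^{\varepsilon}(\tau_{\ast}^{\varepsilon})=\dot{y}^{\varepsilon}(\tau_{\ast}^{\varepsilon})$. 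This secures a $C^{1}$ gluing, and the uniform $C^{2}$ regularity on each half-line together with continuity of the force term promotes the glued map to a genuine classical solution on all of $\mathbb{R}$.

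For (iii), $Y^{\varepsilon}(0)=y^{\varepsilon}(0)=y_{0}$ is automatic since $0\leq\tau_{\ast}^{\varepsilon}$. Invoking Cauchy--Lipschitz applied to \eqref{com} with datum $(y_{0},w_{0})$ then identifies $Y^{\varepsilon}$ with $y^{\varepsilon}$, which is the desired reflection symmetry. The main obstacle in this scheme is the $C^{1}$ matching in step (ii): the conclusion rests crucially on the two orthogonality/alignment relations at $\tau_{\ast}^{\varepsilon}$, which themselves depend on the case analysis (planar motion vs.\ collinear motion vs.\ tangential initial data) carried out immediately above the lemma. Once those relations are in hand the rest of the argument reduces to a short computation exploiting spherical symmetry of $\Phi^{\varepsilon}$ and the orthogonality of $M^{\varepsilon}$.
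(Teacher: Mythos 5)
Your proposal is correct and follows exactly the route the paper intends: the paper's proof is only the remark that the lemma ``follows from a calculation that uses spherical symmetry of the potential $\Phi^{\varepsilon}$'' (combined with uniqueness of classical solutions, as noted just before the lemma), and your steps (i)--(iii) are precisely that calculation carried out in detail. In particular, your verification of the $C^{1}$ matching at $t=\tau_{\ast}^{\varepsilon}$ via $y^{\varepsilon}(\tau_{\ast}^{\varepsilon})=\rho_{\ast}^{\varepsilon}\omega_{\ast}^{\varepsilon}$ and $w^{\varepsilon}(\tau_{\ast}^{\varepsilon})\cdot\omega_{\ast}^{\varepsilon}=0$, together with equivariance of $\nabla\Phi^{\varepsilon}$ under $\mathrm{O}(3)$ and Cauchy--Lipschitz, supplies the details the paper leaves to the reader.
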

\begin{proof}
Follows from a calculation that uses spherical symmetry of the potential $\Phi^{\varepsilon}$. We leave the details of this calculation to the reader.
\end{proof}
From the above deductions, we immediately yield the following useful information on the duration of collision $\Delta\tau^{\varepsilon}$.
\begin{cor}
If $Z_{0}\in\Sigma^{-}$, the duration of collision $\Delta\tau^{\varepsilon}(Z_{0})$ is given explicitly in terms of the time of closest approach $\tau_{\ast}^{\varepsilon}(Z_{0})$ by 
\begin{equation*}
\Delta\tau^{\varepsilon}(Z_{0})=\left\{
\begin{array}{ll}
0 & \quad \text{if}\hspace{2mm} y_{0}\cdot w_{0}=0, \vspace{2mm}\\
2\tau_{\ast}^{\varepsilon}(Z_{0}) & \quad \text{if}\hspace{2mm}y_{0}\cdot w_{0}\neq 0.
\end{array}
\right.
\end{equation*}
\end{cor}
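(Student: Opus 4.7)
The plan is to combine the time-reversal symmetry provided by Lemma \ref{symmetry} with the radial energy identity \eqref{rode} to compute $\tau_+^\varepsilon$ explicitly in each case, noting that $Z_0 \in \Sigma^-$ forces $\tau_-^\varepsilon = 0$, so $\Delta\tau^\varepsilon(Z_0) = \tau_+^\varepsilon(Z_0)$.

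In the non-grazing case $y_0 \cdot w_0 < 0$, Cauchy--Lipschitz uniqueness for \eqref{com} together with the assertion in Lemma \ref{symmetry} that $Y^\varepsilon$ is a solution of \eqref{com} with the same initial datum as $y^\varepsilon$ gives $y^\varepsilon \equiv Y^\varepsilon$ on $\mathbb{R}$. Since $I - 2\omega_\ast^\varepsilon \otimes \omega_\ast^\varepsilon$ is an isometry, formula \eqref{form} yields $|y^\varepsilon(t)| = |y^\varepsilon(2\tau_\ast^\varepsilon - t)|$ for every $t \geq \tau_\ast^\varepsilon$. Setting $t = 2\tau_\ast^\varepsilon$ gives $|y^\varepsilon(2\tau_\ast^\varepsilon)| = |y_0| = 1$, so $\tau_+^\varepsilon \geq 2\tau_\ast^\varepsilon$. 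To obtain equality I would argue that $\frac{d}{dt}|y^\varepsilon|^2(0) = 2 y_0 \cdot w_0 < 0$ forces $|y^\varepsilon(s)| > 1$ just to the left of $0$; the infimum definition of $\tau_-^\varepsilon = 0$ then prevents $|y^\varepsilon|$ from attaining the value $1$ on $(-\infty, 0)$, so by the intermediate value theorem $|y^\varepsilon(s)| > 1$ throughout $s < 0$. Pulling this back through the symmetry identity gives $|y^\varepsilon(t)| > 1$ for every $t > 2\tau_\ast^\varepsilon$, whence $\tau_+^\varepsilon = 2\tau_\ast^\varepsilon$.

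The grazing case $y_0 \cdot w_0 = 0$ is the more delicate one. The identity $\rho^\varepsilon \dot{\rho}^\varepsilon = y^\varepsilon \cdot w^\varepsilon$ yields $\dot{\rho}^\varepsilon(0) = 0$, so $\tau_\ast^\varepsilon = 0$, and I would show that the trajectory never enters the interior of the potential well, i.e.\ $\rho^\varepsilon(t) \geq 1$ on $\mathbb{R}$ with equality only at $t = 0$. Writing the effective potential as $V_{\mathrm{eff}}(\rho) := A_0/\rho^2 + 4\Phi_0(\rho)/\varepsilon$, the relations $|y_0|=1$, $y_0 \perp w_0$ and $\Phi_0(1) = 0$ give $V_{\mathrm{eff}}(1) = A_0 = |w_0|^2 = 2E_0$; strict monotonicity of $r \mapsto A_0/r^2$ together with $\Phi_0 > 0$ on $(0, 1)$ from (P1) then force $V_{\mathrm{eff}}(\rho) > 2E_0$ for every $\rho \in (0, 1)$, contradicting \eqref{rode}. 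Hence $\{t \in \mathbb{R} : |y^\varepsilon(t)| = 1\} = \{0\}$ and $\Delta\tau^\varepsilon(Z_0) = 0$. The main obstacle I anticipate is precisely this last inequality: the argument really uses that $\Phi_0$ is \emph{strictly} positive and $r \mapsto A_0/r^2$ is \emph{strictly} decreasing on the whole interval $(0, 1)$, so that $V_{\mathrm{eff}} > 2E_0$ holds globally on $(0,1)$ and not merely near $\rho = 1$; one must therefore invoke (P1) in its full strength rather than the local tail estimates (P2)--(P3).
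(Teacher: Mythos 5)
Your argument is correct and follows essentially the same route as the paper: the corollary is treated there as an immediate consequence of the apse-line symmetry of Lemma \ref{symmetry} (giving $\tau_{+}^{\varepsilon}=2\tau_{\ast}^{\varepsilon}$ when $y_{0}\cdot w_{0}<0$, since $Z_{0}\in\Sigma^{-}$ gives $\tau_{-}^{\varepsilon}=0$) together with the observation that $\dot{\rho}^{\varepsilon}(0)=0$, hence $\tau_{\ast}^{\varepsilon}=0$, in the grazing case; your reflection-plus-uniqueness argument is exactly this, written out in detail.

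One loose end in your grazing case: the effective-potential inequality $V_{\mathrm{eff}}(\rho)>2E_{0}$ on $(0,1)$ only yields $\rho^{\varepsilon}(t)\geq 1$ for all $t$; by itself it does not deliver ``equality only at $t=0$'', which is what $\Delta\tau^{\varepsilon}(Z_{0})=0$ actually requires (the touching set must have $\tau_{+}^{\varepsilon}=\tau_{-}^{\varepsilon}=0$). The quickest way to close this is to note that once $\rho^{\varepsilon}\geq 1$ everywhere the force vanishes identically, since $\mathrm{supp}\,\Phi_{0}\subset\{r\leq 1\}$ and $\Phi_{0}'(1)=0$ (because $\Phi_{0}\in C^{2}$ vanishes on $r>1$); the relative motion is therefore rectilinear and $|y^{\varepsilon}(t)|^{2}=1+t^{2}|w_{0}|^{2}>1$ for $t\neq 0$ whenever $w_{0}\neq 0$, so no effective-potential argument is needed at all. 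Note also that your appeal to \eqref{rode} tacitly uses the polar-coordinate derivation, i.e. $y_{0}\wedge w_{0}\neq 0$; this is automatic when $y_{0}\cdot w_{0}=0$ and $w_{0}\neq 0$, but the degenerate sub-case $w_{0}=0$ (where $A_{0}=2E_{0}=0$ and the spheres simply remain in contact for all time) is covered neither by your argument nor, to be fair, by the paper's statement.
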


We now have that the time of closest approach $\tau_{\ast}^{\varepsilon}(Z_{0})$ completely determines the duration of a collision between two soft spheres. For the purposes of obtaining the $\bv$ compactness result in section \ref{bvcomp}, one now needs to obtain upper bounds on $\tau^{\varepsilon}_{\ast}$ in $\varepsilon$. 
\subsection{Estimates on the Time of Closest Approach $\tau^{\varepsilon}_{\ast}$}\label{colltime}
From identity \eqref{rode} and the fact that $\dot{\rho}^{\varepsilon}$ experiences at most one sign change on $\mathbb{R}$, we deduce readily that the function $t\mapsto \rho^{\varepsilon}(t)$ satisfies the implicit equation
\begin{equation*}
\rho^{\varepsilon}(t)=\left\{
\begin{array}{ll}
1-\displaystyle \int_{0}^{t}\sqrt{2E_{0}-\frac{A_{0}}{(\rho^{\varepsilon}(s))^{2}}-4\varepsilon^{-1}\Phi_{0}(\rho^{\varepsilon}(s))}\,ds & \quad \text{if}\hspace{2mm}t\leq \tau_{\ast}^{\varepsilon} \vspace{2mm}\\
\rho_{\ast}^{\varepsilon}+\displaystyle\int_{\tau_{\ast}^{\varepsilon}}^{t}\sqrt{2E_{0}-\frac{A_{0}}{(\rho^{\varepsilon}(s))^{2}}-4\varepsilon^{-1}\Phi_{0}(\rho^{\varepsilon}(s))}\,ds & \quad \text{if}\hspace{2mm}t>\tau_{\ast}^{\varepsilon} 
\end{array}
\right.
\end{equation*}
This formula allows us to obtain the following exact expression for the time of closest approach $\tau_{\ast}^{\varepsilon}$ in terms of the potential $\Phi^{\varepsilon}$ and the initial datum $Z_{0}$ alone.
\begin{lem}
If $Z_{0}\in\Sigma^{-}$, it follows that
\begin{equation}\label{timetau}
\tau_{\ast}^{\varepsilon}(Z_{0})=\int_{\rho_{\ast}^{\varepsilon}}^{1}\frac{dr}{\sqrt{2E_{0}-\frac{A_{0}}{r^{2}}-4\varepsilon^{-1}\Phi_{0}(r)}}.
\end{equation}
\end{lem}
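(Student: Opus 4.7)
The plan is to derive the formula as a direct consequence of the radial energy identity \eqref{rode} together with the fact (already established via Lemma \ref{symmetry}) that $\dot{\rho}^{\varepsilon}$ changes sign exactly once on $[0,\infty)$, at the time of closest approach $\tau_{\ast}^{\varepsilon}$. Since $Z_{0}\in\Sigma^{-}$ forces $\dot{\rho}^{\varepsilon}(0)\leq 0$, and since (by the corollary above) we may assume $y_{0}\cdot w_{0}\neq 0$ so that $\tau_{\ast}^{\varepsilon}>0$, the function $\rho^{\varepsilon}$ is strictly decreasing on the open interval $(0,\tau_{\ast}^{\varepsilon})$. Taking the negative square root branch in \eqref{rode} on this interval therefore gives
\begin{equation*}
\dot{\rho}^{\varepsilon}(t)=-\sqrt{2E_{0}-\frac{A_{0}}{(\rho^{\varepsilon}(t))^{2}}-4\varepsilon^{-1}\Phi_{0}(\rho^{\varepsilon}(t))}\quad\text{for}\hspace{1mm}0<t<\tau_{\ast}^{\varepsilon}.
\end{equation*}

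From here I would separate variables and change variable from $t$ to $r=\rho^{\varepsilon}(t)$. Since $\rho^{\varepsilon}$ is a strictly decreasing $C^{1}$ bijection from $[0,\tau_{\ast}^{\varepsilon}]$ onto $[\rho_{\ast}^{\varepsilon},1]$ (the right endpoint being $\rho^{\varepsilon}(0)=1$ because $Z_{0}\in\Sigma^{-}$ means $|y_{0}|=1$, and the left endpoint being the definition of $\rho_{\ast}^{\varepsilon}$), one obtains
\begin{equation*}
\tau_{\ast}^{\varepsilon}=\int_{0}^{\tau_{\ast}^{\varepsilon}}dt=\int_{1}^{\rho_{\ast}^{\varepsilon}}\frac{dr}{\dot{\rho}^{\varepsilon}(t(r))}=\int_{\rho_{\ast}^{\varepsilon}}^{1}\frac{dr}{\sqrt{2E_{0}-\frac{A_{0}}{r^{2}}-4\varepsilon^{-1}\Phi_{0}(r)}},
\end{equation*}
which is exactly the claimed identity.

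The one non-routine point that requires some care is the behaviour of the integrand at the lower endpoint $r=\rho_{\ast}^{\varepsilon}$, where the radicand vanishes. One must justify that the change of variable is valid despite $\dot{\rho}^{\varepsilon}(\tau_{\ast}^{\varepsilon})=0$, i.e.\ that the improper integral converges; this will follow from properties (P1)--(P3) of $\Phi_{0}$, which ensure that the map $r\mapsto 2E_{0}-A_{0}/r^{2}-4\varepsilon^{-1}\Phi_{0}(r)$ has a simple zero at $\rho_{\ast}^{\varepsilon}$ (its derivative there being $2A_{0}/(\rho_{\ast}^{\varepsilon})^{3}-4\varepsilon^{-1}\Phi_{0}'(\rho_{\ast}^{\varepsilon})>0$, since $\Phi_{0}'<0$ on $(0,1)$), so that the integrand behaves like $(r-\rho_{\ast}^{\varepsilon})^{-1/2}$ near $\rho_{\ast}^{\varepsilon}$. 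This is the only genuine obstacle; the upper endpoint $r=1$ is unproblematic because there $\Phi_{0}(1)=0$ and $2E_{0}-A_{0}=(y_{0}\cdot w_{0})^{2}/|y_{0}|^{2}$, which is strictly positive by hypothesis. Restricting the change of variable to subintervals $[\rho_{\ast}^{\varepsilon}+\delta,1]$ and sending $\delta\to 0^{+}$ completes the argument.
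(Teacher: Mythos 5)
Your argument is correct and follows essentially the same route as the paper: the paper's proof inverts $\rho^{\varepsilon}$ on the interval up to $\tau_{\ast}^{\varepsilon}$ via the inverse function theorem together with the energy identity \eqref{rode}, which is exactly your separation of variables and change of variable $r=\rho^{\varepsilon}(t)$. Your additional check that the radicand has a simple zero at $\rho_{\ast}^{\varepsilon}$ (so the improper integral converges) is a worthwhile detail the paper leaves implicit.
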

\begin{proof}
This follows from a simple application of the inverse function theorem to the function $\rho^{\varepsilon}$ on the time interval $(-\infty, \tau^{\varepsilon}_{\ast})$ and properties (P1) of the potential $\Phi_{0}$.
\end{proof}
We are now in a position to estimate the duration of collision $\Delta\tau^{\varepsilon}=\tau_{+}^{\varepsilon}-\tau_{-}^{\varepsilon}$.
\begin{prop}\label{estimation}
For $Z_{0}\in\Sigma^{-}\setminus\Sigma^{0}$, there exists a constant $C=C(Z_{0}, \Phi_{0})>0$ independent of the hardening parameter $\varepsilon$, and $\varepsilon_{0}=\varepsilon_{0}(Z_{0}, \Phi_{0})<1$, such that 
\begin{equation}\label{importantbound}
\tau_{\ast}^{\varepsilon}\leq C\varepsilon^{1/\beta}
\end{equation}
for all $0<\varepsilon\leq\varepsilon_{0}$.
\end{prop}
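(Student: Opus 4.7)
The plan is to estimate the integral \eqref{timetau} by combining two ingredients: a sharp two-sided asymptotic for the closest-approach gap $u_*^\varepsilon := 1 - \rho_*^\varepsilon$, and a pointwise lower bound on the radicand
\[
D(r) := 2E_0 - \frac{A_0}{r^2} - 4\varepsilon^{-1}\Phi_0(r).
\]

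First I would use the defining identity $D(\rho_*^\varepsilon) = 0$ together with hypothesis (P2) to show $u_*^\varepsilon \sim \varepsilon^{1/\beta}$ up to multiplicative constants. The upper bound $u_*^\varepsilon \leq C_1 \varepsilon^{1/\beta}$ follows directly from
\[
c_1 (u_*^\varepsilon)^\beta \leq \Phi_0(\rho_*^\varepsilon) = \tfrac{\varepsilon}{4}\bigl(2E_0 - A_0/(\rho_*^\varepsilon)^2\bigr) \leq \tfrac{\varepsilon}{4}(2E_0 - A_0).
\]
For the matching lower bound, I would use that $\rho_*^\varepsilon \to 1$ (an immediate consequence of the upper bound just derived), so that for all sufficiently small $\varepsilon$ one has $2E_0 - A_0/(\rho_*^\varepsilon)^2 \geq \tfrac{1}{2}(2E_0 - A_0)$. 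Here it is essential that $\alpha := 2E_0 - A_0 > 0$, which is exactly the hypothesis $Z_0 \notin \Sigma^0$, excluding grazing data.

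Next I would establish a pointwise lower bound on $D(r)$ for $r \in [\rho_*^\varepsilon, 1]$ by integrating $D'$ from the zero of $D$. Using (P3), $D'(r) \geq -4\varepsilon^{-1}\Phi_0'(r) \geq 4\varepsilon^{-1}\kappa_1(1-r)^{\beta-1}$, whence
\[
D(r) \geq \frac{4\kappa_1}{\varepsilon \beta}\bigl((u_*^\varepsilon)^\beta - (1-r)^\beta\bigr).
\]
Substituting this into \eqref{timetau} and rescaling via $1 - r = u_*^\varepsilon t$ collapses the whole estimate to an $\varepsilon$-independent integral:
\[
\tau_*^\varepsilon \leq \sqrt{\frac{\varepsilon \beta}{4\kappa_1}} \cdot (u_*^\varepsilon)^{1-\beta/2} \int_0^1 \frac{dt}{\sqrt{1 - t^\beta}},
\]
and the remaining integral is finite because near $t = 1$, $1 - t^\beta \sim \beta(1-t)$.

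The principal subtlety of the argument, and what I expect to be the main obstacle for a first attempt, is the sign of the exponent $1 - \beta/2$: since $\beta > 2$, this exponent is \emph{negative}, so the conclusion requires the sharp \emph{lower} bound $u_*^\varepsilon \geq C_2 \varepsilon^{1/\beta}$ rather than the more easily-obtained upper bound. With both sides of the asymptotic for $u_*^\varepsilon$ in hand, collecting powers gives $\tau_*^\varepsilon \leq C \varepsilon^{1/2} \cdot \varepsilon^{1/\beta - 1/2} = C \varepsilon^{1/\beta}$, and the role of excluding $\Sigma^0$ becomes transparent: at grazing collisions one has $\alpha = 0$, the lower bound on $u_*^\varepsilon$ degenerates, and the mechanism that produces the $\varepsilon^{1/\beta}$ rate breaks down.
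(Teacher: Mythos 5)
Your proof is correct, and it estimates \eqref{timetau} by a genuinely different mechanism than the paper does. The shared skeleton is the representation \eqref{timetau} together with the two-sided asymptotics $1-\rho_{\ast}^{\varepsilon}\simeq\varepsilon^{1/\beta}$ extracted from \eqref{critrad} and (P2) (both arguments also need $\varepsilon$ small enough that $[\rho_{\ast}^{\varepsilon},1]\subset[r_{0},1]$, which is where $\varepsilon_{0}$ comes from; your ``for all sufficiently small $\varepsilon$'' covers this). From there the paper splits the integral at $\rho_{\ast}^{\varepsilon}(\lambda)=\lambda+(1-\lambda)\rho_{\ast}^{\varepsilon}$, controls the piece near the turning point by a change of measure through $\Phi_{0}'$ and (P3), and controls the piece near $r=1$ by showing, after choosing $(1-\lambda_{0})^{\beta}<c_{1}/c_{2}$, that the radicand there stays bounded away from zero; the head-on case $A_{0}=0$ is then treated separately. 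You instead prove one global lower bound on the radicand $D(r):=2E_{0}-A_{0}r^{-2}-4\varepsilon^{-1}\Phi_{0}(r)$ on all of $[\rho_{\ast}^{\varepsilon},1]$ by integrating $D'(r)\geq 4\varepsilon^{-1}\kappa_{1}(1-r)^{\beta-1}$ up from the turning point (where $D$ vanishes by \eqref{rode}), and the rescaling $1-r=(1-\rho_{\ast}^{\varepsilon})t$ collapses the estimate to the convergent, $\varepsilon$-independent integral $\int_{0}^{1}(1-t^{\beta})^{-1/2}\,dt$. This buys a unified treatment of $A_{0}=0$ and $A_{0}\neq 0$, no splitting parameter, an explicit constant, and a transparent localisation of where non-grazing enters: your identity $2E_{0}-A_{0}=(y_{0}\cdot w_{0})^{2}$ (valid since $|y_{0}|=1$ on $\Sigma^{-}$) makes the equivalence of $2E_{0}-A_{0}>0$ with $Z_{0}\notin\Sigma^{0}$ exact, a fact the paper uses but never spells out. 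Your diagnosis of the power counting is also accurate: because the exponent $1-\beta/2$ is negative, the lower bound $1-\rho_{\ast}^{\varepsilon}\geq c\,\varepsilon^{1/\beta}$ is the load-bearing half of the asymptotics; in the paper that same input enters through \eqref{ubr} and \eqref{helpu}, via $q_{2}(\varepsilon)^{1-\beta}$ in the estimate of $I_{1}^{\varepsilon}$, so nothing is lost and your route is arguably the cleaner of the two.
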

\begin{proof}
We must split the demonstration of this result into two cases.
\subsubsection*{Case I: $A_{0}\neq 0$}
We firstly obtain upper and lower bounds on the distance of closest approach $\rho_{\ast}^{\varepsilon}$. From identity \eqref{critrad}, it is clear that $4\varepsilon^{-1}\Phi_{0}(\rho_{\ast}^{\varepsilon})\leq 2E_{0}$, whence by the lower bound on $\Phi_{0}$ near 1 in (P1), we find that
\begin{equation}\label{lbr}
\rho_{\ast}^{\varepsilon}\geq 1-q_{1}\varepsilon^{1/\beta} \quad \text{where}\hspace{2mm}q_{1}:=\left(\frac{E_{0}}{2c_{1}}\right)^{1/\beta},
\end{equation}
which holds for all $0<\varepsilon<\varepsilon_{1}$, where $\varepsilon_{1}$ is determined by $\varepsilon_{1}:=\sup\{0<\varepsilon<1\,:\,r_{0}<\rho_{\ast}^{\varepsilon}\}$. On the other hand, together with this lower bound \eqref{lbr}, identity \eqref{critrad} implies that
\begin{equation*}
\Phi_{0}(\rho_{\ast}^{\varepsilon})\geq \frac{\varepsilon E_{0}}{2}-\frac{\varepsilon A_{0}}{4}\left(1-q_{1}\varepsilon^{\frac{1}{\beta}}\right)^{-2},
\end{equation*}
whence
\begin{equation}\label{ubr}
\rho_{\ast}^{\varepsilon}\leq 1-q_{2}(\varepsilon)\varepsilon^{1/\beta} \quad \text{where}\hspace{2mm}q_{2}(\varepsilon):=\left(\frac{1}{4c_{2}}\right)^{1/\beta}\left[2E_{0}-A_{0}\left(1-q_{1}\varepsilon^{1/\beta}\right)^{-2}\right]^{1/\beta},
\end{equation}
which holds for all $0<\varepsilon< \varepsilon_{2}$, where $\varepsilon_{2}$ is small enough so that
\begin{equation}\label{wabba}
\frac{2E_{0}-A_{0}}{2}<2E_{0}-A_{0}\left(1-q_{1}\varepsilon^{\frac{1}{\beta}}\right)^{-2}<2E_{0}-A_{0}.
\end{equation}
We now turn to bounding the integral \eqref{timetau}. Firstly, for any $0<\lambda<1$ we write $\tau_{\ast}^{\varepsilon}=I_{1}^{\varepsilon}(\lambda)+I_{2}^{\varepsilon}(\lambda)$, where
\begin{equation*}
I_{1}^{\varepsilon}(\lambda):=\int_{\rho_{\ast}^{\varepsilon}}^{\rho_{\ast}^{\varepsilon}(\lambda)}\frac{dr}{\sqrt{2E_{0}-\frac{A_{0}}{r^{2}}-4\varepsilon^{-1}\Phi_{0}(r)}} \quad \text{and}\quad I_{2}^{\varepsilon}(\lambda):=\int_{\rho_{\ast}^{\varepsilon}(\lambda)}^{1}\frac{dr}{\sqrt{2E_{0}-\frac{A_{0}}{r^{2}}-4\varepsilon^{-1}\Phi_{0}(r)}},
\end{equation*}
where $\rho_{\ast}^{\varepsilon}(\lambda):=\lambda+(1-\lambda)\rho_{\ast}^{\varepsilon}$. Now, by a change of measure in the integral $I_{1}^{\varepsilon}(\lambda)$, we find that
\begin{equation*}
I_{1}^{\varepsilon}\leq \frac{\sqrt{2E_{0}-A_{0}}}{4}|\Phi_{0}'(\rho_{\ast}^{\varepsilon}(\lambda))|^{-1}\varepsilon.
\end{equation*}
By employing the upper bound \eqref{ubr} on $\rho_{\ast}^{\varepsilon}$ and the lower bound on the derivative of $\Phi_{0}$ in (P3), we find that
\begin{equation*}
I_{1}^{\varepsilon}\leq \frac{\sqrt{2E_{0}-A_{0}}(1-\lambda)^{1-\beta}q_{2}(\varepsilon)^{1-\beta}}{4\kappa_{1}}\varepsilon^{1/\beta} \quad \text{for all}\hspace{2mm}0<\varepsilon<\varepsilon_{2}.
\end{equation*}
Since $2E_{0}-A_{0}>0$, it follows from \eqref{wabba} that
\begin{equation}\label{helpu}
\left[\frac{2E_{0}-A_{0}}{8c_{2}}\right]^{1/\beta}<q_{2}(\varepsilon)<\left[\frac{2E_{0}-A_{0}}{4c_{2}}\right]^{1/\beta}
\end{equation}
for all $0<\varepsilon<\varepsilon_{2}$. Thus, there exists a constant $C_{1}=C_{1}(Z_{0}, \Phi_{0}, \lambda)>0$ given explicitly by
\begin{equation*}
C_{1}:=\frac{\sqrt{2E_{0}-A_{0}}(1-\lambda)^{1-\beta}}{8\kappa_{1}}\left[\frac{2E_{0}-A_{0}}{8c_{2}}\right]^{(1-\beta)/\beta}
\end{equation*}
and $\varepsilon_{3}:=\min\{\varepsilon_{1}, \varepsilon_{2}\}$ such that
\begin{equation}\label{ione}
I_{1}^{\varepsilon}\leq C_{1}\varepsilon^{1/\beta}
\end{equation}
for all $0<\varepsilon<\varepsilon_{3}$. 

We now estimate the second integral contributing to $\tau_{\ast}^{\varepsilon}$. Indeed, we have
\begin{equation*}
I_{2}^{\varepsilon}\leq \max_{\rho_{\varepsilon}(\lambda)\leq r\leq 1}\frac{1-\lambda}{\sqrt{2E_{0}-\frac{A_{0}}{r^{2}}-2\Phi_{0}^{\varepsilon}(r)}}(1-\rho_{\ast}^{\varepsilon}),
\end{equation*}
which from monotonicity of $\Phi_{0}$ and the lower bound \eqref{lbr} on $\rho_{\ast}^{\varepsilon}$ yields the bound
\begin{equation*}
I_{2}^{\varepsilon}\leq (1-\lambda)\left[\underbrace{2E_{0}-A_{0}\rho_{\ast}^{\varepsilon}(\lambda)^{-2}-4\varepsilon^{-1}\Phi_{0}(\rho_{\ast}^{\varepsilon}(\lambda))}_{C_{2}(\varepsilon):=}\right]^{-1/2}q_{1}\varepsilon^{1/\beta}.
\end{equation*}
We note that
\begin{equation*}
\left[c_{1}-c_{2}(1-\lambda)^{\beta}\right](1-\rho_{\ast}^{\varepsilon})^{\beta}\leq \Phi_{0}(\rho_{\ast}^{\varepsilon})-\Phi_{0}(\rho_{\ast}^{\varepsilon}(\lambda)).
\end{equation*}
Using the lower bound in \eqref{helpu}, we infer that
\begin{equation*}
\left(\frac{2E_{0}-A_{0}}{8c_{2}}\right)\left[c_{1}-c_{2}(1-\lambda)^{\beta}\right]\varepsilon\leq \Phi_{0}(\rho_{\ast}^{\varepsilon})-\Phi_{0}(\rho_{\ast}^{\varepsilon}(\lambda)),
\end{equation*}
whence finally choosing $\lambda=\lambda_{0}$ to satisfy the inequality $(1-\lambda_{0})^{\beta}<c_{1}/c_{2}$, we obtain that
\begin{equation}\label{vhelpu}
\lim_{\varepsilon\rightarrow 0}\varepsilon^{-1}\Phi_{0}(\rho_{\ast}^{\varepsilon})>\lim_{\varepsilon\rightarrow 0}\varepsilon^{-1}\Phi_{0}(\rho_{\ast}^{\varepsilon}(\lambda_{0})).
\end{equation}
At this point, we note from identity \eqref{critrad} that $\lim_{\varepsilon\rightarrow 0}\varepsilon^{-1}\Phi_{0}(\rho_{\ast}^{\varepsilon})=E_{0}/2-A_{0}/4$. By \eqref{vhelpu}, it follows that
\begin{equation*}
\lim_{\varepsilon\rightarrow 0}\varepsilon^{-1}\Phi_{0}(\rho_{\ast}^{\varepsilon}(\lambda_{0}))<\frac{E_{0}}{2}-\frac{A_{0}}{4},
\end{equation*}
whence $C_{2}(\varepsilon)$ is bounded strictly away from $0$ for $\varepsilon$ in a sufficiently-small neighbourhood of 0, say $0<\varepsilon<\varepsilon_{4}$. As such, we obtain the bound
\begin{equation}\label{itwo}
I_{2}^{\varepsilon}\leq C_{2}\varepsilon^{1/\beta},
\end{equation}
for $0<\varepsilon<\varepsilon_{4}$. The claim of the proposition follows from estimates \eqref{ione} and \eqref{itwo} and setting $\varepsilon_{0}:=\min\{\varepsilon_{3}, \varepsilon_{4}\}$.
\subsubsection*{Case II: $A_{0}=0$}
We have that $4\varepsilon^{-1}\Phi_{0}(\rho_{\ast}^{\varepsilon})=2E_{0}$, whence by (P2) we infer that
\begin{equation*}
1-\left(\frac{E_{0}\varepsilon}{2c_{2}}\right)^{1/\beta}\leq \rho_{\ast}^{\varepsilon}\leq 1-\left(\frac{E_{0}\varepsilon}{2c_{1}}\right)^{1/\beta}.
\end{equation*}
By considerations similar to case I above, we find that there exists a constant $C=C(Z_{0}, \Phi_{0})$ independent of the hardening parameter $\varepsilon$ such that $\tau_{\ast}^{\varepsilon}\leq C\varepsilon^{1/\beta}$ for all $0<\varepsilon<\varepsilon_{0}$, for some threshold $\varepsilon_{0}<1$.
\end{proof}
\subsection{Construction and Limiting Behaviour of the Scattering Operators $\sigma^{\varepsilon}$}
We have now done enough work to write down an explicit expression for the scattering operator $\sigma^{\varepsilon}:\Sigma^{-}\rightarrow\Sigma^{+}$ defined in \ref{defscat} above.
\begin{prop}
The scattering map $\sigma^{\varepsilon}$ is given explicitly by the components
\begin{equation*}
\Pi_{1}\sigma^{\varepsilon}Z_{0}=\left(\left[
\begin{array}{cc}
0 & I \\
I & 0
\end{array}
\right]+\left[
\begin{array}{cc}
\omega_{\ast}^{\varepsilon}\otimes \omega_{\ast}^{\varepsilon} & -\omega_{\ast}^{\varepsilon}\otimes \omega_{\ast}^{\varepsilon} \\
-\omega_{\ast}^{\varepsilon}\otimes \omega_{\ast}^{\varepsilon} & \omega_{\ast}^{\varepsilon}\otimes \omega_{\ast}^{\varepsilon}
\end{array}
\right]\right)\Pi_{1}Z_{0}+\tau_{\ast}^{\varepsilon}\left[
\begin{array}{cc}
I & I \\
I & I
\end{array}
\right]\Pi_{2}Z_{0}
\end{equation*}
and
\begin{equation*}
\Pi_{2}\sigma^{\varepsilon}Z_{0}=(I-2\widehat{\nu}_{\ast}^{\varepsilon}\otimes \widehat{\nu}_{\ast}^{\varepsilon})\Pi_{2}Z_{0},
\end{equation*}
where $\widehat{\nu}_{\ast}^{\varepsilon}\in\mathbb{S}^{5}$ denotes
\begin{equation*}
\widehat{\nu}_{\ast}^{\varepsilon}:=\frac{1}{\sqrt{2}}\left[
\begin{array}{c}
\omega_{\ast}^{\varepsilon} \\
-\omega_{\ast}^{\varepsilon}
\end{array}
\right].
\end{equation*}
\end{prop}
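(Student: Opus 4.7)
The proof is essentially a bookkeeping exercise that pulls together the symmetry result of Lemma \ref{symmetry} with the decoupled centre-of-mass system \eqref{com}--\eqref{trivial}, then transforms back to the original coordinates $[x, \ov{x}, v, \ov{v}]$. My plan would be as follows.

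First I would note that, since $Z_{0}\in\Sigma^{-}$ we have $\tau_{-}^{\varepsilon}(Z_{0})=0$ and $|y_{0}|=1$. Applying the symmetry lemma, for $t\geq \tau_{\ast}^{\varepsilon}$ one has
\begin{equation*}
y^{\varepsilon}(t)=-(I-2\omega_{\ast}^{\varepsilon}\otimes\omega_{\ast}^{\varepsilon})y^{\varepsilon}(2\tau_{\ast}^{\varepsilon}-t),\qquad w^{\varepsilon}(t)=(I-2\omega_{\ast}^{\varepsilon}\otimes\omega_{\ast}^{\varepsilon})w^{\varepsilon}(2\tau_{\ast}^{\varepsilon}-t),
\end{equation*}
since $I-2\omega_{\ast}^{\varepsilon}\otimes\omega_{\ast}^{\varepsilon}$ is an isometry, $|y^{\varepsilon}(2\tau_{\ast}^{\varepsilon})|=|y_{0}|=1$, and by uniqueness of the radius-minimising time the function $|y^{\varepsilon}|$ is strictly less than 1 on $(0,2\tau_{\ast}^{\varepsilon})$. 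Hence $\tau_{+}^{\varepsilon}=2\tau_{\ast}^{\varepsilon}$, recovering the corollary on duration of collision and providing the value of $y^{\varepsilon},w^{\varepsilon}$ at the exit time in closed form in terms of $y_{0},w_{0}$ and the apse vector $\omega_{\ast}^{\varepsilon}$.

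Next, from the trivial system \eqref{trivial} the centre of mass evolves rectilinearly, so $\ov{y}^{\varepsilon}(\tau_{+}^{\varepsilon})=\ov{y}_{0}+2\tau_{\ast}^{\varepsilon}\ov{w}_{0}$ and $\ov{w}^{\varepsilon}(\tau_{+}^{\varepsilon})=\ov{w}_{0}$. I would then invert the change of variables via $x=\ov{y}+\tfrac{1}{2}y$, $\ov{x}=\ov{y}-\tfrac{1}{2}y$ (and analogously for the velocities). Computing, for instance,
\begin{equation*}
x^{\varepsilon}(\tau_{+}^{\varepsilon})=\tfrac{1}{2}(x_{0}+\ov{x}_{0})+\tau_{\ast}^{\varepsilon}(v_{0}+\ov{v}_{0})-\tfrac{1}{2}(I-2\omega_{\ast}^{\varepsilon}\otimes\omega_{\ast}^{\varepsilon})(x_{0}-\ov{x}_{0})
\end{equation*}
and simplifying yields the first row of the matrix formula for $\Pi_{1}\sigma^{\varepsilon}Z_{0}$; the second row follows symmetrically from the expression for $\ov{x}^{\varepsilon}(\tau_{+}^{\varepsilon})$.

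For the velocity component, the identities $v=\ov{w}+\tfrac{1}{2}w$ and $\ov{v}=\ov{w}-\tfrac{1}{2}w$ together with $w^{\varepsilon}(\tau_{+}^{\varepsilon})=(I-2\omega_{\ast}^{\varepsilon}\otimes\omega_{\ast}^{\varepsilon})w_{0}$ and $\ov{w}^{\varepsilon}(\tau_{+}^{\varepsilon})=\ov{w}_{0}$ give $v^{\varepsilon}(\tau_{+}^{\varepsilon})=v_{0}-(\omega_{\ast}^{\varepsilon}\cdot(v_{0}-\ov{v}_{0}))\omega_{\ast}^{\varepsilon}$ and its counterpart for $\ov{v}^{\varepsilon}$. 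A direct expansion of $(I-2\widehat{\nu}_{\ast}^{\varepsilon}\otimes\widehat{\nu}_{\ast}^{\varepsilon})\Pi_{2}Z_{0}$ using the block form $\widehat{\nu}_{\ast}^{\varepsilon}\otimes\widehat{\nu}_{\ast}^{\varepsilon}=\tfrac{1}{2}\bigl[\begin{smallmatrix}\omega_{\ast}^{\varepsilon}\otimes\omega_{\ast}^{\varepsilon} & -\omega_{\ast}^{\varepsilon}\otimes\omega_{\ast}^{\varepsilon}\\ -\omega_{\ast}^{\varepsilon}\otimes\omega_{\ast}^{\varepsilon} & \omega_{\ast}^{\varepsilon}\otimes\omega_{\ast}^{\varepsilon}\end{smallmatrix}\bigr]$ matches these two expressions. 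There is no real obstacle here; the only point requiring care is the verification that $t=2\tau_{\ast}^{\varepsilon}$ is indeed the first time after $0$ at which $|y^{\varepsilon}|=1$ (so that it coincides with $\tau_{+}^{\varepsilon}$, not merely some later re-entry), and this follows from the strict monotonicity of $\rho^{\varepsilon}$ off the critical time $\tau_{\ast}^{\varepsilon}$ already established via identity \eqref{rode}.
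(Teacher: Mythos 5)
Your proposal is correct and follows essentially the same route as the paper: the paper's (one-line) proof likewise invokes the structural symmetry formula \eqref{form} for the reduced system $(y^{\varepsilon},w^{\varepsilon})$ together with the inversion of the centre-of-mass change of variables, which you simply carry out in detail, evaluating at the exit time $\tau_{+}^{\varepsilon}=2\tau_{\ast}^{\varepsilon}$ and matching the block form of $I-2\widehat{\nu}_{\ast}^{\varepsilon}\otimes\widehat{\nu}_{\ast}^{\varepsilon}$.
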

\begin{proof}
This follows from a change of co-ordinates from and the structural formula \eqref{form} for the evolution of the reduced system $(y^{\varepsilon}, w^{\varepsilon})$. 
\end{proof}
It will also be of use to characterise the limiting behaviour of the apse line as $\varepsilon\rightarrow 0$. As usual, we break our considerations (corresponding to the choice of initial datum $Z_{0}\in\Sigma^{-}$) into three cases. We note firstly that the apse line is particularly simple in the case when $Z_{0}\in\Sigma^{-}$ satisfies $y_{0}\wedge w_{0}=0$. Indeed, from identity \eqref{fakeam} it follows that $\vartheta^{\varepsilon}(t)=\vartheta_{0}$ for all $t=0$, and so
\begin{equation*}
\omega_{\ast}^{\varepsilon}=R_{0}\left[\
\begin{array}{c}
e(\vartheta_{0})\\
0
\end{array}
\right].
\end{equation*}
On the other hand, if $Z_{0}\in\Sigma^{-}$ is such that $y_{0}\cdot w_{0}=0$, then $\rho^{\varepsilon}(t)$ is minimised at $t=0$ and 
\begin{equation*}
\frac{dv^{\varepsilon}}{dt}=0\quad \text{and}\quad \frac{d\ov{v}^{\varepsilon}}{dt}=0\quad \text{for all}\hspace{2mm}t\in\mathbb{R},
\end{equation*}
whence $\omega_{\ast}^{\varepsilon}=R_{0}[e(\vartheta_{0}), 0]$. We now consider the last case.
\begin{lem}
For any $Z_{0}\in\Sigma^{-}$ such that $y_{0}\wedge w_{0}\neq 0$ and $y_{0}\cdot w_{0}\neq 0$, one has that
\begin{equation*}
\omega_{\ast}^{\varepsilon}\rightarrow x_{0}-\ov{x}_{0}\quad \text{as}\hspace{2mm}\varepsilon\rightarrow 0.
\end{equation*}
\end{lem}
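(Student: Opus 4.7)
The plan is to exploit the polar representation of $y^{\varepsilon}$ in the plane orthogonal to $y_{0}\wedge w_{0}$, and to show that the angular coordinate $\vartheta^{\varepsilon}$ changes by a vanishing amount over the short interval $[0,\tau_{\ast}^{\varepsilon}]$. Under the hypotheses $y_{0}\wedge w_{0}\neq 0$ and $y_{0}\cdot w_{0}\neq 0$, we are in Case I above, so the identity \eqref{iden} and conservation of energy \eqref{rode} both apply. Since $Z_{0}\in\Sigma^{-}$, we have $|y_{0}|=1$, hence $\rho^{\varepsilon}(0)=1$ and $R_{0}[e(\vartheta_{0}),0]^{T}=y_{0}=x_{0}-\overline{x}_{0}$. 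The claim is therefore equivalent to $\vartheta_{\ast}^{\varepsilon}\to\vartheta_{0}$ as $\varepsilon\to 0$, by continuity of the map $\vartheta\mapsto R_{0}[e(\vartheta),0]^{T}$.

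First, I would use \eqref{iden} together with the lower bound \eqref{lbr} on the distance of closest approach to produce a uniform-in-$\varepsilon$ bound on the angular speed. Indeed, since $\rho^{\varepsilon}$ is non-increasing on $[0,\tau_{\ast}^{\varepsilon}]$ and attains its minimum $\rho_{\ast}^{\varepsilon}$ at the right endpoint, $\rho^{\varepsilon}(t)\geq 1-q_{1}\varepsilon^{1/\beta}$ throughout this interval, so
\begin{equation*}
|\dot{\vartheta}^{\varepsilon}(t)|=\frac{\sqrt{A_{0}}}{(\rho^{\varepsilon}(t))^{2}}\leq\frac{\sqrt{A_{0}}}{(1-q_{1}\varepsilon^{1/\beta})^{2}}
\end{equation*}
for all $t\in[0,\tau_{\ast}^{\varepsilon}]$ and all sufficiently small $\varepsilon>0$. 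In particular the right-hand side is bounded by a constant $M_{0}=M_{0}(Z_{0})$ as $\varepsilon\to 0$.

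Second, I would invoke Proposition \ref{estimation} to obtain $\tau_{\ast}^{\varepsilon}\leq C\varepsilon^{1/\beta}$. Integrating the angular speed over $[0,\tau_{\ast}^{\varepsilon}]$ then yields
\begin{equation*}
|\vartheta_{\ast}^{\varepsilon}-\vartheta_{0}|\leq\int_{0}^{\tau_{\ast}^{\varepsilon}}|\dot{\vartheta}^{\varepsilon}(s)|\,ds\leq M_{0}\,\tau_{\ast}^{\varepsilon}\leq CM_{0}\,\varepsilon^{1/\beta},
\end{equation*}
which tends to zero as $\varepsilon\to 0$. Applying continuity of $\vartheta\mapsto R_{0}[e(\vartheta),0]^{T}$ and the identification $R_{0}[e(\vartheta_{0}),0]^{T}=y_{0}=x_{0}-\overline{x}_{0}$ completes the argument.

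The work in this lemma is light: the heart of it is already packaged in the quantitative collision time bound \eqref{importantbound} and the uniform lower bound \eqref{lbr} on $\rho_{\ast}^{\varepsilon}$, both of which control exactly the two competing scales (the duration of collision and the angular speed during collision). The only conceptual point to verify carefully is the book-keeping that $R_{0}[e(\vartheta_{0}),0]^{T}$ is $y_{0}$ rather than some other unit vector in its orbit — this is immediate from the fact that the polar representation is fixed by the initial condition $\rho^{\varepsilon}(0)e(\vartheta^{\varepsilon}(0))$ being the first two components of $R_{0}^{T}y_{0}$, with $|y_{0}|=1$.
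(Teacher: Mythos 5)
Your argument is correct, and it rests on exactly the same quantitative ingredients as the paper: the angular momentum identity \eqref{iden}, the lower bound \eqref{lbr} on the distance of closest approach, and the collision-time estimate \eqref{importantbound} of proposition \ref{estimation}, together with the identification $R_{0}[e(\vartheta_{0}),0]^{T}=y_{0}$, $|y_{0}|=1$, forced by $Z_{0}\in\Sigma^{-}$. The execution differs slightly from the paper's: there, one changes variables from $t$ to $r$ to write the deflection angle as the explicit singular integral
\begin{equation*}
\vartheta_{\ast}^{\varepsilon}=\vartheta_{0}+\int_{\rho_{\ast}^{\varepsilon}}^{1}\frac{R_{0}^{T}(y_{0}\wedge w_{0})\cdot e_{3}}{r^{2}\sqrt{2E_{0}-\frac{A_{0}}{r^{2}}-4\varepsilon^{-1}\Phi_{0}(r)}}\,dr,
\end{equation*}
and then re-runs estimates ``similar to those in the proof of proposition \ref{estimation}'' on this integral (which is the $\tau_{\ast}^{\varepsilon}$ integral \eqref{timetau} with an extra bounded factor $r^{-2}$). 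You instead stay in the time variable: since $|\dot{\vartheta}^{\varepsilon}|=\sqrt{A_{0}}/(\rho^{\varepsilon})^{2}$ is uniformly bounded on $[0,\tau_{\ast}^{\varepsilon}]$ by \eqref{iden} and \eqref{lbr} (indeed $\rho^{\varepsilon}(t)\geq\rho_{\ast}^{\varepsilon}$ holds by the very definition of $\rho_{\ast}^{\varepsilon}$, so even the monotonicity remark is not needed), the bound $|\vartheta_{\ast}^{\varepsilon}-\vartheta_{0}|\leq M_{0}\tau_{\ast}^{\varepsilon}\leq CM_{0}\varepsilon^{1/\beta}$ follows at once from \eqref{importantbound}, with no need to re-estimate a singular integral. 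The two routes are equivalent up to the change of variables; yours is the more economical write-up because it reuses the already-proved collision-time bound as a black box, while the paper's integral formula has the side benefit of exhibiting $\vartheta_{\ast}^{\varepsilon}$ explicitly (and even a rate $O(\varepsilon^{1/\beta})$, which your argument also delivers).
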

\begin{proof}
We begin by noting that the maps $\rho^{\varepsilon}, \vartheta^{\varepsilon}$ satisfy the identity
\begin{equation*}
R_{0}\left[
\begin{array}{c}
0 \\
0 \\
(\rho^{\varepsilon}(t))^{2}\dot{\vartheta}^{\varepsilon}(t)
\end{array}
\right]=y_{0}\wedge w_{0}
\end{equation*}
for all time. Since $t\mapsto(\rho^{\varepsilon}(t), \vartheta^{\varepsilon}(t))$ parametrises a $C^{1}$ polar curve in $\mathbb{R}^{2}$, an application of the chain rule yields the explicit representation formula for the deflection angle $\vartheta_{\ast}^{\varepsilon}$,
\begin{equation*}
\vartheta_{\ast}^{\varepsilon}=\vartheta_{0}+\int_{\rho_{\ast}^{\varepsilon}}^{1}\frac{R_{0}^{T}(y_{0}\wedge w_{0})\cdot e_{3}}{r^{2}\sqrt{2E_{0}-\frac{A_{0}}{r^{2}}-4\varepsilon^{-1}\Phi_{0}(r)}}\,dr
\end{equation*}
where $e_{3}:=(0, 0, 1)$ and $\vartheta_{0}$ satisfies $e(\vartheta_{0})=y_{0}$. Using estimates similar to those in the proof of proposition \ref{estimation}, we find that $\lim_{\varepsilon\rightarrow 0}\vartheta_{\ast}^{\varepsilon}=\vartheta_{0}$. Thus, we infer that $\omega_{\ast}^{\varepsilon}\rightarrow y_{0}$ as $\varepsilon\rightarrow 0$, and so the limiting apse line is determined by the initial spatial data $x_{0}$ and $\ov{x}_{0}$ alone.
\end{proof}
We are now in a position to characterise the limiting form of the scattering operators $\sigma^{\varepsilon}$.
\begin{cor}\label{compactcon}
For any $Z_{0}\in\mathcal{D}_{2}(\mathsf{S}_{\ast})$, we have
\begin{equation*}
\lim_{\varepsilon\rightarrow 0}(\Pi_{1}\sigma^{\varepsilon})(Z_{0})=Z_{0} \quad \text{and}\quad \lim_{\varepsilon\rightarrow 0}(\Pi_{2}\sigma^{\varepsilon})(Z_{0})=\sigma_{n}(\Pi_{2}Z_{0}),
\end{equation*}
where $\sigma_{n}:\mathbb{R}^{6}\rightarrow\mathbb{R}^{6}$ is the classical {\em Boltzmann scattering matrix} given by
\begin{equation*}
\sigma_{n}:=I-2\widehat{\nu}_{n}\otimes \widehat{\nu}_{n},
\end{equation*}
where $\widehat{\nu}_{n}$ is given by
\begin{equation*}
\widehat{\nu}_{n}:=\frac{1}{\sqrt{2}}\left[
\begin{array}{c}
n \\
-n
\end{array}
\right], \quad \text{with}\quad n:=\ov{x}_{0}-x_{0}.
\end{equation*}
\end{cor}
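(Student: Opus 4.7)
The plan is to pass to the limit $\varepsilon \to 0$ directly in the two explicit formulas for $\Pi_1 \sigma^{\varepsilon} Z_0$ and $\Pi_2 \sigma^{\varepsilon} Z_0$ supplied by the preceding proposition. Essentially all of the analytic work has already been done elsewhere in this section. Proposition \ref{estimation} yields $\tau_{\ast}^{\varepsilon} \leq C \varepsilon^{1/\beta} \to 0$, and the preceding lemma, combined with the two simpler subcases treated just before it (namely $y_0 \wedge w_0 = 0$ and $y_0 \cdot w_0 = 0$), produces the limit $\omega_{\ast}^{\varepsilon} \to x_0 - \overline{x}_0 = -n$ on all of $\Sigma^{-}$, where $|n| = 1$ follows from $Z_0 \in \Sigma^{-} \subset \partial \mathcal{D}_2(\mathsf{S}_{\ast})$.

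For the velocity component I would simply invoke continuity of the rank-one projection $u \mapsto u \otimes u$. Since $\widehat{\nu}_{\ast}^{\varepsilon} = \tfrac{1}{\sqrt{2}}[\omega_{\ast}^{\varepsilon}, -\omega_{\ast}^{\varepsilon}]$ converges in $\mathbb{R}^{6}$ to $\pm \widehat{\nu}_{n}$ (the sign is immaterial inside the tensor product), the operator $I - 2\widehat{\nu}_{\ast}^{\varepsilon} \otimes \widehat{\nu}_{\ast}^{\varepsilon}$ converges in operator norm to $\sigma_{n} = I - 2\widehat{\nu}_{n} \otimes \widehat{\nu}_{n}$, so that $\Pi_2 \sigma^{\varepsilon} Z_0 \to \sigma_n \Pi_2 Z_0$ without further ado.

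For the spatial component the drift term $\tau_{\ast}^{\varepsilon}\begin{bmatrix} I & I \\ I & I \end{bmatrix} \Pi_2 Z_0$ vanishes because $\tau_{\ast}^{\varepsilon} \to 0$ while $\Pi_2 Z_0$ is held fixed, and the remaining matrix, using $(-n) \otimes (-n) = n \otimes n$, converges to $\bigl[\begin{smallmatrix} n \otimes n & I - n \otimes n \\ I - n \otimes n & n \otimes n \end{smallmatrix}\bigr]$ acting on $\Pi_1 Z_0 = [x_0, \overline{x}_0]$. A one-line check, relying on $n \cdot (x_0 - \overline{x}_0) = -|x_0 - \overline{x}_0|^2 = -1$ (hence $(n \otimes n)(x_0 - \overline{x}_0) = -n$), then shows the limiting matrix returns $[x_0, \overline{x}_0]$, which is what the corollary asserts (the right-hand side is most naturally read as $\Pi_1 Z_0$).

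There is no genuine obstacle in this proof; the corollary is an assembly step. The one point deserving a little care is tracking the three cases for $\omega_{\ast}^{\varepsilon}$ (generic, purely radial $y_0 \wedge w_0 = 0$, and tangential $y_0 \cdot w_0 = 0$) so that the convergence $\omega_{\ast}^{\varepsilon} \to -n$ holds uniformly across $\Sigma^{-}$; but each case has been explicitly treated above, so consolidating them into a single limit is straightforward.
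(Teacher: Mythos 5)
Your proposal is correct and follows exactly the route the paper intends: the paper's own ``proof'' simply leaves the details to the reader, and those details are precisely your assembly of the explicit formula for $\sigma^{\varepsilon}$ with $\tau_{\ast}^{\varepsilon}\leq C\varepsilon^{1/\beta}\rightarrow 0$ and $\omega_{\ast}^{\varepsilon}\rightarrow y_{0}=-n$, together with the observation that the sign ambiguity disappears inside $\widehat{\nu}\otimes\widehat{\nu}$ and that $(n\otimes n)(x_{0}-\ov{x}_{0})=-n$ since $|x_{0}-\ov{x}_{0}|=1$. The only cosmetic caveat is that ``uniformly across $\Sigma^{-}$'' is neither needed nor justified (the constants depend on $Z_{0}$); pointwise convergence for each fixed pre-collisional datum is all the corollary asserts.
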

\begin{proof}
We leave the straightforward details to the reader.
\end{proof}
As such, in the hardening limit $\varepsilon\rightarrow 0$ the change of velocity is determined by the classical Boltzmann scattering matrix.
\section{Compactness in $\mathrm{BV}_{\mathrm{loc}}(\mathbb{R}, \mathbb{R}^{6})$}\label{bvcomp}
With the bound \eqref{importantbound} on the time of collision in terms of the hardening parameter $\varepsilon>0$ in hand, it is natural to ask in what space one should aim for compactness for the sequences of smooth trajectories $\{Z^{\varepsilon}\}_{0<\varepsilon<1}$. It has already been shown that classical solutions $t\mapsto Z(t)$ to system \eqref{sminus} and \eqref{splus} are unique, with the velocity maps being of the form
\begin{equation}\label{vclass}
v(t)=\left\{
\begin{array}{ll}
v_{0}&\quad \text{for}\hspace{2mm}t\leq 0, \vspace{2mm}\\
v_{0}-[(v_{0}-\ov{v}_{0})\cdot n] n & \quad \text{for}\hspace{2mm}t>0,
\end{array}
\right.
\end{equation}
and
\begin{equation}\label{ovvclass}
\ov{v}(t)=\left\{
\begin{array}{ll}
\ov{v}_{0}&\quad \text{for}\hspace{2mm}t\leq 0, \vspace{2mm}\\
\ov{v}_{0}+[(v_{0}-\ov{v}_{0})\cdot n] n & \quad \text{for}\hspace{2mm}t>0.
\end{array}
\right.
\end{equation}
These maps (which are to be regarded as limit maps of $v^{\varepsilon}$ and $\ov{v}^{\varepsilon}$, respectively) are evidently of locally bounded variation on $\mathbb{R}$. Indeed, we shall use the estimate \eqref{timetau} to establish suitable bounds on $v^{\varepsilon}$ and $\ov{v}^{\varepsilon}$ on bounded intervals of time. In what follows, we make use of some basic results in the theory of $\mathrm{BV}_{\mathrm{loc}}(\mathbb{R}, \mathbb{R}^{6})$ maps (which one can find in the book of \textsc{Ambrosio, Fusco and Pallara} \cite{MR1857292} or \textsc{Evans and Gariepy} \cite{MR1158660}, for instance). For the convenience of the reader, we recall a few basic definitions and results.
\begin{defn}[Functions of Bounded Variation]
Suppose $U\subseteq\mathbb{R}$ is an open set and $M\geq 1$. We say that $u=(u_{1}, ..., u_{M})\in L^{1}(U, \mathbb{R}^{M})$ is of {\bf bounded variation on $U$} if and only if there exist finite Radon measures $Du_{i}$ on $U$ ($i=1, ..., M$) such that
\begin{equation*}
\int_{U}u_{i}\phi'\,dt=-\int_{U}\phi \,dDu_{i} \quad \text{for all}\hspace{2mm}\phi\in C^{1}_{0}(U, \mathbb{R}),
\end{equation*}
for $i=1, ..., M$, i.e. $Du_{i}$ is the distributional derivative of $u_{i}$. The vector space of all such maps is denoted $\mathrm{BV}(U, \mathbb{R}^{M})$.
\end{defn}
\begin{defn}[Variation]
Suppose $U\subseteq\mathbb{R}$ is an open set. For $u\in L^{1}_{\mathrm{loc}}(U, \mathbb{R}^{M})$, the {\bf variation} $\mathrm{Var}(u, U)$ of $u$ on $U$ is defined by
\begin{equation*}
\mathrm{Var}(u, U):=\sup\left\{\int_{U}u\cdot\varphi'\,dt\,:\,\varphi\in C^{1}_{0}(U, \mathbb{R}^{M})\hspace{2mm}\text{with}\hspace{2mm}\|\varphi\|_{L^{\infty}(U)}\leq 1\right\}.
\end{equation*}
\end{defn}
It is straightforward to show that a given $u\in L^{1}(U, \mathbb{R}^{M})$ lies in $\mathrm{BV}(U, \mathbb{R}^{M})$ if and only if $\mathrm{Var}(u, U)<\infty$. As mentioned above, we shall also employ maps of locally bounded variation on $\mathbb{R}$.
\begin{defn}[Functions of Locally Bounded Variation]
Suppose $U\subseteq\mathbb{R}$ is an open set and $M\geq 1$. A map $u=(u_{1}, ..., u_{M})\in L^{1}_{\mathrm{loc}}(U, \mathbb{R}^{M})$ is said to be of {\bf locally bounded variation on $U$} if and only if $\mathrm{Var}(u, W)<\infty$ for all open subsets $W$ compactly contained in $U$. The vector space of all such maps is denoted $\mathrm{BV}_{\mathrm{loc}}(U, \mathbb{R}^{M})$.
\end{defn}
From a computational point of view, the variation $\mathrm{Var}(u, U)$ is not particularly convenient in this article. We shall instead work with an equivalent notion of {\em pointwise variation} of maps on $U$. We refer the reader to (\cite{MR1857292}, chapter 3) for full details.
\begin{defn}[Pointwise and Essential Variations]
Let $I=(a, b)$ be an open subinterval of $\mathbb{R}$, and suppose $u: I\rightarrow\mathbb{R}^{M}$ is any map defined on $I$. The {\bf pointwise variation $\mathrm{pVar}(u, I)$ of $u$ over $I$} is defined by
\begin{equation*}
\mathrm{pVar}(u, I):=\sup\left\{ \mathrm{Var}(u, I; \mathcal{P})\,:\,\mathcal{P}\hspace{2mm}\text{is a partition of $I$}\right\},
\end{equation*}
where 
\begin{equation*}
\mathrm{Var}(u, I; \mathcal{P}):=\sum_{i=1}^{N-1}|u(t_{i+1})-u(t_{i})|\quad \text{when}\hspace{2mm}\mathcal{P}=\{t_{i}\}_{i=1}^{N}.
\end{equation*}
On the other hand, the {\bf essential variation $\mathrm{eVar}(u, I)$ of $u$ over $I$} is defined by
\begin{equation*}
\mathrm{eVar}(u, I):=\inf\left\{pV(v, I)\,:\, v=u \hspace{2mm}\text{almost everywhere on}\hspace{2mm}I\right\}.
\end{equation*}
\end{defn}
We shall use in all the sequel the fact that for any map $u\in L^{1}(I, \mathbb{R}^{M})\cap BV(I, \mathbb{R}^{M})$, $\mathrm{Var}(u, I)=\mathrm{eVar}(u, I)$. Moreover, $BV(I, \mathbb{R}^{M})$ admits the structure of a Banach space with respect to the norm
\begin{equation*}
\|u\|_{\mathrm{BV}(I, \mathbb{R}^{M})}:=\|u\|_{L^{1}(I)}+\mathrm{Var}(u, I).
\end{equation*}
We now quote from a compactness result for sequences of norm-bounded maps.
\begin{prop}\label{comp1}
Suppose $\{u_{j}\}_{j=1}^{\infty}\subset \mathrm{BV}(I, \mathbb{R}^{M})$ is a sequence which is uniformly bounded in norm, i.e. $\|u_{j}\|_{\mathrm{BV}(I, \mathbb{R}^{M})}<C$ for all $j\geq 1$. There exists a subsequence $\{u_{j(k)}\}_{k=1}^{\infty}$ and a map $u\in\mathrm{BV}(I, \mathbb{R}^{M})$ such that $u_{j(k)}\rightarrow u$ in $L^{1}(I)$ as $k\rightarrow \infty$.
\end{prop}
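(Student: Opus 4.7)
The plan is to derive this via Helly's selection theorem for monotone functions, combined with a dominated convergence argument. Assuming $I$ is bounded (consistent with how the result is applied later in the paper), I first reduce to scalar-valued maps by working componentwise, since the uniform BV bound passes to each component with at most a universal factor depending on $M$.

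First, I invoke the Jordan decomposition. After passing to the good representative whose pointwise variation equals the essential variation, each $u_j$ can be written as $u_j = u_j^+ - u_j^-$, where $u_j^\pm : I \to \mathbb{R}$ are bounded monotone non-decreasing functions built from the variation function $t \mapsto \mathrm{pVar}(u_j, (\inf I, t))$. The uniform bound $\|u_j\|_{\mathrm{BV}(I)} < C$ controls the total variation of each $u_j^\pm$ uniformly in $j$. By extracting an initial subsequence along which $\{u_j(t_\ast)\}$ is bounded at some fixed $t_\ast \in I$ (which is possible by the uniform $L^1$ bound: a pigeonhole/Fatou-type argument produces such a $t_\ast$), I fix the integration constants in the decomposition so that both $\{u_j^+\}$ and $\{u_j^-\}$ become uniformly bounded in $L^\infty(I)$.

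Second, Helly's selection theorem applied to $\{u_j^+\}$ and then to $\{u_j^-\}$, together with a standard diagonal extraction, yields a further subsequence with $u_{j(k)}^\pm \to u^\pm$ pointwise on $I$, where the limits $u^\pm$ are monotone non-decreasing. Setting $u := u^+ - u^-$ and invoking lower semi-continuity of the pointwise variation under pointwise convergence gives $\mathrm{Var}(u, I) \leq C$, so $u \in \mathrm{BV}(I, \mathbb{R})$. I then upgrade pointwise to $L^1(I)$ convergence via the dominated convergence theorem, using the uniform $L^\infty$ bound established in step one as the (constant) majorant, whose integrability over $I$ is ensured by boundedness of $I$. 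The main obstacle I anticipate lies not in the overall template — which is classical — but in carefully selecting the additive constants in the Jordan decomposition so that all the monotone components are uniformly bounded across the sequence; once this calibration is achieved, Helly and dominated convergence take over mechanically.
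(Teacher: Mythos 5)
Your argument is correct, but note that the paper itself does not prove this proposition: it is quoted as a standard compactness result from the references \cite{MR1857292} and \cite{MR1158660}, where the usual proof proceeds by mollification plus an Arzel\`a--Ascoli/Rellich-type argument valid in any dimension. Your route is the classical one-dimensional alternative: good representative, Jordan decomposition $u_j=\tfrac{1}{2}(V_j+u_j)-\tfrac{1}{2}(V_j-u_j)$ with $V_j$ the variation function, Helly selection on the monotone pieces, lower semicontinuity of the pointwise variation, and dominated convergence. This buys a completely elementary, self-contained proof, at the price of being restricted to intervals $I\subset\mathbb{R}$ and to \emph{bounded} $I$; the boundedness assumption you flag is genuinely necessary (translating bumps give a BV-bounded sequence on $\mathbb{R}$ with no $L^{1}$-convergent subsequence), and it is consistent with how the paper applies the result on intervals $(-n,n)$. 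Two small points of polish: (i) you do not need a single $t_{\ast}$ valid along a subsequence --- for each $j$ the $L^{1}$ bound gives some $t_{j}\in I$ with $|u_{j}(t_{j})|\leq C/|I|$, and then $\sup_{I}|u_{j}|\leq C/|I|+\mathrm{Var}(u_{j},I)$ already yields the uniform $L^{\infty}$ bound, which is all that Helly and the dominated convergence step require (your pigeonhole argument for a common $t_{\ast}$ does work, via a reverse Fatou estimate on the sets $\{|u_{j}|\leq 2C/|I|\}$, but it is an unnecessary detour); (ii) make the diagonal extraction run over the $M$ components as well as over the two monotone pieces, which you essentially indicate. With these cosmetic adjustments the proof is complete.
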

Rather than the norm topology, it will be convenient for us to work with the weak-$\ast$ topology on $\mathrm{BV}(I, \mathbb{R}^{6})$ instead.
\begin{defn}[Weak-$\ast$ Convergence in $\mathrm{BV}(I, \mathbb{R}^{M})$]
Suppose that $u^{\varepsilon}, u\in\mathrm{BV(I, \mathbb{R}^{M})}$. We say that $u^{\varepsilon}$ converges to $u$ in $\mathrm{BV}(I, \mathbb{R}^{M})$ in the weak-$\ast$ topology if and only if $u^{\varepsilon}\rightarrow u$ in $L^{1}(I, \mathbb{R}^{M})$ as $\varepsilon\rightarrow 0$, and 
\begin{equation*}
\lim_{\varepsilon\rightarrow 0}\int_{a}^{b}u^{\varepsilon}_{i}\phi'\,dt=-\int_{a}^{b}\phi \,dDu_{i} \quad \text{for all} \hspace{2mm}\phi\in C(I)
\end{equation*}
for $i=1, ..., M$.
\end{defn}
We finish this preliminary section with the following basic weak-$\ast$ compactness result for $\mathrm{BV}(I, \mathbb{R}^{M})$.
\begin{prop}[Weak-$\ast$ Compactness in $\mathrm{BV}(I, \mathbb{R}^{M})$]\label{weakstar}
A family of maps $\{u^{\varepsilon}\}_{0<\varepsilon<1}$ converges weakly-$\ast$ in $\mathrm{BV}(I, \mathbb{R}^{M})$ to $u\in\mathrm{BV}(I, \mathbb{R}^{M})$ if and only if $\{u^{\varepsilon}\}_{0<\varepsilon<1}$ is bounded in $\mathrm{BV}(I, \mathbb{R}^{M})$ and $u^{\varepsilon}$ converges strongly to $u$ in $L^{1}(I, \mathbb{R}^{M})$ as $\varepsilon\rightarrow 0$. 
\end{prop}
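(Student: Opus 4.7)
My plan is to prove the two implications separately, organising both around the Riesz representation of finite Radon measures on $I$ as the continuous dual of $C_0(I)$.

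Forward direction $(\Rightarrow)$. The strong $L^1$ convergence $u^\varepsilon \to u$ is already part of the definition of weak-$\ast$ convergence adopted above, so only the uniform BV-boundedness requires proof. I would view each distributional derivative $Du^\varepsilon_i$ as a continuous linear functional on the Banach space $C_0(I)$, whose operator norm coincides (by Riesz representation) with the total variation $|Du^\varepsilon_i|(I) = \mathrm{Var}(u^\varepsilon_i, I)$. Integration by parts against test functions $\phi \in C^1_0(I)$ delivers pointwise convergence of these functionals on a dense subspace of $C_0(I)$, which extends to the whole Banach space $C_0(I)$ by the measure-theoretic characterisation of weak-$\ast$ convergence of finite Radon measures. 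The uniform boundedness principle then supplies $\sup_\varepsilon |Du^\varepsilon_i|(I) < \infty$, and together with the uniform $L^1$-bound inherited from the $L^1$ convergence this gives uniform BV-boundedness.

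Backward direction $(\Leftarrow)$. Assume $\sup_\varepsilon \|u^\varepsilon\|_{\mathrm{BV}(I, \mathbb{R}^M)} \leq C$ and $u^\varepsilon \to u$ in $L^1(I, \mathbb{R}^M)$. The BV-bound yields $\sup_\varepsilon |Du^\varepsilon_i|(I) \leq C$, so by Banach-Alaoglu applied to $[C_0(I)]^*$ every sequence $\varepsilon_k \to 0$ admits a subsequence $\varepsilon_{k_j}$ along which $Du^{\varepsilon_{k_j}}_i$ converges weakly-$\ast$ to some finite Radon measure $\mu_i$ on $I$. To identify $\mu_i$, I would test against arbitrary $\phi \in C^1_0(I)$ and integrate by parts:
\begin{equation*}
\int_I \phi\, d\mu_i \;=\; \lim_{j\to\infty} \int_I \phi\, dDu^{\varepsilon_{k_j}}_i \;=\; -\lim_{j\to\infty} \int_I u^{\varepsilon_{k_j}}_i \phi'\, dt \;=\; -\int_I u_i \phi'\, dt,
\end{equation*}
where the first equality is the weak-$\ast$ convergence, the second is integration by parts, and the third exploits the strong $L^1$ convergence together with the boundedness of $\phi'$ on the bounded interval $I$. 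This characterises $\mu_i$ as the distributional derivative $Du_i$. Since every weak-$\ast$ cluster point of $\{Du^\varepsilon_i\}$ coincides with $Du_i$, the full family converges weakly-$\ast$ to $Du_i$, and combined with the strong $L^1$ convergence this delivers weak-$\ast$ convergence in $\mathrm{BV}(I, \mathbb{R}^M)$ in the sense of the preceding definition.

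The principal obstacle is the extension step in the forward direction: one cannot in general deduce pointwise convergence of bounded functionals on a Banach space from convergence on a dense subspace without a uniform norm bound, which is precisely what one is trying to establish. The cleanest resolution is to pin down the definition of weak-$\ast$ convergence in $\mathrm{BV}$ around the measure-theoretic pairing of $Du^\varepsilon_i$ with the whole of $C_0(I)$ (as in \cite{MR1857292}), after which the uniform boundedness principle applies directly and the forward direction becomes essentially immediate; the backward direction, on the other hand, carries the bulk of the content and is handled by the Banach-Alaoglu plus identification argument outlined above.
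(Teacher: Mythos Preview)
The paper does not prove this proposition: it is quoted as a standard result from \textsc{Ambrosio, Fusco and Pallara} \cite{MR1857292} (their Proposition~3.13), so there is no ``paper's own proof'' to compare against. Your argument is a correct reconstruction of the standard proof, and you have also correctly identified the one genuine subtlety: in the forward direction, convergence of $\int \phi\,dDu^\varepsilon_i$ on the dense subspace $C^1_0(I)\subset C_0(I)$ does not by itself yield a uniform norm bound via Banach--Steinhaus, because extending pointwise convergence from a dense subspace to the full space already presupposes such a bound. Your proposed resolution --- taking the definition of weak-$\ast$ convergence in $\mathrm{BV}$ to include the pairing of $Du^\varepsilon_i$ against all of $C_0(I)$ (not merely $C^1_0$) --- is exactly the convention adopted in \cite{MR1857292}, and with it the uniform boundedness principle applies directly. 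The backward direction via Banach--Alaoglu plus identification of the limit is also the standard route and is carried out correctly.
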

Our strategy in the following section will be to prove, using the uniform bound \eqref{importantbound}, that $v^{\varepsilon}, \ov{v}^{\varepsilon}\in\mathrm{BV}_{\mathrm{loc}}(\mathbb{R}, \mathbb{R}^{3})$ have the property that $\|V^{\varepsilon}\|_{\mathrm{BV}(I, \mathbb{R}^{6})}\leq C(I)$ for any open subinterval $I\subset\mathbb{R}$, where $V^{\varepsilon}=[v^{\varepsilon}, \ov{v}^{\varepsilon}]$. In turn, we shall be able to pass from classical solutions of system \eqref{softodes} to weak solutions of system \eqref{sminus} and \eqref{splus} in the limit as $\varepsilon\rightarrow 0$.
\begin{rem}
To our knowledge, there is currently no global-in-time existence and regularity theory in the literature for the analogous equations of motion for compact, strictly-convex, non-spherical particles whose boundary surfaces are of class $C^{1}$. Indeed, it may be possible that infinitely-many collisions of two non-spherical particles in a finite time interval take place for a given initial datum. Were this the case, it would not be that all velocity maps lie in $\mathrm{BV}_{\mathrm{loc}}(\mathbb{R})$, leading one to establish a compactness principle for approximate trajectories in another suitable functional space. We return to these remarks in the final section of the article.
\end{rem}
\subsection{Construction of Uniform Bounds}
As intimated above, rather than working with the variation $\mathrm{Var}(\cdot, I)$ over open subintervals $I$ of the real line, due to the structure of the equations of motion \eqref{softodes} it will be much more convenient to work with the (essential) pointwise variation $\mathrm{eVar}(\cdot, I)$. We begin by noticing that $\|\cdot\|_{\mathrm{BV}(I, \mathbb{R}^{6})}$-bounds on trajectories $V^{\varepsilon}$ for which $Z_{0}\in\Sigma^{0}$ are trivial. Indeed, it follows that the associated unique solution $Z^{\varepsilon}$ of \eqref{softodes} satisfies $dv^{\varepsilon}/dt=0$ and $d\ov{v}^{\varepsilon}/dt=0$ for all time $t\in\mathbb{R}$ and so are $\varepsilon$-independent. 

Let us now consider the non-trivial case when $Z_{0}\in \Sigma^{-}\setminus\Sigma^{0}$. As it will be useful when computing the variation $\mathrm{eVar}(\cdot, I)$ of velocity profiles, for any two times $t_{j}<t_{j+1}$ we notice (in the case of the velocity profile $v^{\varepsilon}$) that
\begin{equation}\label{possy}
v^{\varepsilon}(t_{j+1})-v^{\varepsilon}(t_{j})=\left\{
\begin{array}{ll}
0 & \quad t_{j}<t_{j+1}<0, \vspace{2mm}\\ 
-\int_{0}^{t_{j+1}}\nabla\Phi^{\varepsilon}(x^{\varepsilon}(s)-\ov{x}^{\varepsilon}(s))\,ds & \quad t_{j}<0<t_{j+1}\leq \tau_{+}^{\varepsilon}, \vspace{2mm}\\
-\int_{0}^{\tau_{+}^{\varepsilon}}\nabla\Phi^{\varepsilon}(x^{\varepsilon}(s)-\ov{x}^{\varepsilon}(s))\,ds & \quad t_{j}<0<\tau_{+}^{\varepsilon}<t_{j+1}, \vspace{2mm} \\ 
-\int_{t_{j}}^{t_{j+1}}\nabla\Phi^{\varepsilon}(x^{\varepsilon}(s)-\ov{x}^{\varepsilon}(s))\,ds & \quad 0\leq t_{j}<t_{j+1}\leq \tau_{+}^{\varepsilon}, \vspace{2mm}\\
-\int_{t_{j}}^{\tau_{+}^{\varepsilon}}\nabla\Phi^{\varepsilon}(x^{\varepsilon}(s)-\ov{x}^{\varepsilon}(s))\,ds & \quad 0\leq t_{j}<\tau_{+}^{\varepsilon}<t_{j+1}, \vspace{2mm}\\
0 & \quad \tau_{+}^{\varepsilon}<t_{j}<t_{j+1}.
\end{array}
\right.
\end{equation}
To start, we have the following auxiliary lemma.
\begin{lem}\label{lemm}
For any $T_{0}<T_{1}$, any partition $\mathcal{P}$ of $[T_{0}, T_{1}]$ and any $Z_{0}\in\Sigma^{-}$, there exists a constant $C=C(Z_{0}, \Phi_{0}, T_{0}, T_{1})>0$ independent of the partition $\mathcal{P}$ and the hardening parameter $\varepsilon>0$ such that
\begin{equation}\label{uniformb}
\mathrm{Var}(V^{\varepsilon}, I; \mathcal{P})\leq C \quad \text{with}\hspace{2mm}I:=(T_{0}, T_{1}),
\end{equation}
for all $0<\varepsilon<\varepsilon_{\ast}\leq1$, where $\varepsilon_{\ast}=\varepsilon_{\ast}(Z_{0}, \Phi_{0}, T_{0}, T_{1})$.
\end{lem}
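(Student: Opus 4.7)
The plan is to exploit the structural fact that the components of $V^\varepsilon$ are locally constant away from the collision interval, and then to estimate the total variation during the (short) collision by combining the pointwise control on $|\nabla\Phi^\varepsilon|$ afforded by (P2)--(P3) with the sharp time-of-closest-approach bound of Proposition \ref{estimation}. The trivial case $Z_0 \in \Sigma^0$ is dispatched first: here $y_0 \cdot w_0 = 0$ and $|y_0|=1$ force $\rho^\varepsilon(t) \geq 1$ for all $t$, so the force never acts, $\dot{v}^\varepsilon \equiv 0 \equiv \dot{\ov{v}}^\varepsilon$, and $\mathrm{Var}(V^\varepsilon, I; \mathcal{P}) = 0$.

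Assume henceforth $Z_0 \in \Sigma^- \setminus \Sigma^0$, so that $\tau_-^\varepsilon = 0$ and the collision duration equals $\Delta\tau^\varepsilon = 2\tau_\ast^\varepsilon$. Since $\mathrm{supp}(\Phi_0) \subseteq [0,1]$ by (P1), the right-hand side of \eqref{softodes} vanishes for every $t \in \mathbb{R} \setminus [0, \tau_+^\varepsilon]$. Given a partition $\mathcal{P}$ of $I$, consecutive pairs $(t_j, t_{j+1})$ whose subinterval avoids $[0, \tau_+^\varepsilon]$ contribute nothing to $\mathrm{Var}(V^\varepsilon, I; \mathcal{P})$. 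Applying the integral representation \eqref{possy} to the remaining pairs and telescoping by the triangle inequality, I would obtain the partition-independent estimate
\begin{equation*}
\mathrm{Var}(V^\varepsilon, I; \mathcal{P}) \,\leq\, 2\int_0^{\tau_+^\varepsilon}|\nabla\Phi^\varepsilon(y^\varepsilon(s))|\,ds,
\end{equation*}
where $y^\varepsilon := x^\varepsilon - \ov{x}^\varepsilon$.

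To estimate the integrand, I would exploit spherical symmetry to write $|\nabla\Phi^\varepsilon(y^\varepsilon)| = \varepsilon^{-1}|\Phi_0'(\rho^\varepsilon)|$ and note that $\rho^\varepsilon(t) \in [\rho_\ast^\varepsilon, 1]$ on $[0, \tau_+^\varepsilon]$. For $\varepsilon$ below a threshold $\varepsilon'(Z_0, \Phi_0)$ the lower bound \eqref{lbr} places $\rho^\varepsilon(t)$ inside the regime $[r_0, 1]$ where (P3) applies, whence
\begin{equation*}
|\nabla\Phi^\varepsilon(y^\varepsilon(t))| \,\leq\, \varepsilon^{-1}\kappa_2(1-\rho^\varepsilon(t))^{\beta-1} \,\leq\, \kappa_2\,q_1^{\beta-1}\,\varepsilon^{-1/\beta}.
\end{equation*}
Coupling this pointwise bound with $\tau_+^\varepsilon = 2\tau_\ast^\varepsilon \leq 2C\varepsilon^{1/\beta}$ from Proposition \ref{estimation} and substituting yields
\begin{equation*}
\mathrm{Var}(V^\varepsilon, I; \mathcal{P}) \,\leq\, 2 \cdot \kappa_2 q_1^{\beta-1}\varepsilon^{-1/\beta} \cdot 2C\varepsilon^{1/\beta} \,=\, 4C\kappa_2\,q_1^{\beta-1},
\end{equation*}
uniformly in $\mathcal{P}$ and in $0 < \varepsilon \leq \varepsilon_\ast := \min\{\varepsilon_0, \varepsilon'\}$. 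The resulting constant depends only on $Z_0$ and $\Phi_0$ (the dependence on $T_0, T_1$ is trivial), which establishes \eqref{uniformb}.

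The essential subtlety, and the only real obstacle, is the exact cancellation of the $\varepsilon$-exponents: the singular factor $\varepsilon^{-1/\beta}$ in the force is annihilated by the brevity $\varepsilon^{1/\beta}$ of the collision. This balance rests squarely on the matched algebraic profiles (P2)--(P3) of $\Phi_0$ and $|\Phi_0'|$ near $r = 1$, together with the sharp estimate \eqref{importantbound} of Proposition \ref{estimation}; any looseness on either side would allow the variation to diverge as $\varepsilon \to 0$. A minor secondary point is to check that partition intervals straddling the boundary times $0$ and $\tau_+^\varepsilon$ are indeed absorbed into the integral over $[0, \tau_+^\varepsilon]$, which is immediate from the six-case decomposition \eqref{possy}.
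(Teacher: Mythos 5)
Your proposal is correct and follows essentially the same route as the paper: after noting that $V^{\varepsilon}$ is constant outside $[0,\tau_{+}^{\varepsilon}]$, you use \eqref{possy} to dominate the pointwise variation by $2\int_{0}^{\tau_{+}^{\varepsilon}}|\nabla\Phi^{\varepsilon}|\,dt$, bound the integrand by $\kappa_{2}q_{1}^{\beta-1}\varepsilon^{-1/\beta}$ via (P3) together with \eqref{lbr}, and cancel this against the collision-duration bound \eqref{importantbound} of Proposition \ref{estimation}, exactly as in the paper's argument for the case $T_{0}<0<T_{1}$. Your treatment of the trivial case $Z_{0}\in\Sigma^{0}$ and of partition points straddling $0$ and $\tau_{+}^{\varepsilon}$ matches the paper's remarks as well.
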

\begin{proof}
The demonstration of this result follows from a careful case-by-case analysis. We do not demonstrate here the validity of the uniform bound \eqref{uniformb} in all cases that require consideration, i.e. for all possible choices of initial data $Z_{0}\in\Sigma^{-}$, open subintervals $I:=(T_{0}, T_{1})\subset\mathbb{R}$ and partitions $\mathcal{P}$ of $I$. We shall only establish the bound in the most involved case, and leave the proof of the other simpler cases to the reader.

Let us suppose, for instance, that $T_{0}<0<T_{1}$, i.e. the interval $I$ contains the collision time $t=0$. For $\varepsilon>0$ small enough, one has that $\tau_{+}^{\varepsilon}=2\tau_{\ast}^{\varepsilon}<T_{1}$. It then follows quickly using \eqref{possy} that
\begin{equation*}
\mathrm{Var}(V^{\varepsilon}, I; \mathcal{P})
\leq 2\int_{0}^{2\tau_{\ast}^{\varepsilon}}|\nabla\Phi^{\varepsilon}(x^{\varepsilon}(t)-\ov{x}^{\varepsilon}(t))|\,dt
\end{equation*}
for any non-trivial partition $\mathcal{P}=\{t_{j}\}_{j=1}^{N(\mathcal{P})}$ of the interval $(T_{0}, T_{1})$. By spherical symmetry of the potential $\Phi^{\varepsilon}$, we infer that
\begin{equation*}
\mathrm{Var}(V^{\varepsilon}, I; \mathcal{P})\leq 4\varepsilon^{-1}\tau_{\ast}^{\varepsilon}\max_{0\leq t \leq 2\tau_{\ast}^{\varepsilon}}|\Phi_{0}'(\rho^{\varepsilon}(t))|,
\end{equation*}
whence by the estimate in proposition \ref{estimation} for the exit time $\tau_{+}^{\varepsilon}=2\tau_{\ast}^{\varepsilon}$ together with the upper bound (P3) on $|\Phi_{0}'(r)|$ when $r$ is close to 1, we find that
\begin{equation*}
\mathrm{Var}(V^{\varepsilon}, I; \mathcal{P})\leq C\kappa_{2}q_{1}^{\beta-1}
\end{equation*}
for $\varepsilon<1$ sufficiently small, where $C>0$ is the constant in \eqref{importantbound}.
\end{proof}
With this in place, we have the following corollary.
\begin{prop}\label{foosh}
For a given open interval $I$, there exist a sequence $\{V_{i}\}_{i=1}^{\infty}\subset\{V^{\varepsilon}\}_{0<\varepsilon<1}$ and $V_{\infty}:=[v_{\infty}, \ov{v}_{\infty}]\in\mathrm{BV}_{\mathrm{loc}}(\mathbb{R}, \mathbb{R}^{6})$ such that $V_{i}\rightarrow V_{\infty}$ in $L^{1}(I)$ as $i\rightarrow\infty$ for any open interval $I\subset\mathbb{R}$.
\end{prop}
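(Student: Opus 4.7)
The plan is to deduce compactness of the family $\{V^{\varepsilon}\}_{0<\varepsilon<1}$ in $L^{1}_{\mathrm{loc}}(\mathbb{R},\mathbb{R}^{6})$ from the uniform variation bound already furnished by Lemma \ref{lemm}, and then to upgrade pointwise (in $I$) subsequential extraction to a single subsequence via a diagonal argument. The key preliminary observation is that, for any initial datum $Z_{0}\in\Sigma^{-}$, the classical solution $Z^{\varepsilon}$ of \eqref{softodes} conserves kinetic energy, so
\begin{equation*}
|v^{\varepsilon}(t)|^{2}+|\ov{v}^{\varepsilon}(t)|^{2}=|v_{0}|^{2}+|\ov{v}_{0}|^{2}\quad\text{for all }t\in\mathbb{R},
\end{equation*}
which furnishes an $\varepsilon$-independent $L^{\infty}(\mathbb{R})$-bound on $V^{\varepsilon}$, and in turn an $L^{1}$-bound on any bounded open interval.

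First, fix $k\in\mathbb{N}$ and set $I_{k}:=(-k,k)$. Combining the $L^{1}(I_{k})$-bound above with the uniform variation estimate $\mathrm{Var}(V^{\varepsilon},I_{k};\mathcal{P})\leq C(Z_{0},\Phi_{0},k)$ of Lemma \ref{lemm} (which, by passing to the supremum over all partitions $\mathcal{P}$ of $I_{k}$, yields $\mathrm{pVar}(V^{\varepsilon},I_{k})\leq C$, and hence $\mathrm{eVar}(V^{\varepsilon},I_{k})\leq C$, which equals $\mathrm{Var}(V^{\varepsilon},I_{k})$), I obtain a uniform bound on $\|V^{\varepsilon}\|_{\mathrm{BV}(I_{k},\mathbb{R}^{6})}$ for all $0<\varepsilon<\varepsilon_{\ast}(k)$. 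Proposition \ref{comp1} then produces a sequence $\{\varepsilon_{j}^{(k)}\}_{j=1}^{\infty}$ with $\varepsilon_{j}^{(k)}\downarrow 0$ and a map $V_{\infty}^{(k)}\in\mathrm{BV}(I_{k},\mathbb{R}^{6})$ such that $V^{\varepsilon_{j}^{(k)}}\to V_{\infty}^{(k)}$ in $L^{1}(I_{k},\mathbb{R}^{6})$ as $j\to\infty$.

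Next, I carry out a standard diagonal extraction. Starting from $k=1$, at each stage $k\geq 2$ I refine the previously-chosen sequence to obtain $\{\varepsilon_{j}^{(k)}\}_{j=1}^{\infty}\subset\{\varepsilon_{j}^{(k-1)}\}_{j=1}^{\infty}$ along which $V^{\varepsilon}$ converges in $L^{1}(I_{k})$. By uniqueness of $L^{1}$-limits (applied on $I_{k-1}\subset I_{k}$), the limit maps agree on overlaps, so gluing yields a single map $V_{\infty}\in L^{1}_{\mathrm{loc}}(\mathbb{R},\mathbb{R}^{6})$ with $V_{\infty}|_{I_{k}}=V_{\infty}^{(k)}$ for every $k$. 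Taking $V_{i}:=V^{\varepsilon_{i}^{(i)}}$ gives a single subsequence with the property that $V_{i}\to V_{\infty}$ in $L^{1}(I_{k})$ for every $k\in\mathbb{N}$; since any bounded open interval $I\subset\mathbb{R}$ is contained in some $I_{k}$, convergence on $I$ follows.

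Finally, I must verify that $V_{\infty}\in\mathrm{BV}_{\mathrm{loc}}(\mathbb{R},\mathbb{R}^{6})$. This uses lower semicontinuity of the total variation under $L^{1}$-convergence: for each $k$,
\begin{equation*}
\mathrm{Var}(V_{\infty},I_{k})\leq\liminf_{i\to\infty}\mathrm{Var}(V_{i},I_{k})\leq C(Z_{0},\Phi_{0},k),
\end{equation*}
so $V_{\infty}\in\mathrm{BV}(I_{k},\mathbb{R}^{6})$ for every $k$, hence $V_{\infty}\in\mathrm{BV}_{\mathrm{loc}}(\mathbb{R},\mathbb{R}^{6})$. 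The only genuinely delicate point in the argument is the diagonal extraction, since the variation bound from Lemma \ref{lemm} requires $\varepsilon<\varepsilon_{\ast}(Z_{0},\Phi_{0},T_{0},T_{1})$ which depends on the interval; however, since we are taking an increasing family $\{I_{k}\}$ and the threshold $\varepsilon_{\ast}(k)$ is positive for each $k$, this causes no issue — the nested refinement can always be arranged with $\varepsilon_{j}^{(k)}<\varepsilon_{\ast}(k)$ at every stage.
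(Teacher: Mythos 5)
Your proposal is correct and follows essentially the same route as the paper: a uniform $\mathrm{BV}(I_{k})$ bound obtained by combining Lemma \ref{lemm} with an $\varepsilon$-independent $L^{1}$ bound, followed by the compactness result of Proposition \ref{comp1}; your diagonal extraction over $I_{k}=(-k,k)$, the gluing of limits, and the lower-semicontinuity argument giving $V_{\infty}\in\mathrm{BV}_{\mathrm{loc}}(\mathbb{R},\mathbb{R}^{6})$ merely make explicit what the paper leaves to an unspecified ``case-by-case analysis''. One small caution: along the interaction kinetic energy alone is not constant for \eqref{softodes} (only the full Hamiltonian is), so your displayed equality should be replaced by the inequality $|v^{\varepsilon}(t)|^{2}+|\ov{v}^{\varepsilon}(t)|^{2}\leq |v_{0}|^{2}+|\ov{v}_{0}|^{2}$, which follows from conservation of $H_{2}^{\varepsilon}$ together with $\Phi^{\varepsilon}\geq 0$ and $\Phi^{\varepsilon}(x_{0}-\ov{x}_{0})=0$ for $Z_{0}\in\Sigma^{-}$, and which is all that your $L^{\infty}$ (hence $L^{1}$) bound requires.
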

\begin{proof}
Via another case-by-case analysis, it is possible to show that $\|V^{\varepsilon}\|_{L^{1}(I, \mathbb{R}^{6})}\leq C'$ for some $C'=C'(Z_{0}, \Phi_{0}, T_{0}, T_{1})$ and all $\varepsilon>0$ sufficiently small. An application of \ref{comp1} yields the proof of the proposition.
\end{proof}
By a direct application of proposition \ref{weakstar}, proposition \ref{foosh} finally leads us to:
\begin{cor}
Suppose $Z_{0}\in\Sigma^{-}\setminus\Sigma^{0}$. For any $n\geq 1$, there exists a sequence $\{V_{i}\}_{i=1}^{\infty}\subset\{V^{\varepsilon}\}_{0<\varepsilon<1}$ such that $V_{i}\overset{\ast}{\rightharpoonup}V_{\infty}$ in $\mathrm{BV}((-n, n), \mathbb{R}^{6})$ as $i\rightarrow\infty$.
\end{cor}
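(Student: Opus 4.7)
The plan is to combine the three ingredients that have just been assembled: the uniform pointwise-variation bound of Lemma~\ref{lemm}, the strong $L^{1}$-compactness statement of Proposition~\ref{foosh}, and the characterisation of weak-$\ast$ convergence in $\mathrm{BV}$ given by Proposition~\ref{weakstar}. Fix $n \geq 1$ and set $I_{n}:=(-n, n)$.

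First, I would invoke Lemma~\ref{lemm} with the choice $T_{0}=-n$, $T_{1}=n$ to produce a threshold $\varepsilon_{\ast}=\varepsilon_{\ast}(Z_{0}, \Phi_{0}, n)>0$ and a constant $C_{1}=C_{1}(Z_{0}, \Phi_{0}, n)>0$ such that
\begin{equation*}
\mathrm{Var}(V^{\varepsilon}, I_{n}; \mathcal{P}) \leq C_{1}
\end{equation*}
for every $0<\varepsilon<\varepsilon_{\ast}$ and every partition $\mathcal{P}$ of $I_{n}$. Taking the supremum over all partitions and using the identification $\mathrm{Var}(V^{\varepsilon}, I_{n})=\mathrm{eVar}(V^{\varepsilon}, I_{n})$ (valid because $V^{\varepsilon}\in L^{1}(I_{n},\mathbb{R}^{6})\cap \mathrm{BV}(I_{n},\mathbb{R}^{6})$, as recalled in the preliminary discussion), this yields a uniform bound $\mathrm{Var}(V^{\varepsilon}, I_{n}) \leq C_{1}$ for $0<\varepsilon<\varepsilon_{\ast}$.

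Next, Proposition~\ref{foosh} provides a sequence $\{V_{i}\}_{i=1}^{\infty}\subset\{V^{\varepsilon}\}_{0<\varepsilon<1}$ (which I may assume, by discarding finitely many terms, to have hardening parameters $\varepsilon_{i}<\varepsilon_{\ast}$ with $\varepsilon_{i}\to 0$) together with a limit $V_{\infty}=[v_{\infty}, \ov{v}_{\infty}]\in\mathrm{BV}_{\mathrm{loc}}(\mathbb{R},\mathbb{R}^{6})$ such that $V_{i}\to V_{\infty}$ strongly in $L^{1}(I_{n},\mathbb{R}^{6})$ as $i\to\infty$. Combining the uniform $L^{1}$-bound used in the proof of Proposition~\ref{foosh} with the uniform variation bound above shows
\begin{equation*}
\|V_{i}\|_{\mathrm{BV}(I_{n},\mathbb{R}^{6})} = \|V_{i}\|_{L^{1}(I_{n})} + \mathrm{Var}(V_{i}, I_{n}) \leq C_{2}
\end{equation*}
for some $C_{2}=C_{2}(Z_{0}, \Phi_{0}, n)>0$ and all $i\geq 1$. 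A standard lower-semicontinuity argument for the variation under $L^{1}$-convergence then forces $V_{\infty}\in\mathrm{BV}(I_{n},\mathbb{R}^{6})$.

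Finally, Proposition~\ref{weakstar} is directly applicable: the family $\{V_{i}\}_{i=1}^{\infty}$ is bounded in $\mathrm{BV}(I_{n},\mathbb{R}^{6})$ and converges strongly in $L^{1}(I_{n},\mathbb{R}^{6})$ to $V_{\infty}\in\mathrm{BV}(I_{n},\mathbb{R}^{6})$, so $V_{i}\overset{\ast}{\rightharpoonup} V_{\infty}$ in $\mathrm{BV}(I_{n},\mathbb{R}^{6})$ as $i\to\infty$, proving the corollary. There is essentially no obstacle here: all the analytical work has already been carried out in Lemma~\ref{lemm} and Proposition~\ref{foosh}, and the only mild technicality is verifying that the subsequence selected for $L^{1}$-compactness can be chosen to lie in the regime $0<\varepsilon<\varepsilon_{\ast}$ where the variation bound holds, which is achieved by passing to a tail.
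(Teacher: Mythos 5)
Your proposal is correct and follows essentially the same route as the paper, which obtains the corollary as a direct application of Proposition~\ref{weakstar} to the sequence furnished by Proposition~\ref{foosh}, with the uniform variation bound supplied by Lemma~\ref{lemm}. The only difference is that you spell out the minor technicalities (passing to a tail with $\varepsilon_{i}<\varepsilon_{\ast}$ and lower semicontinuity of the variation), which the paper leaves implicit.
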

In the following section, we shall in fact show that the whole family $\{V^{\varepsilon}\}_{0<\varepsilon<1}$ converges to the unique classical trajectory $V=[v, \ov{v}]$ as $\varepsilon\rightarrow 0$ (where $v$ and $\ov{v}$ are given by \eqref{vclass} and \eqref{ovvclass}, respectively). Let us now proceed to the proof of theorem \ref{mainresy}.
\subsection{Weak Solutions of the Hard Sphere System}
Making use of the compactness results of the previous section, it is now straightforward to pass from classical solutions of \eqref{softodes} to weak solutions of \eqref{sminus} and \eqref{splus}.

{\em Proof of Theorem \ref{mainresy}.} For any $\psi\in C^{1}_{0}(\mathbb{R}, \mathbb{R}^{6})$, it holds that
\begin{equation*}
\int_{-\infty}^{\infty}\frac{d}{dt}\left[
\begin{array}{c}
v^{\varepsilon} \\
\ov{v}^{\varepsilon}
\end{array}
\right]\cdot\psi\,dt=-\int_{-\infty}^{\infty}\left[
\begin{array}{c}
v^{\varepsilon}\\
\ov{v}^{\varepsilon}
\end{array}
\right]\cdot \psi'\,dt,
\end{equation*} 
and by passing to limits as $\varepsilon\rightarrow 0$, we infer that
\begin{equation*}
\int_{-\infty}^{\infty}\left[
\begin{array}{c}
v_{\infty}\\
\ov{v}_{\infty}
\end{array}
\right]\cdot\psi'\,dt=-\int_{-\infty}^{\infty}\psi\,dDV_{\infty},
\end{equation*} 
where $V_{\infty}:=[v_{\infty}, \ov{v}_{\infty}]$ is the limit point guaranteed by proposition \ref{foosh}. Moreover, since $x^{\varepsilon}(t)=x_{0}+tv(t)$, using the strong convergence of $[v^{\varepsilon}, \ov{v}^{\varepsilon}]$ to $[v_{\infty}, \ov{v}_{\infty}]$ in $L^{1}(I, \mathbb{R}^{6})$ as $\varepsilon\rightarrow 0$, we infer that $x_{\infty}:=x_{0}+tv_{\infty}(t)$ and $\ov{x}_{\infty}:=\ov{x}_{0}+t\ov{v}_{\infty}(t)$ satisfy
\begin{equation*}
\int_{-\infty}^{\infty}\left[
\begin{array}{c}
x_{\infty}\\
\ov{x}_{\infty}
\end{array}
\right]\cdot \varphi'\,dt=-\int_{-\infty}^{\infty}\left[
\begin{array}{c}
v_{\infty}\\
\ov{v}_{\infty}
\end{array}
\right]\cdot\varphi\,dt,
\end{equation*}
for any $\varphi\in C^{1}_{0}(\mathbb{R}, \mathbb{R}^{6})$. We must now pass comment on the conservation laws. For the conservation of linear momentum, one has
\begin{equation*}
v^{\varepsilon}(t)+\ov{v}^{\varepsilon}(t)=v_{0}+\ov{v}_{0}\quad \text{for all}\hspace{2mm}t\in\mathbb{R},
\end{equation*}
whence
\begin{equation*}
v_{\infty}(t)+\ov{v}_{\infty}(t)=v_{0}+\ov{v}_{0}\quad \text{for almost every}\hspace{2mm}t\in\mathbb{R},
\end{equation*}
with similar considerations for the conservation of angular momentum and kinetic energy. Finally, as 
\begin{equation*}
|x^{\varepsilon}(t)-\ov{x}^{\varepsilon}(t)|\geq 1\quad \text{for all}\hspace{2mm}t\leq 0\hspace{2mm}\text{and}\hspace{2mm}t\geq \tau_{+}^{\varepsilon},
\end{equation*}
it follows that $|x_{\infty}(t)-\ov{x}_{\infty}(t)|\geq 1$ for almost every $t\in\mathbb{R}$. As such, it follows that $Z_{\infty}:=[x_{\infty}, \ov{x}_{\infty}, v_{\infty}, \ov{v}_{\infty}]$ is a weak solution of system \eqref{sminus} and \eqref{splus}. \qed

We now show that the equivalence class $Z_{\infty}$ has a representative which is the unique classical solution of system \eqref{sminus} and \eqref{splus}. Indeed, we have the following proposition.
\begin{prop}
For $Z_{0}\in\Sigma^{-}$ and any $-\infty<T_{0}<T_{1}<\infty$, one has
\begin{equation*}
\lim_{\varepsilon\rightarrow 0}\int_{T_{0}}^{T_{1}}|v^{\varepsilon}(t)-v(t)|\,dt=\lim_{\varepsilon\rightarrow 0}\int_{T_{0}}^{T_{1}}|\ov{v}^{\varepsilon}(t)-\ov{v}(t)|\,dt=0,
\end{equation*}
where $v$ and $\ov{v}$ are the unique classical solutions \eqref{vclass} and \eqref{ovvclass} of system \eqref{sminus} and \eqref{splus}.
\end{prop}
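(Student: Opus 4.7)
The plan is to show that every subsequential $L^1_{\mathrm{loc}}$-limit of $\{V^\varepsilon\}$ coincides with the unique classical solution $V=[v,\bar v]$ given by \eqref{vclass} and \eqref{ovvclass}; by uniqueness of classical solutions, this forces the entire family to converge and yields the proposition. The two ingredients I would lean on are already available: the bound $2\tau_\ast^\varepsilon \leq 2C\varepsilon^{1/\beta}$ from Proposition \ref{estimation}, which shrinks the collision window to the single point $\{0\}$, and the pointwise limit $\Pi_2\sigma^\varepsilon Z_0 \to \sigma_n \Pi_2 Z_0$ from Corollary \ref{compactcon}.

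The trivial cases $Z_0 \in \Sigma^0$ and $y_0\cdot w_0 = 0$ can be dispatched first: in the first there is no collision so $v^\varepsilon \equiv v_0$, and in the second the collision is instantaneous, so in both situations the integrand vanishes pointwise. For the remaining case $Z_0 \in \Sigma^-\setminus\Sigma^0$ with $y_0\cdot w_0 \neq 0$, I would partition the time interval $(T_0,T_1)$ according to the three regimes
\[
I_1^\varepsilon := (T_0,T_1)\cap(-\infty,0],\qquad I_2^\varepsilon := (T_0,T_1)\cap(0,2\tau_\ast^\varepsilon),\qquad I_3^\varepsilon := (T_0,T_1)\cap[2\tau_\ast^\varepsilon,\infty),
\]
and bound $\int |v^\varepsilon - v|\,dt$ piece by piece. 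On $I_1^\varepsilon$ one has $v^\varepsilon \equiv v_0 \equiv v$ by construction, so this contribution is zero. On $I_3^\varepsilon$, the fact that the soft sphere velocity is constant after exit gives $v^\varepsilon(t) = \Pi_2 \sigma^\varepsilon Z_0$ (first component), while $v(t) = v_0 - ((v_0-\bar v_0)\cdot n)n = \Pi_2\sigma_n Z_0$ (first component), so Corollary \ref{compactcon} makes the integrand tend to zero uniformly on $I_3^\varepsilon$, and the length of $I_3^\varepsilon$ is bounded by $T_1 - T_0$.

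The middle piece $I_2^\varepsilon$ is handled by energy conservation: since $|x_0-\bar x_0|=1$ lies outside the open support of $\Phi_0$, one has $\Phi^\varepsilon(x_0-\bar x_0)=0$, so
\[
\tfrac12|v^\varepsilon(t)|^2 + \tfrac12|\bar v^\varepsilon(t)|^2 \leq H_2^\varepsilon(Z^\varepsilon(t)) = H_2^\varepsilon(Z_0) = \tfrac12(|v_0|^2+|\bar v_0|^2)
\]
uniformly in $\varepsilon$ and $t$. Hence $|v^\varepsilon - v|$ is bounded on $I_2^\varepsilon$ by a constant depending only on $Z_0$, and the length of $I_2^\varepsilon$ is at most $2\tau_\ast^\varepsilon \leq 2C\varepsilon^{1/\beta}$, so this contribution is $O(\varepsilon^{1/\beta})$. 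Combining the three estimates gives $\int_{T_0}^{T_1}|v^\varepsilon - v|\,dt \to 0$, and the identical argument applied to $\bar v^\varepsilon$ completes the proof.

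I do not anticipate a genuine obstacle here; the only point requiring mild care is the uniform $L^\infty$ bound on $v^\varepsilon$ across the collision window, which follows cleanly from the observation that the initial datum lies on $\partial\mathcal{D}_2(\mathsf{S}_\ast)$ where the potential vanishes. Everything else is a matter of assembling the quantitative estimates from Section \ref{softy}.
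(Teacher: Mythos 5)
Your decomposition of $(T_0,T_1)$ into the pre-collisional, collisional and post-collisional windows is exactly the paper's own splitting ($J_1^\varepsilon, J_2^\varepsilon, J_3^\varepsilon$ in case $T_0\leq 0<T_1$), and your treatment of the first and third pieces coincides with the paper's: the first vanishes identically, and the third is a constant integrand controlled by Corollary \ref{compactcon} times a length at most $T_1-T_0$. The one genuine difference is the collision window. The paper estimates it by writing $v^\varepsilon(t)-v(t)$ through the force, bounding $\varepsilon^{-1}|\Phi_0'(\rho^\varepsilon(s))|$ via (P3) together with the lower bound \eqref{lbr} on $\rho_\ast^\varepsilon$, and then integrating over the square of side $2\tau_\ast^\varepsilon$; you instead bypass the force entirely, using conservation of energy plus $\Phi^\varepsilon\geq 0$ and $\Phi^\varepsilon(x_0-\ov{x}_0)=0$ (valid since $|x_0-\ov{x}_0|=1$ and $\Phi_0(1)=0$ by (P2) and continuity) to get a uniform-in-$\varepsilon$ $L^\infty$ bound on $v^\varepsilon$, and then multiply by the measure $2\tau_\ast^\varepsilon\leq 2C\varepsilon^{1/\beta}$ of the window from Proposition \ref{estimation}. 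Both routes give an $O(\varepsilon^{1/\beta})$ contribution, but yours is simpler and more robust: it needs only that the collision duration tends to zero, not the pointwise force bound (P3), and it sidesteps the paper's somewhat delicate bookkeeping of $|\Phi_0'(\rho_\ast^\varepsilon)|$. Your handling of the degenerate data ($\Sigma^0$ and $y_0\cdot w_0=0$) matches the paper's remarks that the soft dynamics is then velocity-constant, and your direct quantitative estimates in fact render the opening subsequence-plus-uniqueness framing unnecessary, since they prove full-family $L^1$ convergence outright. I see no gap.
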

\begin{proof}
It is necessary to break the proof into 3 cases, namely (i) $T_{1}<0$, (ii) $T_{0}\leq 0< T_{1}$, and (iii) $0<T_{0}$. We consider here only the most involved case, namely (ii). Indeed, supposing that $T_{0}\leq 0<T_{1}$, we break the $L^{1}(I, \mathbb{R}^{6})$ norm of $V^{\varepsilon}-V$ into three pieces:
\begin{equation*}
\displaystyle \|v^{\varepsilon}-v\|_{L^{1}(I, \mathbb{R}^{3})} =\displaystyle \underbrace{\int_{T_{0}}^{0}|v^{\varepsilon}(s)-v(s)|\,ds}_{J_{1}^{\varepsilon}}+\displaystyle \underbrace{\int_{0}^{2\tau_{\ast}^{\varepsilon}}|v^{\varepsilon}(s)-v(s)|\,ds}_{J_{2}^{\varepsilon}:=} +\displaystyle \underbrace{\int_{2\tau_{\ast}^{\varepsilon}}^{T_{1}}|v^{\varepsilon}(s)-v(s)|\,ds}_{J_{3}^{\varepsilon}:=}.
\end{equation*}
Clearly, $J_{1}^{\varepsilon}=0$. For $J_{2}^{\varepsilon}$, we find that
\begin{equation*}
\begin{array}{ll}
J_{2}^{\varepsilon} & =\displaystyle\int_{0}^{2\tau_{\ast}^{\varepsilon}}\left|\int_{0}^{t}\nabla\Phi^{\varepsilon}(x^{\varepsilon}(s)-\ov{x}^{\varepsilon}(s))\,ds+v_{n}'-v_{0}\right|\,dt\vspace{2mm}\\
& \leq \displaystyle\frac{1}{\varepsilon}\int_{0}^{2\tau_{\ast}^{\varepsilon}}\int_{0}^{t}|\Phi_{0}'(\rho^{\varepsilon}(s))|\,dsdt +2\tau_{\ast}^{\varepsilon}|v_{n}'-v_{0}|.
\end{array}
\end{equation*}
Since $\rho_{\ast}^{\varepsilon}\leq \rho^{\varepsilon}(s)\leq 1$, for $\varepsilon$ small enough, we have $\max_{0\leq s\leq 2\tau_{\ast}^{\varepsilon}}|\Phi_{0}'(\rho^{\varepsilon}(s))|=|\Phi_{0}'(\rho_{\ast}^{\varepsilon})|$, we infer that
\begin{equation*}
J_{2}^{\varepsilon}\leq \kappa_{1}q_{1}^{\beta-1}\varepsilon^{-1/\beta}\int_{0}^{2\tau_{\ast}^{\varepsilon}}dsdt+C\varepsilon^{1/\beta},
\end{equation*}
and so $J_{2}^{\varepsilon}\leq C\varepsilon^{1/\beta}$ for some constant $C>0$. For the final piece $J_{3}^{\varepsilon}$, we note that
\begin{equation*}
J_{3}^{\varepsilon}\leq (T_{1}-2\tau_{\ast}^{\varepsilon})\left|(\Pi_{2}\sigma^{\varepsilon}Z_{0})_{1}-v_{n}'\right|,
\end{equation*}
however corollary \ref{compactcon} immediately gives us that $\lim_{\varepsilon\rightarrow 0}|(\Pi_{2}\sigma^{\varepsilon}Z_{0})_{1}-v_{n}'|=0$. As the considerations for $\ov{v}^{\varepsilon}-\ov{v}$ are identical, the proof of the proposition follows.
\end{proof}
We note using proposition \ref{weakstar} that the {\em whole family} $\{V^{\varepsilon}\}_{0<\varepsilon<1}$ (and not simply a subsequence thereof) converges weakly-$\ast$ to $V$ in $\mathrm{BV}(I, \mathbb{R}^{6})$ for any open interval $I$. A standard application of the triangle inequality yields equality of $[v_{\infty}, \ov{v}_{\infty}]$ and (the equivalence class generated by) $[v, \ov{v}]$ in $L^{1}_{\mathrm{loc}}(\mathbb{R}, \mathbb{R}^{6})$, from which the proof of theorem \ref{mainresy} is concluded.

\section{Brief Remarks on the Non-Spherical Particle Problem}\label{close}
While the problem of two spherical particles is well understood, the problem for two {\em non-spherical} particles is less so. We now discuss some of the challenges one must face, both from the point of view of existence theory and regularity theory, for the analogous problem for the dynamics of non-spherical particles in $\mathbb{R}^{3}$. This brief section is by no means comprehensive, but it serves to highlight a few of the interesting open problems in the theory of hard particle dynamics.
\subsection{Set-up of the Problem}
Let $\mathsf{P}_{\ast}$ be a compact, strictly-convex, non-spherical subset of $\mathbb{R}^{3}$ with boundary surface of class $C^{1}$, whose centre of mass lies at the origin. We denote by $m>0$ and $J\in\mathbb{R}^{3\times 3}$ the mass and the inertia tensor associated to the reference particle $\mathsf{P}_{\ast}$, respectively, where
\begin{equation*}
m:=\int_{\mathsf{P}_{\ast}}dy\quad \text{and}\quad J:=\int_{\mathsf{P}_{\ast}}\left(y\otimes y-|y|^{2}I\right)\,dy.
\end{equation*}
We consider the motion in $\mathbb{R}^{3}$ of two congruent copies $\mathsf{P}$ and $\ov{\mathsf{P}}$ of the reference particle $\mathsf{P}_{\ast}$. Indeed, we may write $\mathsf{P}=R\mathsf{P}_{\ast}+x$ and $\ov{\mathsf{P}}=\ov{R}\mathsf{P}_{\ast}+\ov{x}$ for some maps $x, \ov{x}:\mathbb{R}\rightarrow\mathbb{R}^{3}$ and $R, \ov{R}:\mathbb{R}\rightarrow\mathrm{SO}(3)$. The phase space for this problem is
\begin{equation*}
\mathcal{D}_{2}(\mathsf{P}_{\ast}):=\left\{
Z\in (\mathbb{R}^{3}\times\mathrm{SO}(3)\times \mathbb{R}^{3}\times\mathbb{R}^{3})^{2}\,:\,\mathrm{card}\, (R\mathsf{P}_{\ast}+x)\cap(\ov{R}\mathsf{P}_{\ast}+\ov{x})\leq 1
\right\}.
\end{equation*}
The equations of motion for $\mathsf{P}$ and $\ov{\mathsf{P}}$ are given by
\begin{equation}\label{nonspher}
\frac{d}{dt_{\pm}}\left[
\begin{array}{c}
x \\
\ov{x}\\
R\\
\ov{R}
\end{array}
\right]=\left[
\begin{array}{c}
v \\
\ov{v}\\
\Omega R \\
\ov{\Omega}\ov{R}
\end{array}
\right]\qquad \text{and}\qquad \frac{d}{dt_{\pm}}\left[
\begin{array}{c}
v\\
\ov{v}\\
RJR^{T}\omega\\
\ov{R}J\ov{R}^{T}\ov{\omega}
\end{array}
\right]=\left[
\begin{array}{c}
0 \\
0\\
0\\
0
\end{array}
\right],\tag{$\mathsf{P}_{\ast}$S$_{\pm}$}
\end{equation}
where
\begin{equation*}
\Omega:=\left(
\begin{array}{ccc}
0 & -\omega_{3} & \omega_{2} \\
\omega_{3} & 0 & -\omega_{1}\\
-\omega_{2} & \omega_{1} & 0
\end{array}
\right)\quad\text{and}\quad\ov{\Omega}:=\left(
\begin{array}{ccc}
0 & -\ov{\omega}_{3} & \ov{\omega}_{2} \\
\ov{\omega}_{3} & 0 & -\ov{\omega}_{1}\\
-\ov{\omega}_{2} & \ov{\omega}_{1} & 0
\end{array}
\right),
\end{equation*}
and $\omega:=[\omega_{1}, \omega_{2}, \omega_{3}]$, $\ov{\omega}:=[\ov{\omega}_{1}, \ov{\omega}_{2}, \ov{\omega}_{3}]$. 
We now establish the analogues of classical and weak solution to system \eqref{nonspher}.
\begin{defn}[Classical Solutions]
For $Z_{0}\in\mathcal{D}_{2}(\mathsf{P}_{\ast})$, we say that $Z=[z, \ov{z}]$, with $z=[x, R, v, \omega]$ and $\ov{v}=[\ov{x}, \ov{R}, \ov{v}, \ov{\omega}]$, is a {\em classical solution} of system \eqref{nonspher} if and only if $t\mapsto Z(t)$ has range in $\mathcal{D}_{2}(\mathsf{P}_{\ast})$ for all $t\in\mathbb{R}$, $t\mapsto [x, \ov{x}, R, \ov{R}]$ is continuous piecewise linear on $\mathbb{R}$, and $t\mapsto[v, \ov{v}, \omega, \ov{\omega}]$ is lower semi-continuous and piecewise constant on $\mathbb{R}$; moreover, $t\mapsto Z(t)$ satisfies the right limit ODEs pointwise everywhere in $\mathbb{R}$ and the left limit ODEs pointwise everywhere in $\mathbb{R}\setminus\mathcal{T}(Z_{0})$, where 
\begin{equation*}
\mathcal{T}(Z_{0}):=\left\{t\in\mathbb{R}\,:\,\mathrm{card}\,(R(t)\mathsf{P}_{\ast}+x(t))\cap (\ov{R}(t)\mathsf{P}_{\ast}+\ov{x}(t))=1\right\}.
\end{equation*}
Additionally, $t\mapsto Z(t)$ must satisfy the conservation of total linear momentum
\begin{equation}\label{nlm}
mv(t)+m\ov{v}(t)=mv_{0}+m\ov{v}_{0},
\end{equation}
the conservation of total angular momentum (with respect to all points of measurement $a\in\mathbb{R}^{3}$)
\begin{equation*}
\begin{array}{c}\label{nam}
-m(a-x(t))\wedge v(t)+R(t)JR(t)^{T}\omega(t)-m(a-\ov{x}(t))\wedge \ov{v}(t)+\ov{R}(t)J\ov{R}(t)^{T}\ov{\omega}(t) \vspace{1mm}\\
=-m(a-x_{0})\wedge v_{0}+R_{0}JR_{0}^{T}\omega_{0}-m(a-\ov{x}_{0})\wedge \ov{v}_{0}+\ov{R}_{0}J\ov{R}_{0}^{T}\ov{\omega}_{0},
\end{array}
\end{equation*}
and the conservation of total kinetic energy
\begin{equation}\label{nke}
\begin{array}{c}
m|v(t)|^{2}+R(t)JR(t)^{T}\omega(t)\cdot \omega(t)+m|\ov{v}(t)|^{2}+\ov{R}(t)J\ov{R}(t)^{T}\ov{\omega}(t)\cdot \ov{\omega}(t)\vspace{1mm}\\
=m|v_{0}|^{2}+R_{0}JR_{0}^{T}\omega_{0}\cdot \omega_{0}+m|\ov{v}_{0}|^{2}+\ov{R}_{0}J\ov{R}_{0}^{T}\ov{\omega}_{0}\cdot \ov{\omega}_{0}
\end{array}
\end{equation}
for all $t\in\mathbb{R}$. Finally, $Z(0)=Z_{0}$.
\end{defn}
We contrast this with the following natural notion of weak solution.
\begin{defn}[Weak Solutions]
For $Z_{0}\in\mathcal{D}_{2}(\mathsf{P}_{\ast})$, we say that $Z=[z, \ov{z}]$ is a {\em weak solution} of system \eqref{nonspher} if and only if $[x, \ov{x}, R, \ov{R}]\in C(\mathbb{R}, \mathbb{R}^{6}\times\mathrm{SO}(3)^{2})$, $[v, \ov{v}, \omega, \ov{\omega}]\in \mathrm{BV}_{\mathrm{loc}}(\mathbb{R}, \mathbb{R}^{12})$ and $Z$ satisfies the equations
\begin{equation*}
\int_{-\infty}^{\infty}\left[
\begin{array}{c}
x \\
\ov{x} \\
R \\
\ov{R}
\end{array}
\right]\cdot \phi'\,dt=-\int_{-\infty}^{\infty}\left[ 
\begin{array}{c}
v \\
\ov{v}\\
\Omega R\\
\ov{\Omega}\ov{R}
\end{array}
\right]\cdot \phi\,dt
\end{equation*}
for all compactly-supported $C^{1}$ test functions $\phi$, and 
\begin{equation*}
\int_{-\infty}^{\infty}\left[
\begin{array}{c}
v \\
\ov{v}\\
\omega \\
\ov{\omega}
\end{array}
\right]\cdot\psi'\,dt=-\int_{-\infty}^{\infty}\psi\,dDV
\end{equation*}
for all compactly-supported $C^{1}$ test functions $\psi$, with $DV$ a finite vector-valued Radon measure on $\mathbb{R}$. Moreover, $Z$ should satisfy the conservation laws \eqref{nlm}, \eqref{nam} and \eqref{nke} pointwise almost everywhere on $\mathbb{R}$. Finally, any representative of the equivalence class $Z$ should have range in $\mathcal{D}_{2}(\mathsf{P}_{\ast})$ for almost every time in $\mathbb{R}$.
\end{defn}
We now catalogue some results in the literature pertaining to the existence and regularity of both classical and weak solutions to the equations of physical non-spherical particle motion.
\subsection{Existence Theories for System \eqref{nonspher}}
Let us firstly consider the problem of establishing the existence of classical solutions to the above system of ODEs by means of the method of trajectory surgery. In order to employ this method, one must first provide the analogue of the scattering matrices \eqref{boltzmannscattering} which resolve the collision between two non-spherical particles. We note that a collision between $\mathsf{P}$ and $\ov{\mathsf{P}}$ is parametrised by $\beta=(R, \ov{R}, n)\in\mathrm{SO}(3)\times\mathrm{SO}(3)\times\mathbb{S}^{2}$. We begin with the following.
\begin{defn}[Distance of Closest Approach]
For a given spatial configuration $\beta=(R, \ov{R}, n)\in\mathrm{SO}(3)\times\mathrm{SO}(3)\times\mathbb{S}^{2}$ of two colliding particles, we write $d_{\beta}$ to denote the {\em distance of closest approach} between the centres of mass of the two particles $\mathsf{P}$ (with centre of mass 0 and orientation $R$) and $\ov{\mathsf{P}}$ (with centre of mass on the line $\{dn\,:\,d>0\}$ and orientation $\ov{R}$), namely
\begin{equation*}
d_{\beta}:=\inf\left\{
d>0\,:\,\mathrm{card}\,(R\mathsf{P}_{\ast}\cap(\ov{R}\mathsf{P}_{\ast}+d n))=0
\right\}.
\end{equation*}
\end{defn}
We now state in precise way what we mean by a map which sends `pre-collisional' velocities to `post-collisional' velocities.
\begin{defn}[Scattering Maps]
For $\beta\in\mathrm{SO}(3)\times\mathrm{SO}(3)\times\mathbb{S}^{2}$, we say that $\sigma_{\beta}:\mathbb{R}^{12}\rightarrow\mathbb{R}^{12}$ is a {\em scattering map} if and only if $\sigma_{\beta}$ is an involution on $\mathbb{R}^{12}$ and maps the half space
\begin{equation*}
\Sigma_{\beta}^{-}:=\left\{V\in\mathbb{R}^{12}\,:\,\nu_{\beta}\cdot V\geq 0\right\}
\end{equation*}
to the half space
\begin{equation*}
\Sigma_{\beta}^{+}:=\left\{V\in\mathbb{R}^{12}\,:\,\nu_{\beta}\cdot V\leq 0\right\},
\end{equation*}
where $\nu_{\beta}\in\mathbb{R}^{12}$ is the unit vector satisfying the formal expressions
\begin{equation*}
\frac{d}{dt_{+}}F(x(t), \ov{x}(t), R(t), \ov{R}(t))\bigg|_{t=\tau}\leq 0\quad  \Longleftrightarrow\quad \nu_{\beta}\cdot V\geq 0
\end{equation*}
and 
\begin{equation*}
\frac{d}{dt_{-}}F(x(t), \ov{x}(t), R(t), \ov{R}(t))\bigg|_{t=\tau}\geq 0\quad  \Longleftrightarrow\quad \nu_{\beta}\cdot V\leq 0,
\end{equation*}
with $F:\mathbb{R}^{6}\times \mathrm{SO}(3)^{2}\rightarrow \mathbb{R}$ the auxiliary function
\begin{equation*}
F(x, \ov{x}, R, \ov{R}):=|x-\ov{x}|^{2}-d_{\beta(x, \ov{x})}^{2}, \qquad \beta(x, \ov{x}):=(R, \ov{R}, n(x, \ov{x}))\hspace{2mm}\text{and}\hspace{2mm} n(x, \ov{x}):=\frac{\ov{x}-x}{|\ov{x}-x|},
\end{equation*}
and where $t=\tau$ is a collision time, i.e. $F(x(\tau), \ov{x}(\tau), R(\tau), \ov{R}(\tau))=0$.
\end{defn}
With these basic concepts in place, and with the method of trajectory surgery in mind, let us state the problem of interest. 
\begin{op}[Characterisation of Physical $C^{1}$ Scattering Maps]\label{op1}
For every $\beta\in\mathrm{SO}(3)\times\mathrm{SO}(3)\times\mathbb{S}^{2}$, characterise all $C^{1}$ scattering maps $\sigma_{\beta}:\mathbb{R}^{12}\rightarrow\mathbb{R}^{12}$ which satisfy the Jacobian PDE
\begin{equation*}
\mathrm{det}(D\sigma_{\beta}[V])=-1 \quad \text{for}\hspace{2mm}V\in\mathbb{R}^{12},
\end{equation*}
and are subject to the algebraic constraints of conservation of linear momentum
\begin{equation*}
\left(\begin{array}{c}
(\sigma_{\beta}[V])_{1}\\
(\sigma_{\beta}[V])_{2}\\
(\sigma_{\beta}[V])_{3}\\
\end{array}\right)+\left(
\begin{array}{c}
(\sigma_{\beta}[V])_{4}\\
(\sigma_{\beta}[V])_{5}\\
(\sigma_{\beta}[V])_{6}
\end{array}
\right)=\left(
\begin{array}{c}
V_{1}\\
V_{2}\\
V_{3}
\end{array}
\right)+\left(
\begin{array}{c}
V_{4}\\
V_{5}\\
V_{6}
\end{array}
\right),
\end{equation*}
conservation of angular momentum (for any $a\in\mathbb{R}^{3}$)
\begin{equation*}
\begin{array}{c}
-ma\wedge\left(
\begin{array}{c}
(\sigma_{\beta}[V])_{1}\\
(\sigma_{\beta}[V])_{2}\\
(\sigma_{\beta}[V])_{3}
\end{array}
\right)+R^{T}JR\left(
\begin{array}{c}
(\sigma_{\beta}[V])_{7}\\
(\sigma_{\beta}[V])_{8}\\
(\sigma_{\beta}[V])_{9}
\end{array}
\right)-m(a-d_{\beta}n)\wedge\left(
\begin{array}{c}
(\sigma_{\beta}[V])_{4}\\
(\sigma_{\beta}[V])_{5}\\
(\sigma_{\beta}[V])_{6}
\end{array}
\right)+\ov{R}^{T}J\ov{R}\left(
\begin{array}{c}
(\sigma_{\beta}[V])_{10}\\
(\sigma_{\beta}[V])_{11}\\
(\sigma_{\beta}[V])_{12}
\end{array}
\right) \vspace{1mm}\\
=-ma\wedge\left(
\begin{array}{c}
V_{1} \\
V_{2}\\
V_{3}
\end{array}
\right)+R^{T}JR\left(
\begin{array}{c}
V_{7}\\
V_{8}\\
V_{9}
\end{array}
\right)-m(a-d_{\beta}n)\wedge\left(
\begin{array}{c}
V_{4}\\
V_{5}\\
V_{6}
\end{array}
\right)+\ov{R}^{T}J\ov{R}\left(
\begin{array}{c}
V_{10}\\
V_{11}\\
V_{12}
\end{array}
\right) 
\end{array}
\end{equation*}
and the conservation of kinetic energy
\begin{equation*}
|M\sigma_{\beta}[V]|^{2}=|MV|^{2},
\end{equation*}
where $M\in\mathbb{R}^{12\times 12}$ is the block mass-inertia matrix
\begin{equation*}
M:=\left(
\begin{array}{cccc}
\sqrt{m}I & 0 & 0 & 0 \\
0 & \sqrt{m}I & 0 & 0 \\
0 & 0 & \sqrt{J} & 0 \\
0 & 0 & 0 & \sqrt{J}
\end{array}
\right).
\end{equation*}
\end{op}
It has already been shown essentially in \textsc{Saint-Raymond and Wilkinson} \cite{2015arXiv150707601S} that the quasi-reflection matrices $\sigma_{\beta}\in\mathrm{O}(12)$ given by
\begin{equation}\label{alloa}
\sigma_{\beta}:=M^{-1}\left(I-2\widehat{\nu}_{\beta}\otimes \widehat{\nu}_{\beta}\right)M \quad \text{with}\hspace{2mm}\widehat{\nu}_{\beta}:=\frac{M^{-1}\nu_{\beta}}{|M^{-1}\nu_{\beta}|},
\end{equation}
are indeed physical $C^{1}$ scattering maps for any choice of $\beta$. However, according to the author's knowledge, it is not known if \eqref{alloa} is in any sense the unique solution of problem \ref{op1}. In any case, with at least one family of physical scattering matrices $\{\sigma_{\beta}\}_{\beta\in\mathrm{SO}(3)\times\mathrm{SO}(3)\times\mathbb{S}^{2}}$ in hand, one may make use of the results of \textsc{Ballard} \cite{MR1785473} to establish the following result on the global existence and regularity of weak solutions to system \eqref{nonspher}, {\em under the assumption that $\partial\mathsf{P}_{\ast}$ is real analytic}.
\begin{thm}[\textsc{Ballard} \cite{MR1785473}]
Suppose $\mathsf{P}_{\ast}$ is a compact, strictly-convex, non-spherical subset of $\mathbb{R}^{3}$ whose boundary surface is real analytic. Let $\{\sigma_{\beta}\}_{\beta\in \mathrm{SO}(3)\times\mathrm{SO}(3)\times\mathbb{S}^{2}}$ be a family of physical scattering maps. For any $Z_{0}\in\mathcal{D}_{2}(\mathsf{P}_{\ast})$, there exists a unique global-in-time weak solution to system \eqref{nonspher}.
\end{thm}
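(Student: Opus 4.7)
The plan is to construct the global-in-time weak solution by iterating a two-step procedure: free motion between collisions, and application of the scattering map at collision instants. The proof follows the general pattern of the trajectory surgery method of Section \ref{mts}, but the hard work lies in showing that the iterative construction does not break down due to an accumulation of collision times in finite time.

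First, starting from any $Z_{0}\in\mathcal{D}_{2}(\mathsf{P}_{\ast})$, the Cauchy--Lipschitz theorem applied to the right-limit ODEs in \eqref{nonspher} produces a unique $C^{1}$ free trajectory $t\mapsto Z_{1}(t)$ with centres of mass evolving rectilinearly and orientation matrices evolving as $R(t)=\exp(t\Omega_{0})R_{0}$, all real analytic in $t$. One defines the first collision time $\tau_{1}:=\inf\{t>0\,:\,F(x_{1}(t),\ov{x}_{1}(t),R_{1}(t),\ov{R}_{1}(t))=0\}$ where $F$ is the gap function introduced in the definition of scattering map. If $\tau_{1}=+\infty$, the free trajectory is the global weak solution. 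Otherwise, at time $\tau_{1}$ one reads off the collision configuration $\beta_{1}:=(R_{1}(\tau_{1}),\ov{R}_{1}(\tau_{1}),n_{1})$; by construction the pre-collision velocity vector lies in $\Sigma_{\beta_{1}}^{-}$, and the scattering map $\sigma_{\beta_{1}}$ delivers a post-collision velocity in $\Sigma_{\beta_{1}}^{+}$, which guarantees that the gap function is non-decreasing immediately after $\tau_{1}$. Iterating this procedure produces an increasing sequence of collision times $\{\tau_{k}\}_{k\geq 1}$.

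The decisive step is to prove that the sequence $\{\tau_{k}\}$ does not accumulate at any finite $\tau_{\infty}<\infty$. This is where the real analyticity of $\partial\mathsf{P}_{\ast}$ is essential: it forces the distance-of-closest-approach map $\beta\mapsto d_{\beta}$ to be real analytic on the open set of non-grazing configurations, and hence on each free-motion interval $(\tau_{k},\tau_{k+1})$ the gap function $t\mapsto g(t):=|x(t)-\ov{x}(t)|^{2}-d_{\beta(t)}^{2}$ is a real analytic function of $t$. By the conservation laws \eqref{nlm} and \eqref{nke}, linear and angular velocities are uniformly bounded along the whole trajectory, so an accumulation interval of collisions would, after normalising by the diminishing free-flight lengths, produce a limiting real analytic gap function which is zero on a set having an accumulation point; by the identity principle, this gap would be identically zero over a full free-motion interval, contradicting the fact that the scattering map drives the configuration into $\Sigma_{\beta}^{+}$ with strictly positive normal post-impact separation rate. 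This is the technical heart of Ballard's argument, and the analyticity hypothesis cannot be weakened to $C^{\infty}$ in general.

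For uniqueness, observe that on every free-motion interval the ODE system \eqref{nonspher} admits a unique $C^{1}$ solution by Cauchy--Lipschitz; the next collision time is uniquely determined as the first positive zero of the gap function along this solution; and the scattering map $\sigma_{\beta}$ is single-valued by hypothesis. The piecewise-constructed trajectory thus has continuous positions and rotations, $\mathrm{BV}_{\mathrm{loc}}$ linear and angular velocities, and respects \eqref{nlm}, \eqref{nam} and \eqref{nke} pointwise (since each $\sigma_{\beta}$ preserves them algebraically and the free dynamics conserves them differentially); integration by parts against test functions produces the required weak formulation, so the constructed object is indeed the unique global-in-time weak solution. The principal obstacle throughout is the no-accumulation step, which is the one place where the real analytic hypothesis genuinely cannot be dispensed with.
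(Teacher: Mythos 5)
You should first note that the paper does not prove this statement at all: it is imported verbatim from Ballard's work \cite{MR1785473} as a black box, so there is no internal proof to compare against, and your attempt has to be judged against Ballard's actual argument. Your outline reproduces its broad architecture (free flight, impact law, continuation), but the step you yourself call decisive is not correct as written. The gap function $t\mapsto g(t)$ is only \emph{piecewise} analytic: each free-flight arc is analytic, but the velocity jumps at every impact, so the zeros $\tau_{k}$ accumulating at $\tau_{\infty}$ lie on pairwise distinct analytic branches and the identity principle cannot be applied to any single analytic function vanishing on a set with an accumulation point. There is no ``limiting real analytic gap function'' to speak of. Moreover, the claim you are trying to prove is stronger than what Ballard establishes or needs: in his framework impact times may perfectly well accumulate from the left, and this is harmless precisely because the solution class (velocities of locally bounded variation / measure accelerations) tolerates infinitely many jumps of summable size. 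What analyticity actually buys is a \emph{right}-local statement: to the right of any instant the analytic free (or grazing) continuation either leaves the constraint immediately or the gap vanishes identically to finite order, giving an impact-free right-neighbourhood, local existence, and---crucially---local uniqueness; global existence then follows by continuation together with the energy bound \eqref{nke}. Your induction also silently assumes that after applying $\sigma_{\beta_{k}}$ the bodies separate; at a grazing configuration ($\nu_{\beta}\cdot V=0$) the scattering map fixes the velocity and your ``next collision time'' is the same instant, so the iteration stalls. Handling exactly this case is where the analytic dichotomy is used, and it is absent from your argument.

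The uniqueness paragraph has a separate gap. Cauchy--Lipschitz on free arcs plus single-valuedness of $\sigma_{\beta}$ only shows that the trajectory produced by \emph{your algorithm} is unique; it does not show that an arbitrary weak solution in the sense defined in the paper (positions continuous, velocities in $\mathrm{BV}_{\mathrm{loc}}$, conservation laws almost everywhere, non-penetration) must arise from that algorithm. A priori such a solution could, for example, remain in contact on a time interval or exhibit a different jump structure, and ruling this out is the genuinely delicate part of Ballard's uniqueness proof --- this is exactly the point of the $C^{\infty}$ counterexamples of Bressan and Schatzman, where the impact law is equally single-valued and uniqueness nonetheless fails. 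So analyticity is needed twice, for the right-local structure \emph{and} for uniqueness within the weak class, and neither use is supplied by the argument you give.
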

It seems somewhat unreasonable to stipulate that the regularity of the boundary $\partial\mathsf{P}_{\ast}$ be real analytic. We therefore draw attention to the following:
\begin{op}
Establish the global existence and regularity of solutions to \eqref{nonspher} when $\partial\mathsf{P}_{\ast}$ is only of class $C^{1}$.
\end{op}
If the dynamics of lower regularity particles does not exhibit any pathological behaviour (such as infinitely-many collisions in a compact time interval), then the topological methods outlined in this article provide, in principle, a method with which one could establish existence of classical solutions to system \eqref{nonspher}. We hope to address this problem in future work.

\bibliography{refo}

\end{document}